\theoremstyle{plain}
\newtheorem{theorem}{Theorem}
\newtheorem{lemma}[theorem]{Lemma}
\newtheorem{proposition}[theorem]{Proposition}
\theoremstyle{definition}
\theoremstyle{remark}
\newtheorem{remark}{Remark}
\DeclareMathOperator{\Tr}{Tr}
\def\leqslant{\le}
\def\bq{\begin{eqnarray}}
\def\eq{\end{eqnarray}}
\def\bqq{\begin{align*}}
\def\eqq{\end{align*}}
\def\be{\begin{equation}}
\def\ee{\end{equation}}
\def\nn{\nonumber}
\def\wto{\rightharpoonup}
\newcommand{\norm}[1]{\left\lVert #1 \right\rVert}
\renewcommand{\epsilon}{\varepsilon}
\newcommand\1{{\ensuremath {\mathds 1} }}
\def\cF {\mathcal{F}}
\def\cN{\mathcal{N}}
\def\cL {\mathcal{L}}
\def\R {\mathbb{R}}
\def\C {\mathbb{C}}
\def\N {\mathcal{N}}
\def\cE {\mathcal{E}}
\def\H{\gH}
\def\R {\mathbb{R}}
\def\C {\mathbb{C}}
\def\N {\mathcal{N}}
\newcommand{\gH}{\mathfrak{H}}
\newcommand{\bH}{\mathbb{H}}
\newcommand{\dGamma}{{\ensuremath{\rm d}\Gamma}}
\title[Bogoliubov dynamics]{Bogoliubov correction to the mean-field dynamics of interacting bosons}
\author[P.T. Nam]{Phan Th\`anh Nam}
\address{Institute of Science and Technology Austria, Am Campus 1, 3400 Klosterneuburg, Austria} 
\email{pnam@ist.ac.at}
\author[M. Napi\'orkowski]{Marcin Napi\'orkowski}
\address{Institute of Science and Technology Austria, Am Campus 1, 3400 Klosterneuburg, Austria} 
\email{mnapiork@ist.ac.at}
\begin{document}
\date{\today}

\begin{abstract} We consider the dynamics of a large quantum system of $N$ identical bosons in 3D interacting via a two-body potential of the form $N^{3\beta-1} w(N^\beta(x-y))$. For fixed $0\leq \beta <1/3$ and large $N$, we obtain a norm approximation to the many-body evolution in the $N$-particle Hilbert space. The leading order behaviour of the dynamics is determined by Hartree theory while the second order is given by Bogoliubov theory. 
\end{abstract}

\maketitle

\setcounter{tocdepth}{1}
\tableofcontents

\section{Introduction}

We consider a large system of $N$ identical bosons living in $\R^3$ described by the Hamiltonian 
\begin{equation} 
H_N= \sum\limits_{j = 1}^N -\Delta_{x_j} + \frac{1}{N-1} \sum\limits_{1 \leqslant j < k \leqslant N} {w_N(x_j-x_k)},
\label{eq:def-HN}
\end{equation}
which acts on the symmetric space $\H^N=\bigotimes_{\text{sym}}^N L^2(\R^3)$. Here $x_j\in \R^3$ stands for the coordinate of the $j$-th particle. We assume that the interaction potential has the explicit form
\begin{equation} \label{eq:ass-wN}
w_N(x-y)= N^{3\beta} w(N^\beta (x-y))
\end{equation}
where $\beta \ge 0$ is a fixed parameter and $w$ a given nice function. For simplicity, we will assume that $w\in C^1_0(\R^3)$ is non-negative, spherically symmetric and decreasing.

Since the Hamiltonian $H_N$ is bounded from below, and hence it can be defined as a self-adjoint operator on $\gH^N$ by Friedrichs' method ~\cite{ReeSim2}. The coupling constant $1/(N-1)$ in front of the interaction is to ensure that the kinetic energy and interaction energy are of the same order. We could choose $1/N$ instead of $1/(N-1)$, but the latter will simplify some expressions. 
\smallskip

Note that when $\beta>0$, then $w_N$ converges to the Dirac-delta interaction. In general, the larger the parameter $\beta$, the harder the analysis. In this paper we will focus on the mean-field regime $0\le \beta<1/3$. In this case, the range of the interaction potential is much larger than the average distance between the particles and there are many but weak collisions. Therefore, to the leading order, the interaction potential experienced by each particle can be approximated by the mean-field potential $\rho*w_N$ where $\rho$ is the density of the system. If $\beta > 1/3$, then the analysis is expected to be more complicated due to strong correlations between particles.
\smallskip

In the present paper, we are interested in the large $N$ asymptotic behavior of the Schr\"odinger evolution 
\begin{equation} \label{eq:schrodingerdynamics}
\Psi_N(t) = e^{-itH_N}\Psi_{N}(0)
\end{equation}
generated by a special class of initial states $\Psi_{N}(0)\in \gH^N$. We are motivated by the physical picture that $\Psi_N(0)$ is a ground state (or an approximated ground state) of a trapped system described by the Hamiltonian
\begin{equation} \label{eq:HN-V} 
H_N^V= \sum\limits_{j = 1}^N (-\Delta_{x_j}+V(x_j)) + \frac{1}{N-1} \sum\limits_{1 \leqslant j < k \leqslant N} {w_N(x_j-x_k)}
\end{equation}
with an external potential $V\in L^\infty_{\rm loc}(\R^3,\R)$ satisfying $V(x)\to \infty$ as $|x|\to \infty$, and the time evolution $\Psi_N(t)$ in \eqref{eq:schrodingerdynamics} is observed when the trapping potential $V(x)$ is turned off. This leads to certain properties of $\Psi_N(0)$ which will be described below.

\subsection{Ground state properties}  
\subsubsection*{Bose-Einstein condensation} It is widely expected that ground states of trapped systems exhibit the (complete) Bose-Einstein condensation, namely $\Psi \approx u^{\otimes N}$ in an appropriate sense. In fact, when $0\le \beta<1$ we have 
\begin{equation}\label{eq:BEC-energy}
\lim_{N\to \infty}  \left(\inf_{\|\Psi \|_{\gH^N}=1} \frac{\langle \Psi, H_N^V \Psi \rangle}{N}  - \inf_{\|u\|_{\gH}=1}  \cE^{V}_{\rm{H},N}(u)\right)=0
\end{equation}
where
\begin{equation} \label{eq:def-EH}
\cE^{V}_{\rm{H},N}(u):=\frac{1}{N} \langle u^{\otimes N}, H_N^V u^{\otimes N}\rangle  = \int_{\R^3} |\nabla u|^2 +V|u|^2 + \frac{1}{2} |u|^2 (w_N*|u^2|).
\end{equation}
Moreover, if the Hartree energy functional $\cE^{V}_{\rm{H},N}(u)$ has a unique minimizer $u_{\rm H}$, then  the ground state $\Psi_N^V$ of $H_N^V$ condensates on $u_{\rm H}$ in the sense that 
\begin{equation} \label{eq:BEC-pdm}
\lim_{N\to \infty} \frac{1}{N}\left\langle u_{\rm H}, \gamma_{\Psi_N^V} u_{\rm H} \right\rangle =1,
\end{equation}
where $\gamma_{\Psi}:\gH\to \gH$ is the one-body density matrix of $\Psi\in \gH^N$ with kernel
\be \label{eq:def-1pdm-HN}
\gamma_{\Psi}(x,y) = N \int \Psi (x,x_2, \dots , x_N) \overline{\Psi(y,x_2,\dots,x_N)}\, \mathrm{d} x_{2}\cdots \mathrm{d} x_{N}.
\ee

The rigorous justifications for \eqref{eq:BEC-energy} and \eqref{eq:BEC-pdm} in various specific cases has been given in \cite{LieLin-63,FanSpoVer-80,BenLie-83,LieYau-87,PetRagVer-89,RagWer-89,Seiringer-11,SeiYngZag-12}. Recently, in a series of works \cite{LewNamRou-14,LewNamRou-14c,LewNamRou-14d}, Lewin, Rougerie and the first author of the present paper provided proofs in a very general setting. Note that when $\beta=1$ (the Gross-Pitaevskii regime), the Hartree functional has to be modified to capture the strong correlation between particles. This has been first done by Lieb, Seiringer and Yngvason in \cite{LieSeiYng-00,LieSei-02,LieSei-06} (see also \cite{NamRouSei-15}). 

\subsubsection*{Fluctuations around the condensation} The next order correction to the lower eigenvalues and eigenfunctions of $H_N^V$ is predicted by Bogoliubov's approximation \cite{Bogoliubov-47b}. This has been first derived rigorously by Seiringer in \cite{Seiringer-11}, and then extended in various directions in \cite{GreSei-13,LewNamSerSol-15,DerNap-13,NamSei-15}.

Bogoliubov's theory is formulated in the Fock space
$$ \cF(\gH)= \bigoplus_{n=0}^\infty \gH^n= \C \oplus \gH \oplus \gH^2 \oplus \cdots,$$
where the excited particles are effectively described by a quadratic Hamiltonian $\bH^V$ acting on the subspace $\cF(\{u_{\rm H}\}^\bot)$. In fact, $\bH^V$ is the second quantization of (half) the Hessian of the Hartree functional $\cE_{\rm H}^V(u)$ at its minimizer $u_{\rm H}$.

It was proved in \cite[Theorem 2.2]{LewNamSerSol-15} by Lewin, Serfaty, Solovej and the first author of the present paper that if $\beta=0$ and if the Hartree minimizer $u_{\rm H}$ is non-degenerate (in the sense that the Hessian of $\cE_{\rm H}^V(u)$ at $u_{\rm H}$ is bigger than a positive constant), then the ground state $\Psi_N^V$ of $H_N^V$ admits the norm approximation
\begin{align} 
\label{eq:GS-norm}
\lim_{N\to \infty} \left\| \Psi_N^V - \sum_{n=0}^N u_{\rm H}^{\otimes (N-n)} \otimes_s \psi_n   \right\|_{\gH^N} =0
\end{align}
where $\Phi^V = (\psi_n)_{n=0}^\infty \in\cF(\{u_{\rm }\}^\bot)$ is the (unique) ground state of $\bH^V$. The same is expected to be true for $\beta>0$ small. 

Note that the norm convergence \eqref{eq:GS-norm} is much more precise than the convergence of density matrices in \eqref{eq:BEC-pdm}. In fact, if $w\not\equiv 0$, then $\Phi^V$ is not the vacuum $\Omega:=1\oplus 0 \oplus 0 \cdots$, and hence $\Psi_N^V$ is {\em never} close to $u_{\rm H}^{\otimes N}$ in norm.  

\subsubsection*{Quasi-free states} The ground state $\Phi^V$ of the quadratic Hamiltonian $\bH^V$ is a quasi-free state (see \cite[Theorem A.1]{LewNamSerSol-15}). Recall that a state $\Psi$ in Fock space $\cF(\gH)$ is called a quasi-free state if it has finite particle number expectation and satisfies Wick's Theorem: 
\begin{align}
&\quad\ \langle \Psi, a^{\#}(f_{1}) a^{\#}(f_{2}) \cdots a^{\#}(f_{2n-1}) \Psi \rangle = 0,  \label{eq:Wick-1}\\[1ex]
\label{eq:Wick-2}
&\quad\ \langle \Psi, a^{\#}(f_{1}) a^{\#}(f_{2}) \cdots a^{\#}(f_{2n})  \Psi \rangle\\
\notag &= \sum_{\sigma\in P_{2n}} \prod_{j=1}^n \langle \Psi, a^{\#}(f_{\sigma(2j-1)}) a^{\#}(f_{\sigma(2j)}) \Psi \rangle 
\end{align}
for all $n$ and for all $f_1,\dots,f_n \in \gH$, where $a^{\#}$ is either the creation or annihilation operator (see  Section \ref{sec:main-result}) and $P_{2n}$ is the set of pairings,
$$
P_{2n} = \{\sigma \in S(2n)~ |~ \sigma(2j-1)<\min\{\sigma(2j),\sigma(2j+1)\}~\text{for~all}~ j\}.
$$

It is clear that if $\Psi$ is a quasi-free, then the projection $|\Psi\rangle \langle \Psi|$ is determined completely by its density matrices. Recall that for every state $\Psi$ in $\cF(\gH)$, we define the density matrices $\gamma_\Psi: \gH\to \gH$ and $\alpha_\Psi:\overline{\gH} \to {\gH}$ by
\be \label{eq:def-gamma-alpha}
\left\langle {f,{\gamma _\Psi }g} \right\rangle  = \left\langle \Psi, {{a^*}(g)a(f)} \Psi \right\rangle,\quad \left\langle {{f}, \alpha _\Psi \overline{g} } \right\rangle  = \left\langle \Psi, {a(g)a(f)} \Psi\right\rangle
\ee
for all $f,g\in \gH$. If $\Psi \in \gH^N$, then $\gamma_\Psi$ coincides with the density matrix defined in \eqref{eq:def-1pdm-HN}. In general, $\Tr(\gamma_\Psi)$
is the particle number expectation of $\Psi$. 

\subsection{Time evolution} 

Let us recall the physical picture we have in mind. Let $\Psi_N(0)$ be a ground state of $H_N^V$ (cf. \eqref{eq:HN-V}). When the external potential $V$ is turned off, $\Psi_N(0)$ is no longer a ground state of the Hamiltonian $H_N$ (cf. \eqref{eq:def-HN}) and the time evolution $\Psi_N(t)=e^{-itH_N} \Psi_N(0)$ is observed. The analysis of the behavior of $\Psi_N(t)$ when $N\to \infty$ is the main goal of our paper.

\subsubsection*{Leading order} It is a fundamental fact that the Bose-Einstein condensation is stable under the Schr\"odinger flow in the mean-field limit. To be precise, if $\Psi_N(0)$ condensates on a (one-body) state $u(0)$, in the sense of~\eqref{eq:BEC-pdm}, then the time evolution $\Psi_N(t)=e^{-itH_N}\Psi_N(0)$ condensates on the state $u(t)$ determined by the Hartree equation
\begin{equation} \label{eq:Hartree-equation}
i\partial_t u(t) =  \big(-\Delta +w_N*|u(t)|^2 -\mu_N(t)\big) u(t) 
\end{equation}
and the initial datum $u(0)$. 

Here $\mu_N(t)\in \R$ is a phase parameter which is free to choose. For the leading order, this phase plays no role as it does not alter the projection $|u(t)\rangle \langle u(t)|$. However, to simplify the second order expression discussed below, we will choose  
\be \label{eq:def-mu}
\mu_N(t):=\frac12\iint_{\R^3\times\R^3}|u(t,x)|^2w_N(x-y)|u(t,y)|^2\,\mathrm{d} x\,\mathrm{d} y
\ee
which ensures the compatibility of the energies:
\begin{align*}
\left\langle u(t)^{\otimes N}, H_N \big(u(t)^{\otimes N} \big) \right \rangle &\approx
\left\langle \Psi_{N}(t), H_N \Psi_{N}(t) \right\rangle \\
&= \left\langle \Psi_{N}(t), i \partial_t  \Psi_{N}(t)\right\rangle \approx \left\langle u(t)^{\otimes N}, i \partial_t  \big( u(t)^{\otimes N} \big) \right\rangle.
\end{align*}

Note that when $\beta>0$, $w_N \wto a_0 \delta$ weakly with $a_0 =\int w$ and the solution to the Hartree equation \eqref{eq:Hartree-equation} converges to that of the cubic nonlinear Schr\"odinger equation (NLS)
\begin{equation} \label{eq:NLS}
i\partial_t v(t) =  \big(-\Delta + a_0 |v(t)|^2 - \mu(t)\big) v(t) , \quad \mu(t)=\frac{a_0}{2}\int |v(t,x)|^4 \mathrm{d} x,
\end{equation}
with the same initial datum. 

The rigorous derivation for the dynamics of the Bose-Einstein condensation has been the subject of a vast literature. For $\beta=0$, the problem was studied by Hepp \cite{Hepp-74}, Ginibre and Velo \cite{GinVel-79,GinVel-79b} and  Spohn \cite{Spohn-80}; see \cite{BarGolMau-00,ErdYau-01,AdaGolTet-07,AmmNie-08,FroKnoSch-09,RodSch-09,KnoPic-10,Pickl-11,CheLeeSch-11}  for further results. For $0<\beta\le 1$, the problem was solved by Erd\"os, Schlein and Yau \cite{ErdSchYau-07,ErdSchYau-09,ErdSchYau-10} (see also \cite{KlaMac-08,BenOliSch-15,Pickl-15,CheHaiPavSei-15} for later developments for $\beta=1$). Note that when $\beta=1$, the strong correlation between particles yield a leading order correction to the effective dynamics and the NLS equation \eqref{eq:NLS} has to be replaced by the Gross-Pitaevskii equation, i.e. $a_0$ has to be replaced by the scattering length of $w$. 

\subsubsection*{Second order correction} In this paper, we are interested in the norm approximation for  the time evolution $\Psi_N(t)=e^{-itH_N}\Psi_N(0)$.  

For the norm approximation, a natural approach is to study the time evolution initiated by a \emph{coherent state} in Fock space $\cF(\gH)$. Recall that a coherent state is obtained by applying Weyl's unitary operator $W(f)=\exp\big(a^*(f)-a(f)\big)$ to the vacuum:
$$W(f) \Omega = e^{-\norm{f}^2/2}\sum_{n\geq0}\frac{1}{\sqrt{n!}}f^{\otimes n}.$$
The $N$-body Hamiltonian $H_N$ can be extended to the Fock space as 
\begin{equation} \label{eq:HN-Fock}
H_N = \int_{\R^3} a^*_x (-\Delta) a_x  \, \mathrm{d} x +\frac{1}{2(N-1)} \iint_{\R^3\times\R^3} w(x-y)a^*_x a^*_y a_x a_y \, \mathrm{d} x\mathrm{d} y 
\end{equation}
where $a_x^*$ and $a_x$ are the  operator-valued distributions (see Section \ref{sec:main-result}). In the mean-field regime, the time evolution of a coherent state satisfies the norm approximation 
\be \label{eq:norm-app-coherent}
\lim_{N\to \infty}\left\| \exp(-itH_N)W\left(\sqrt{N}u(0)\right)\Omega - W\left(\sqrt{N}u(t)\right) \Xi(t) \right\|_{\cF}=0
\ee
where $u(t)$ is the Hartree evolution \eqref{eq:Hartree-equation} and $\Xi(t)$ is governed by a quadratic  Hamiltonian on $\cF(\gH)$. The scale factor $\sqrt{N}$ appears naturally as the particle number expectation of the coherent state $W(f)\Omega$ is $\|f\|^2$. 

The convergence \eqref{eq:norm-app-coherent} has been justified rigorously by Hepp \cite{Hepp-74} and Ginibre and Velo \cite{GinVel-79,GinVel-79b} for $\beta=0$, and then by Grillakis and Machedon \cite{GriMac-13} for $0\le \beta<1/3$, based on their previous works with Margetis \cite{GriMacMar-10,GriMacMar-11}. 
\smallskip

Note that by projecting the Fock-space approximation \eqref{eq:norm-app-coherent} onto the $N$-particle sector $\gH^N$, it is possible to derive a norm approximation for the time evolution initiated by a Hartree state $u(0)^{\otimes N}$ (see \cite[Sec. 3]{LewNamSch-14}). This technique was first introduced by Rodnianski and Schlein in \cite{RodSch-09} to obtain the error estimate for the Hartree dynamics (see also \cite{CheLeeSch-11,BenOliSch-15}). The coherent state approach, however, has two obvious drawbacks.    

\begin{itemize}
\item [$\bullet$] First, projecting from Fock space to $\gH^N$ makes certain estimates {\em weaker}. For example, it was shown in \cite{GriMac-13} that the coherent state approximation \eqref{eq:norm-app-coherent} is valid for all $0\le \beta<1/3$, but this only gives a meaningful approximation on $\gH^N$ for $0\le \beta<1/6$. 

\item Second, and more seriously, the initial state $u(0)^{\otimes N}$ is not really the physically relevant one. Recall that the ground state of $H_N^V$ admits the approximation \eqref{eq:GS-norm} and it is never close to a Hartree state $u(0)^{\otimes N}$ in norm (except when $w\equiv 0$).
\end{itemize}

Recently, a direct approach for $N$-particle initial states has been proposed in \cite{LewNamSch-14} by Lewin, Schlein and the first author of the present paper, based on ideas introduced in \cite{LewNamSerSol-15}. They considered the $N$-particle initial states of the form
\be \label{eq:PsiN0-intro}
\Psi_N(0) = \sum_{n=0}^N u(0)^{\otimes (N-n)} \otimes_s \psi_n(0)
\ee
where $(\psi_n(0))_{n=0}^\infty \in \cF(\{u(0)\}^\bot)$. This form is motivated by the ground state property \eqref{eq:GS-norm} of trapped systems. It was proved in \cite{LewNamSch-14} that when $\beta=0$, the time evolution $\Psi_N(t)=e^{-itH_N}\Psi_N(0)$ satisfies the norm approximation
\be \label{eq:PsiN-intro}
\lim_{N\to \infty} \left\| \Psi_N(t) - \sum_{n=0}^N u(t)^{\otimes (N-n)} \otimes_s \psi_n(t) \right\|_{\gH^N} =0
\ee
where $u(t)$ is the Hartree evolution \eqref{eq:Hartree-equation} and $\Phi(t)=(\psi(t))_{n=0}^\infty \in \cF(\{u(t)\}^\bot)$ is generated by a quadratic Bogoliubov Hamiltonian (which is different from the effective Hamiltonian governing the evolution of $\Xi(t)$ in \eqref{eq:norm-app-coherent}). 
\medskip

In the present paper, we prove that the norm convergence \eqref{eq:PsiN-intro} holds true for all $0\le \beta < 1/3$. This result can not be obtained by a straightforward modification of the proof in \cite{LewNamSch-14}. The main  new ingredient is a different way to control the number of the particles outside of the condensate when $\beta>0$. More precisely, we will show that the one-particle density matrices of the Bogoliubov dynamics $\Phi(t)$ satisfy a pair of Schr\"odinger-type linear equations, which then allow us to obtain the desired bound on the number of particles in the state $\Phi(t)$ by PDE techniques. Our approach is inspired by \cite{GriMac-13} where similar equations were used. However, our derivation of the equations is different and much simpler than that of \cite{GriMac-13}.

The condition $0\le \beta<1/3$ is typical for the mean-field regime.  If $\beta > 1/3$, then the analysis becomes more complicated due to the strong correlation between particles. Very recently, independently to our work, new results for the Fock space norm approximation with $\beta>1/3$ have been obtained by Boccato, Cenatiempo and Schlein \cite{BocCenSch-15}, Grillakis and Machedon \cite{GriMac-15} and Kuz \cite{Kuz-15b}. In \cite{Kuz-15b}, the author extends the result in \cite{GriMac-13} for $\beta<1/2$. In \cite{GriMac-15}, the authors prove a result similar to \eqref{eq:norm-app-coherent} for $\beta < 2/3$, but now the mean-field dynamics $u(t)$ and the quadratic generator have to be modified. In \cite{BocCenSch-15}, the authors consider initial data of the form $W(\sqrt{N}u(0))\Phi(0)$ with a special quasi-free state $\Phi(0)$ and their result holds for $\beta <1$. 

Note that in our paper, we do not put any special assumptions on the initial states except the known properties of ground states of trapped systems. Therefore, the requirement $0<\beta<1/3$ seems to be reasonable for this large class of initial data. We expect that by using ideas from \cite{BocCenSch-15,GriMac-15}, we can improve our result for larger $\beta$ (and for more specified initial data), although the analysis in $N$-particle Hilbert space should be more complicated than that in Fock space. We hope to come back to this issue in the future. 

In the most interesting case, $\beta=1$, the norm approximation to the quantum dynamics is an open problem.

Finally, let us remark that our method is quite general and it can be applied to many different situations. For example, our result can be extended to $d=1$ or $d=2$ dimensions with attractive interaction potential (i.e. $w<0$), with or without external potential, provided that $\beta<1/d$ (cf.  Remarks~\ref{rem:dimensions} and~\ref{rem:focusing}).

The precise statement of our main result will be given in the next section.

\section{Main result} \label{sec:main-result}

In this section we present our overall strategy and state our main theorem.

\subsection{Fock space formalism} Let us quickly recall the Fock space formalism which is used throughout the paper. On the Fock space
$$ \cF(\gH)= \bigoplus_{n=0}^\infty \gH^n= \C \oplus \gH \oplus \gH^2 \oplus \cdots$$ 
we can define the creation operator $a^*(f)$ and the annihilation operator $a(f)$ for every $f\in \gH$ by
\begin{align*}
(a^* (f) \Psi )(x_1,\dots,x_{n+1})&= \frac{1}{\sqrt{n+1}} \sum_{j=1}^{n+1} f(x_j)\Psi(x_1,\dots,x_{j-1},x_{j+1},\dots, x_{n+1}) \\
 (a(f) \Psi )(x_1,\dots,x_{n-1}) &= \sqrt{n} \int \overline{f(x_n)}\Psi(x_1,\dots,x_n) \mathrm{d} x_n 
\end{align*}
for all $\Psi\in \gH^n$ and for all $n$. These operators satisfy the canonical commutation relations (CCR)
\be \label{eq:ccr}
[a(f),a(g)]=[a^*(f),a^*(g)]=0,\quad [a(f), a^* (g)]= \langle f, g \rangle
\ee
for all $f,g\in \gH$. The creation and annihilation operators are widely used to represent many other operators on Fock space. The following result is well-known; see e.g. \cite{Berezin-66} or \cite[Lemmas 7.8 and 7.12]{Solovej-ESI-2014}.

\begin{lemma}[Second quantization] \label{lem:second-quantization} Let $H$ be a symmetric operator on $\gH$ and let $\{f_n\}_{n\ge 1}\subset D(h)$ be an orthonormal basis for $\gH$. Then 
\be  \label{eq:second-quantization-H}
\dGamma (H) := 0\oplus  \bigoplus_{N=1}^\infty \sum_{j=1}^N H_j = \sum_{m,n\ge 1} \langle f_m,H f_n\rangle a^*(f_m)a(f_n).
\ee
Let $W$ be a symmetric operator on $\gH \otimes \gH$ and let $\{f_n\}_{n\ge 1}$ be an orthonormal basis for $\gH$ such that $ f_m \otimes f_n \in D(W)$ and 
$$\langle f_m\otimes f_n, W \,f_p\otimes f_q \rangle = \langle f_n\otimes f_m, W \,f_p\otimes f_q \rangle$$
for all $m,n,p,q \ge 1$. Then 
\begin{align} \label{eq:second-quantization-W}
&\quad\ 0\oplus 0 \oplus \bigoplus_{N=2}^{\infty} \sum_{1\le i<j \le N} W_{ij}\\
\nn &= \frac{1}{2} \sum_{m,n,p,q\ge 1} \langle f_m\otimes f_n, W \,f_p\otimes f_q \rangle_{\gH^2} \,\,a^*(f_m)a^*(f_n)a(f_p) a(f_q).
\end{align} 
\end{lemma}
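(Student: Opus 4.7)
The plan is to verify both identities sector by sector: since all operators involved preserve, shift, or split by a definite number of particles, it suffices to show that the two sides agree when applied to an arbitrary symmetrized tensor product $f_{i_1}\otimes_s\cdots\otimes_s f_{i_N}$ for each $N$, where $\{f_n\}$ is the chosen orthonormal basis. Closing the algebraic identity to the domain given by the second quantization then follows by standard density arguments.

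For the one-body identity \eqref{eq:second-quantization-H}, I would first expand $H$ weakly in the basis $\{f_n\}$ as $H=\sum_{m,n}\langle f_m,Hf_n\rangle\,|f_m\rangle\langle f_n|$, so that on the $N$-particle sector
\[
\sum_{j=1}^N H_j \;=\; \sum_{m,n\ge 1}\langle f_m,Hf_n\rangle\,\sum_{j=1}^N\bigl(|f_m\rangle\langle f_n|\bigr)_j .
\]
The key computation is then the combinatorial identity
\[
\sum_{j=1}^N\bigl(|f_m\rangle\langle f_n|\bigr)_j \;=\; a^*(f_m)\,a(f_n)\quad\text{on } \gH^N,
\]
which I would verify directly by applying both sides to $f_{i_1}\otimes_s\cdots\otimes_s f_{i_N}$ and using the explicit formulas for $a^*$ and $a$ given in Section \ref{sec:main-result}. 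Summing in $m,n$ yields the claim.

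For the two-body identity \eqref{eq:second-quantization-W}, I would expand $W=\sum_{m,n,p,q}\langle f_m\otimes f_n,Wf_p\otimes f_q\rangle\,|f_m\otimes f_n\rangle\langle f_p\otimes f_q|$ and rely on the analogous sector identity
\[
\sum_{1\le i<j\le N}\bigl(|f_m\otimes f_n\rangle\langle f_p\otimes f_q|\bigr)_{ij} \;=\; \frac{1}{2}\,a^*(f_m)a^*(f_n)a(f_p)a(f_q)\quad\text{on } \gH^N,
\]
valid after symmetrization in $(m,n)$, which is exactly what the stated hypothesis $\langle f_m\otimes f_n,Wf_p\otimes f_q\rangle=\langle f_n\otimes f_m,Wf_p\otimes f_q\rangle$ supplies. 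The factor $1/2$ arises from replacing the ordered sum $\sum_{i<j}$ by an unordered double sum, and the symmetrization of $|f_m\otimes f_n\rangle\langle f_p\otimes f_q|$ in $(m,n)$ matches the symmetrization in $(p,q)$ automatically because the argument states are symmetric; I would also use that the CCR relations \eqref{eq:ccr} ensure $a^*(f_m)a^*(f_n)=a^*(f_n)a^*(f_m)$ and $a(f_p)a(f_q)=a(f_q)a(f_p)$.

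The routine but slightly delicate step is the sector identity for the pair sum: tracking the combinatorial factors that appear when $a^*(f_m)a^*(f_n)a(f_p)a(f_q)$ acts on a symmetrized tensor (the number of ways of selecting the two slots to be destroyed, and the placements of the two created factors) and checking that they cancel against the $1/2$ and the ordering constraint $i<j$. I expect this bookkeeping to be the only non-trivial obstacle; once the two sector identities are established the lemma follows immediately by linearity.
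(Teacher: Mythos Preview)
The paper does not actually prove this lemma: it states the result as well-known and refers to Berezin's monograph and Solovej's lecture notes (Lemmas 7.8 and 7.12 there). Your proposal is precisely the standard sector-by-sector verification that those references carry out, and the outline is correct; in particular, your identification of the exchange-symmetry hypothesis as the mechanism that legitimizes replacing $\sum_{i<j}$ by $\tfrac12\sum_{i\ne j}$ is the right point to emphasize.
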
 

If one does not want to work on a specific orthonormal basis, it is possible to use the operator-valued distributions $a_x^*$ and $a_x$, with $x\in\R^3$, defined by
$$
a^*(f)=\int_{\R^3}  f(x) a_x^* \mathrm{d} x \qquad \text{and} \qquad a(f)=\int_{\R^3} \overline{f(x)} a_x \mathrm{d} x
$$
for all $f\in \gH$. The canonical commutation relations \eqref{eq:ccr} then imply that
\be \label{eq:ccr-x}
[a^*_x,a^*_y]=[a_x,a_y]=0 
\qquad \text{and}\qquad [a_x,a^*_y]=\delta(x-y).
\ee
The second quantization formulas \eqref{eq:second-quantization-H} and \eqref{eq:second-quantization-W} can be rewritten as
\begin{align}  \label{eq:second-quantization-Hxy}
&\quad\ \dGamma (H) = \iint H(x,y)a^*_x a_y \, \mathrm{d} x \mathrm{d} y, \\[1ex]
 \label{eq:second-quantization-Wxy}
&\quad\ 0\oplus 0 \oplus \bigoplus_{N=2}^{\infty} \sum_{1\le i<j \le N} W_{ij}\\
\nn &= \frac{1}{2} \iiiint W(x,y;x',y') a^*_x a^*_y a_{x'} a_{y'} \, \mathrm{d} x \mathrm{d} y \mathrm{d} x' \mathrm{d} y'
\end{align} 
where $H(x,y)$ and $W(x,y;x',y')$ are the kernels of $H$ and $W$, respectively.

For example, the particle number operator can be written as 
$$
\cN := \dGamma(1) = \bigoplus_{n=0}^n n \1_{\gH^n} = \int_{\R^3} a_x^* a_x \, \mathrm{d} x
$$
and the $N$-body Hamiltonian $H_N$ can be extended to an operator on Fock space $\cF(\gH)$ as 
\be
H_N = \dGamma(-\Delta)+\frac{1}{2(N-1)}\iint_{\R^3\times \R^3} w_N(x-y)a^*_x a^*_y a_x a_y\, \mathrm{d} x \mathrm{d} y. \label{def:2ndquantHamilt}
\ee

\subsection{Fluctuations around Hartree states} 

As discussed in the introduction, the starting point of our analysis is the Bose-Einstein condensation described by the Hartree equation. The following well-posedness of Hartree equation is taken from \cite[Proposition 3.3]{GriMac-13}.

\enlargethispage{1em}
\begin{lemma}[Hartree evolution] \label{lem:Hartree-evolution} For every initial datum $u(0,\cdot)\in H^{s}(\R^3)$, $s\ge 1$, the Hartree equation \eqref{eq:Hartree-equation} has a unique global solution $u(t,x)$ and 
$$ \|u(t,\cdot)\|_{H^s(\R^3)} \le C <\infty$$
for a constant $C$ depending only on $\|u(0,\cdot)\|_{H^s(\R^3)}$ (independent of $N$ and $\beta$). Moreover, if $u(0)\in W^{\ell,1}(\R^3)$ with $\ell$ sufficiently large, then 
$$ \|u(t)\|_{L^\infty(\R^3)}\le \frac{C_1}{(1+t)^{3/2}}$$
for a constant $C_1$ depending only on $\|u(0)\|_{W^{\ell,1}(\R^3)}$.
\end{lemma}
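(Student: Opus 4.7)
The plan is the standard three-step NLS strategy, chosen to keep all constants independent of $N$ and $\beta$. The key input is $\|w_N\|_{L^1}=\|w\|_{L^1}$, so by Young's inequality $\|w_N\ast g\|_{L^p}\le\|w\|_{L^1}\|g\|_{L^p}$ uniformly in $N$.

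First I would absorb the real phase $\mu_N(\tau)$ into the gauge $u(t)\mapsto e^{i\int_0^t\mu_N(\tau)\,d\tau}u(t)$, which leaves every $H^s$ and $L^p$ norm invariant, and work with the pure Hartree equation in Duhamel form
\begin{equation*}
u(t)=e^{it\Delta}u(0)-i\int_0^t e^{i(t-\tau)\Delta}\bigl(w_N\ast|u(\tau)|^2\bigr)u(\tau)\,d\tau.
\end{equation*}
Local well-posedness is obtained by Picard iteration on $C([0,T];H^s)$. The key nonlinear estimate is $\|(w_N\ast|u|^2)u\|_{H^s}\le C_s\|w\|_{L^1}\|u\|_{H^s}^3$, which follows for $s>3/2$ from the algebra property of $H^s$, and for $1\le s\le 3/2$ from a fractional Leibniz (Kato--Ponce) argument combined with Sobolev embedding (with a Strichartz enhancement at the endpoint $s=1$). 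This yields a local solution on a time $T=T(\|u(0)\|_{H^s},\|w\|_{L^1})$, independent of $N$ and $\beta$.

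Globalization follows from the two conservation laws of \eqref{eq:Hartree-equation}: the mass $\|u(t)\|_{L^2}=\|u(0)\|_{L^2}$ and the Hartree energy
\begin{equation*}
E(u):=\int_{\R^3}|\nabla u|^2+\tfrac12\iint_{\R^3\times\R^3}|u(x)|^2 w_N(x-y)|u(y)|^2\,dx\,dy=E(u(0)).
\end{equation*}
Since $w\ge 0$ and the interaction term is bounded by $\|w\|_{L^1}\|u\|_{L^3}^2\|u\|_{L^6}^2\le C\|w\|_{L^1}\|u\|_{H^1}^4$ via H\"older, Young and Sobolev, one obtains a uniform $H^1$ bound on $u(t)$. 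For $s>1$, applying $(1-\Delta)^{s/2}$ to the equation, invoking Kato--Ponce and the $H^1$ bound, gives $\tfrac{d}{dt}\|u(t)\|_{H^s}^2\le C(\|u(0)\|_{H^1})\|u(t)\|_{H^s}^2$, and Gronwall propagates the $H^s$ bound globally.

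For the $L^\infty$ decay I would bootstrap $M(t):=\sup_{0\le\tau\le t}(1+\tau)^{3/2}\|u(\tau)\|_{L^\infty}$ in Duhamel's formula, combining the free dispersive estimate $\|e^{it\Delta}f\|_{L^\infty}\le Ct^{-3/2}\|f\|_{L^1}$ with the short-time bound $\|e^{it\Delta}f\|_{L^\infty}\le C\|f\|_{W^{\ell,1}}$ coming from Sobolev embedding, and controlling the nonlinearity through $\|(w_N\ast|u|^2)u\|_{L^1}\le\|w\|_{L^1}\|u\|_{L^\infty}^2\|u\|_{L^1}$. This requires at most polynomial-in-$t$ growth of $\|u(t)\|_{W^{\ell,1}}$, which one produces via the vector-field method (commuting the Galilean generator $x+2it\nabla$ through the Duhamel integral and using the $H^s$ bounds from the previous step). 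Splitting the $\tau$-integral at $\tau=t/2$, the factor $(t-\tau)^{-3/2}$ then absorbs this polynomial growth and the bootstrap closes for $\ell$ sufficiently large. The main technical difficulty lies precisely here: $L^1$ is not conserved by the Schr\"odinger flow, so the rate $(1+t)^{-3/2}$ can only be produced by trading spatial decay for dispersion, which is what forces the $W^{\ell,1}$ hypothesis with $\ell$ large.
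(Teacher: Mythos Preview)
The paper does not supply its own proof of this lemma; it is quoted verbatim from Grillakis--Machedon \cite[Proposition~3.3]{GriMac-13}, with the decay statement tracing back to Bourgain~\cite{Bourgain-98}. So there is no ``paper's proof'' to compare against; your sketch is being measured against the standard literature argument.

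Your outline is the right one, and the emphasis on $\|w_N\|_{L^1}=\|w\|_{L^1}$ as the source of $N$-uniformity is exactly correct. The local theory and the uniform $H^1$ bound from mass/energy conservation (using $w\ge 0$) are fine.

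There is, however, a genuine gap in your $H^s$ step for $s>1$. Your Gr\"onwall argument produces
\[
\|u(t)\|_{H^s}\le e^{C(\|u(0)\|_{H^1})\,t}\|u(0)\|_{H^s},
\]
which is \emph{not} the time-uniform bound the lemma asserts. The paper genuinely needs the constant to be independent of $t$: it uses $\|u(t)\|_{L^\infty}\le C$ (via $H^2\hookrightarrow L^\infty$) inside time integrals of the form $\int_0^t\xi(s)\,ds$ in Proposition~\ref{lem:evo-quasi-free}, and an exponentially growing $\|u(t)\|_{H^2}$ would turn the final estimates into doubly exponential ones. Obtaining a time-uniform $H^s$ bound for $s>1$ is a nontrivial fact about the defocusing 3D Hartree/NLS flow and does not follow from soft Gr\"onwall; it requires scattering-type input (control of a spacetime norm such as $L^4_tL^4_x$ or a Morawetz-type bound), which is precisely why the paper points to~\cite{Bourgain-98} immediately after the lemma. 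You should either invoke that result directly or indicate how the scattering bound feeds back into Duhamel to freeze the $H^s$ norm.

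Your decay sketch via the Galilean vector field $J=x+2it\nabla$ and a bootstrap on $(1+t)^{3/2}\|u(t)\|_{L^\infty}$ is along the correct lines and matches the approach in the cited references; just note that closing it also relies on the uniform-in-time Sobolev control from the previous step.
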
  

A similar result for the cubic NLS has been proved in \cite{Bourgain-98}. 
In the following, we will always denote by $u(t)=u(t,.)$ the solution to the Hartree equation \eqref{eq:Hartree-equation} with an initial datum $u(0)\in H^{2}(\R^3)$. In particular, by Sobolev's embedding $H^2(\R^3)\subset C(\R^3)$ we have the uniform bound $ \|u(t)\|_{L^\infty(\R^3)} \le C$ for a constant $C$ depending only on $\|u(0)\|_{H^2(\R^3)}$.

\smallskip

To describe the particles outside of the condensate, we introduce $$Q(t):=1-|u(t) \rangle \langle u(t)|,\quad\gH_+(t):=Q(t)\gH = \{u(t)\}^\bot$$
and the excited Fock space $\cF_+(t) \subset \cF(\gH)$:
$$ \cF_+(t) := \cF(\gH_+(t)) = \bigoplus_{n=0}^\infty \gH_+(t)^n = \bigoplus_{n=0}^\infty \bigotimes^n_{\rm sym} \gH_+(t) .$$
The corresponding particle number operator is 
$$\cN_+(t):= \dGamma(Q)= \bigoplus_{n=0}^\infty n \1_{\gH_+^n(t)} = \cN-a^*(u(t))a(u(t)).$$
As in~\cite[Sec. 2.3]{LewNamSerSol-15}, we can decompose any function $\Psi\in \gH^N$ as 
\begin{equation*}
\Psi=\sum_{n=0}^N u(t)^{\otimes (N-n)} \otimes_s \psi_n  = \sum_{n=0}^N \frac{(a^*(u(t)))^{N-n}}{\sqrt{(N-n)!}} \psi_n
\label{eq:decomp-Psi-UN} 
\end{equation*}
with $\psi_n \in \gH_+(t)^{n}$, and this gives rises the {\em unitary} operator
\begin{equation}
 \label{eq:def-unitary-UNt}
\begin{aligned}
U_{N}(t):  \gH^N & \to  \displaystyle \cF_+^{\le N}(t):=\bigoplus_{n=0}^N \gH_+(t)^n \\[1ex]
  \Psi & \mapsto  \psi_0\oplus \psi_1 \oplus\cdots \oplus \psi_N.
\end{aligned}
\end{equation}



In our analysis, the unitary operator $U_N(t)$ plays the same role as Weyl's unitary operator $W(\sqrt{N}u(t))$, which has been used in \cite{Hepp-74,GinVel-79,GinVel-79b,GriMacMar-10,GriMacMar-11,GriMac-13} to investigate the fluctuations around coherent states. However, the operator $U_N(t)$ is more suitable to play with on $N$-particle sector $\gH^N$. 

\subsection{Bogoliubov's approximation} \label{sec:main-thm-Bog-app}Following \cite{LewNamSch-14}, we will consider 
\be \label{eq:def-PhiNt}
\Phi_N(t):=U_N(t) \Psi_N(t).
\ee
The vector $\Phi_N(t)$ belongs to $\cF_+^{\le N}(t)$ and it satisfies the equation
\be \label{eq:eq-PhiNt}
\begin{cases}
i \partial_t \Phi_N(t)  =  \Big[ i \left( \partial_t U_N(t)\right) U_N^*(t) + U_N(t)H_N U_N^*(t) \Big] \Phi_N(t), \\
 \Phi_N(0)  = U_N(0) \Psi_N(0).
\end{cases}
\ee
The first key ingredient in our approach is the following approximation
\be \label{eq:Bog-app-newHNt}
i \left( \partial_t U_N(t)\right) U_N^*(t) + U_N(t)H_N U_N^*(t) \approx \bH(t),
\ee
where $\bH(t)$ is derived from Bogoliubov's theory:
\begin{align} \label{eq:Bogoliubov-Hamiltonian-Ht}
\bH(t)&:= \dGamma \big(-\Delta+|u(t)|^2\ast w_N -\mu_N(t) + K_1(t) \big)\\
&+\frac12\iint_{\R^3\times\R^3}\Big(K_2(t,x,y)a^*(x)a^*(y)+\overline{K_2(t,x,y)}a(x)a(y)\Big)\mathrm{d} x\,\mathrm{d} y.\nn
\end{align}
Here $K_1(t)={Q(t)}\widetilde{K}_1(t)Q(t)$ where $\widetilde{K}_1(t)$ is the operator on $\gH$ with kernel $\widetilde{K}_1(t,x,y)\!=\!u(t,x)w_N(x-y)\overline{u(t,y)}$, and $K_2(t,\cdot,\cdot)\!=\!Q(t)\otimes Q(t)\widetilde{K}_2(t,\cdot,\cdot) \!\in\! \gH^2$ with $\widetilde{K}_2(t,x,y)=u(t,x)w(x-y)u(t,y)$. 
\smallskip

When $\beta=0$, the approximation \eqref{eq:Bog-app-newHNt} in the meaning of quadratic forms has been justified in \cite{LewNamSch-14}, inspired by ideas in \cite{LewNamSerSol-15}. To deal with the case $0\le \beta<1/3$, we will need the following operator bound.

\begin{proposition}[Bogoliubov's approximation] \label{lem:Bog-app} Let $\beta\ge 0$ and $N\in \mathbb{N}$ arbitrary. Let $u(t)$ be the Hartree evolution with initial datum $u(0)\in H^2(\R^3)$. Denote
$$ R(t)= \1_{\cF_+^{\le N}(t)} \Big[i \left( \partial_t U_N(t)\right) U_N^*(t) + U_N(t)H_N U_N^*(t) -  \bH(t) \Big] \1_{\cF_+^{\le N}(t)}.$$
Then $R(t)=R^*(t)$ and 
\begin{align} \label{eq:Bog-app-RR*}
R^2(t)\le C \Big( N^{6\beta-2} \cN_+^4(t) + N^{3\beta-1}\cN_+^3(t) + N^{3\beta-2} \Big)
\end{align}
on $\cF(\gH)$, for a constant $C$ depending only on $\|u(0)\|_{H^2(\R^3)}$.
\end{proposition}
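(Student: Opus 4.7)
The plan is to compute $U_N H_N U_N^* + i(\partial_t U_N)U_N^*$ explicitly on $\cF_+^{\leq N}(t)$, subtract $\bH(t)$, and control the resulting remainder term by term. The main technical tools are the substitution rules for $U_N(t)$ on $\cF_+^{\leq N}(t)$, established in \cite{LewNamSerSol-15}: for $f,g\in\gH_+(t)$,
\begin{align*}
U_N a^*(u(t))a(u(t)) U_N^* &= N-\cN_+(t),\\
U_N a^*(u(t))a(Q(t)f) U_N^* &= \sqrt{N-\cN_+(t)}\,a(f),\\
U_N a^*(Q(t)f)a(Q(t)g) U_N^* &= a^*(f)a(g),
\end{align*}
together with two-body analogues producing prefactors of the form $\sqrt{(N-\cN_+)(N-1-\cN_+)}/(N-1)$. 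Inserting the splitting $a_x = \overline{u(t,x)}\,a(u(t))+\tilde a_x$ (with $\tilde a_x$ the $Q(t)$-projected operator-valued distribution) into \eqref{def:2ndquantHamilt} and applying these rules produces $U_N H_N U_N^*$ as a sum of terms of polynomial degrees $0,1,2,3,4$ in the creation/annihilation operators on $\gH_+(t)$.

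\textbf{Cancellations up to Bogoliubov.} The degree-$0$ contribution combines with the scalar piece of $i(\partial_t U_N)U_N^*$ to cancel, thanks to the phase choice \eqref{eq:def-mu}. The degree-$1$ piece is a multiple of $a^*\big((-\Delta+|u(t)|^2*w_N-\mu_N(t))u(t)\big)+\text{h.c.}$, which vanishes by the Hartree equation \eqref{eq:Hartree-equation}. The degree-$2$ piece reproduces $\bH(t)$ up to replacing factors such as $\sqrt{(N-\cN_+)(N-1-\cN_+)}/(N-1)$ and $\sqrt{N-\cN_+}/\sqrt{N-1}$ by $1$. Using the elementary bound $|1-\sqrt{(N-\cN_+)(N-1-\cN_+)}/(N-1)|\lesssim (\cN_++1)/N$ on $\cF_+^{\leq N}(t)$ together with the available kernel estimates on $K_1(t)$ and $K_2(t)$ (both controlled via $\|w_N\|_{L^1}=\|w\|_{L^1}$ and $\|u(t)\|_\infty\lesssim 1$), these quadratic corrections contribute $\lesssim N^{3\beta-2}\|(\cN_++1)^2\Psi\|^2$ to $\|R\Psi\|^2$, which is absorbed into the right-hand side of \eqref{eq:Bog-app-RR*}.

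\textbf{Cubic and quartic pieces.} The degree-$4$ contribution is $\frac{1}{2(N-1)}\iint w_N(x-y)\tilde a^*_x\tilde a^*_y\tilde a_y\tilde a_x\,dx\,dy$; since $w_N\geq 0$ and $\|w_N\|_\infty\lesssim N^{3\beta}$, it is bounded by $CN^{3\beta-1}\cN_+^2$, and commuting with $\cN_+$ gives $\|\text{quartic}\cdot\Psi\|^2\lesssim N^{6\beta-2}\|\cN_+^2\Psi\|^2$. The degree-$3$ contribution reads schematically
\[
\frac{\sqrt{N-\cN_+}}{(N-1)\sqrt{N-1}}\iint w_N(x-y)\overline{u(t,y)}\,\tilde a^*_x\tilde a^*_y\tilde a_x\,dx\,dy + \text{h.c.},
\]
and the heart of its estimate is a Cauchy-Schwarz splitting $w_N=\sqrt{w_N}\cdot\sqrt{w_N}$, pairing one factor with the positive quartic form and the other with the quadratic form $\iint w_N(x-y)|u(t,y)|^2\tilde a^*_x\tilde a_x\,dx\,dy\leq\|w_N*|u(t)|^2\|_\infty\cN_+\lesssim\cN_+$. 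Taking the geometric mean yields $\|\text{cubic}\cdot\Psi\|^2\lesssim N^{3\beta-1}\|\cN_+^{3/2}\Psi\|^2$. Combining the three bounds via $\|R\Psi\|^2\leq 3\sum_i\|R_i\Psi\|^2$ establishes \eqref{eq:Bog-app-RR*}; the identity $R=R^*$ is transparent from the explicit expression, since every non-Hermitian cubic and quartic piece appears with its adjoint and the quadratic correction is symmetric by construction.

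\textbf{Main obstacle.} The principal technical point, and the main new input compared to the $\beta=0$ argument of \cite{LewNamSch-14}, is the cubic estimate for $\beta>0$: the kernel $w_N(x-y)\overline{u(t,y)}$ has $L^\infty$-norm $\sim N^{3\beta}$ but $L^1$-norm $\sim 1$, so a naive Cauchy-Schwarz on its $L^2$-norm would cost a factor $N^{3\beta/2}$ too much. The remedy sketched above exploits $w_N\geq 0$ to distribute the two copies of $\sqrt{w_N}$ asymmetrically: one is absorbed into the quartic form (where the cost is only $\|w_N\|_\infty^{1/2}\cN_+\sim N^{3\beta/2}\cN_+$) and the other into a quadratic form controlled uniformly in $N$ by $\cN_+$, which is precisely what produces the $N^{3\beta-1}\cN_+^3$ scaling on the right-hand side of \eqref{eq:Bog-app-RR*}.
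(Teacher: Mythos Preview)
Your overall architecture matches the paper's: the same decomposition into pieces of degree $0$--$4$ in $\tilde a,\tilde a^*$, the same cancellations via the phase choice and the Hartree equation, and the same estimates for the quartic part. Two points deserve comment.

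\medskip
\textbf{A small omission.} Your claim that ``the degree-$1$ piece vanishes by the Hartree equation'' is only true for the \emph{leading} linear contribution. After applying $U_N$, the interaction's linear-in-$\tilde a$ term carries the exact prefactor $(N-\cN_+)\sqrt{N-\cN_+}/(N-1)$ rather than $\sqrt{N-\cN_+}$; the mismatch produces a residual linear term (the paper's $R_2=-2\frac{\cN_+\sqrt{N-\cN_+}}{N-1}a(Q[w_N*|u|^2]u)$) which is not killed by Hartree. It is harmless---one checks $R_2^*R_2+R_2R_2^*\le C\cN_+^3/N$, which is absorbed into the right-hand side---but it must be accounted for. (Relatedly, your cubic prefactor should be $\sqrt{N-\cN_+}/(N-1)$, not $\sqrt{N-\cN_+}/[(N-1)\sqrt{N-1}]$.)

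\medskip
\textbf{The cubic term: a gap, and a simpler fix than the paper's.} The Cauchy--Schwarz splitting $w_N=\sqrt{w_N}\cdot\sqrt{w_N}$ that you describe yields the \emph{bilinear} estimate
\[
|\langle\Phi,R_6\Psi\rangle|^2 \;\le\; \Big\langle\Phi,\textstyle\iint w_N|u|^2\,b_y^*b_y\,\Phi\Big\rangle\cdot\Big\langle\Psi,\textstyle\iint w_N\,b_x^*b_y^*b_yb_x\,\Psi\Big\rangle \;\le\; C\langle\Phi,\cN_+\Phi\rangle\cdot CN^{3\beta}\langle\Psi,\cN_+^2\Psi\rangle,
\]
hence the form bound $\pm(R_6+R_6^*)\le CN^{3\beta/2}\cN_+^{3/2}$. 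But the proposition asks for the \emph{operator} bound $R_6^*R_6+R_6R_6^*\le CN^{3\beta}\cN_+^3$, and squaring a form inequality is not legitimate here (neither side commutes with $R_6+R_6^*$). As written, your ``taking the geometric mean yields $\|\text{cubic}\cdot\Psi\|^2\lesssim\ldots$'' does not follow. The paper closes this gap by an explicit, rather lengthy computation: it writes $R_6=\int\overline{u(x)}B_xb_x$ with $B_x=\int w_N(x-y)b_y^*b_y\,dy$, expands $R_6R_6^*$ and $R_6^*R_6$ as sextic polynomials in $b,b^*$, and controls them through several rounds of Cauchy--Schwarz together with the commutator identities for $[B_x,B_y]$ and the integrated bounds $\int B_x\,dx\le C\cN_+$, $\int B_x^2\,dx\le CN^{3\beta}\cN_+^2$. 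Your bilinear estimate can in fact be upgraded to the needed operator bound by a shorter route you do not mention: since $R_6$ lowers $\cN_+$ by exactly one, restrict to $\Psi_n\in\gH_+^n$, set $\Phi=R_6\Psi_n\in\gH_+^{n-1}$ in the bilinear inequality, and obtain $\|R_6\Psi_n\|^2\le CN^{3\beta}n^3\|\Psi_n\|^2$; summing over $n$ (the images are orthogonal) gives $R_6^*R_6\le CN^{3\beta}\cN_+^3$, and the adjoint direction is analogous. Either way, the passage from the form estimate to the operator-square estimate is the substantive step, and it should be made explicit.
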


A bound similar to \eqref{eq:Bog-app-RR*} has been used in \cite[Theorem 1]{NamSei-15} to study the collective excitation spectrum and stationary states of mean-field Bose gases. For the reader's convenience, we will provide a full proof of Proposition \ref{lem:Bog-app} in Section \ref{sec:Bog-app}. 

Recall that we are interested in the evolution of the $N$-particle initial states of the form \eqref{eq:PsiN0-intro}: 
$$
\Psi_N(0) = \sum_{n=0}^N u(0)^{\otimes (N-n)} \otimes_s \psi_n(0)
$$
where $\Phi(0):=(\psi_n(0))_{n=0}^\infty \in \cF_+(0)$. Under this choice,
$$
\Phi_N(0)=U_N \Psi_N(0) = (\psi_n(0))_{n=0}^N 
$$
converges in norm to $\Phi(0)$. Combining with Bogoliubov's approximation \eqref{eq:Bog-app-newHNt}, we may expect that the evolution $\Phi_N(t)$ in \eqref{eq:eq-PhiNt} is close (in norm) to the solution of the effective Bogoliubov equation
\be \label{eq:eq-Phit}
\begin{cases}
i \partial_t \Phi(t) =  \bH(t) \Phi(t), \\
\Phi(t=0) = \Phi(0).
\end{cases}
\ee

The existence and uniqueness of the solution of \eqref{eq:eq-Phit} in the quadratic form domain of $\bH(t)$ have been proved in \cite[Theorem 7]{LewNamSch-14}.  Moreover, the proof in \cite{LewNamSch-14} also gives a bound on $\langle \Phi(t), \cN \Phi(t)\rangle$ which, in particular, depends on $\|K_2(t,\cdot,\cdot)\|_{L^2(\R^3\times \R^3)}$. Indeed, a natural way to bound $\langle \Phi(t), \cN \Phi(t)\rangle$ is to compute the derivative 
$$
\frac{d}{dt} \langle \Phi(t), \cN \Phi(t)\rangle = - \langle \Phi(t), i[\cN, \bH] \Phi(t)\rangle
$$ 
and then use Gr\"onwall's inequality. This requires a bound on the commutator $i[\cN, \bH]$ in terms of $\cN$. To our knowledge, the best known bound of this type is  
$$
i[\cN, \bH] \le C\|K_2(t,\cdot,\cdot)\|_{L^2(\R^3\times \R^3)} (\cN+1)
$$
(see e.g. \cite[Lemma 9]{LewNamSch-14}). Unfortunately, when $\beta>0$,  
$$
\|K_2(t,\cdot,\cdot)\|_{L^2(\R^3\times \R^3)}^2 \sim \iint |u(t,x)|^2 |w_N(x-y)|^2 |u(t,y)|^2\, \mathrm{d} x \mathrm{d} y \sim  N^{3\beta},
$$ 
and the Gr\"onwall argument gives a bound on $\langle \Phi(t), \cN \Phi(t)\rangle$ of the order $\exp(N^{3\beta/2})$, which is too big for our purposes. 
\medskip

The main new ingredient in our paper is a uniform bound on\linebreak $\langle \Phi(t), \cN \Phi(t)\rangle$, for any $\beta\ge 0$. More precisely, we have the following 

\begin{proposition}[Bogoliubov equation] \label{lem:evo-quasi-free} Let $\beta\ge 0$ and $N\in \mathbb{N}$ arbitrary. Let $u(t)$ be the Hartree evolution with initial datum $u(0)\in H^2(\R^3)$. Then for every initial state $\Phi(0)\in \cF_+(0)$ satisfying $\langle \Phi(0), \cN \Phi(0)\rangle <\infty$, the equation~\eqref{eq:eq-Phit} has a unique global solution $\Phi(t)$. Moreover, $\Phi(t)\in \cF_+(t)$ for all $t\ge 0$ and the following statements hold true.
\begin{itemize}
\item[(i)] The pair of density matrices $(\gamma(t),\alpha(t))=(\gamma_{\Phi(t)}, \alpha_{\Phi(t)})$ is the unique solution to the following system of one-body linear equations
\be \label{eq:linear-Bog-dm} 
\begin{cases}
i\partial_t \gamma = h \gamma - \gamma h + K_2 \alpha - \alpha^* K_2^*, \\
i\partial_t \alpha = h \alpha + \alpha h^{\rm T} + K_2  + K_2 \gamma^{\rm T} + \gamma K_2,\\
\gamma(t=0)=\gamma_{\Phi(0)}, \quad \alpha(t=0)  = \alpha_{\Phi(0)}.
\end{cases}
\ee
Here $h(t)=-\Delta+|u(t)|^2\ast w_N -\mu_N(t) + K_1(t)$; $K_2(t):\overline{\gH}\to \gH$ is the operator with kernel $K_2(t,x,y)$; and $\gamma^{\rm T}: \overline{\gH}\to \overline{\gH}$ is the operator with kernel $\gamma^{\rm T}(t,x,y)=\gamma(t,y,x)$.
 
\item[(ii)] We have 
\begin{align}\label{eq:bound-alpha-HS-main-result}
&\|\alpha(t)\|_{\rm HS}^2+ \|\gamma(t)\|_{\rm HS}^2 \le e^{Ct} ( 1 + \|\alpha(0)\|_{\rm HS}^2 + \|\gamma(0)\|_{\rm HS}^2 ).
\end{align}

\item[(iii)] If $\Phi(0)$ is a quasi-free state, then $\Phi(t)$ is a quasi-free state for all $t$ and 
\be \label{eq:bound-N-Phit}
\langle \Phi(t), \cN \Phi(t) \rangle \le e^{Ct} \Big( 1 + \langle \Phi(0), \cN \Phi(0)\rangle \Big)^2
\ee
for a constant $C$ depending only on $\|u(0)\|_{H^2(\R^3)}$. Moreover, if $u(0)\in W^{\ell,1}(\R^3)$ with $\ell$ sufficiently large, then 
\be \label{eq:bound-N-Phit-improved}
\langle \Phi(t), \cN \Phi(t) \rangle \le C_1 \Big( \log(1+t) +  1 + \langle \Phi(0), \cN \Phi(0)\rangle \Big)^2
\ee
for a constant $C_1$ depending only on $\|u(0)\|_{W^{\ell,1}(\R^3)}$.
\end{itemize}
\end{proposition}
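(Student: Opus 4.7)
The plan is to build the proof around the fact that $\bH(t)$ is quadratic, so both the evolution of $\Phi(t)$ and of its one-body density matrices close nicely. Existence, uniqueness, and the preservation $\Phi(t)\in\cF_+(t)$ follow from a mild extension of \cite[Theorem 7]{LewNamSch-14} to the time-dependent setting: the maps $t\mapsto K_1(t),K_2(t)$ are continuous, and since $K_1(t)=Q(t)\widetilde{K}_1(t)Q(t)$ and $K_2(t)\in Q(t)\otimes Q(t)\gH^2$, no excitation along $u(t)$ is ever generated. To prove (i), I would differentiate $\gamma(t,x,y)=\langle\Phi(t),a_y^*a_x\Phi(t)\rangle$ and $\alpha(t,x,y)=\langle\Phi(t),a_y a_x\Phi(t)\rangle$ in time and reduce each time derivative to the expectation of a commutator with $\bH(t)$. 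Because $\bH(t)$ is quadratic, the CCR force the commutators $[a_y^*a_x,\bH(t)]$ and $[a_y a_x,\bH(t)]$ to again be quadratic in $a,a^*$, so their expectations are linear in the entries of $\gamma(t)$ and $\alpha(t)$; matching terms yields \eqref{eq:linear-Bog-dm}. Uniqueness is immediate from Gronwall applied to the Hilbert--Schmidt norm of the difference of two solutions.

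For (ii), I would differentiate $\|\gamma(t)\|_{\rm HS}^2+\|\alpha(t)\|_{\rm HS}^2$ directly from the system. The self-adjoint single-particle terms $h\gamma-\gamma h$ and $h\alpha+\alpha h^{\rm T}$ drop out by cyclicity of the trace, leaving contributions that are linear in $K_2$ and bilinear in $(\gamma,\alpha)$. The key input is the $N$-uniform operator-norm bound $\|K_2(t)\|\le\|u(t)\|_{L^\infty}^2\|w\|_{L^1}$, obtained by writing $\widetilde{K}_2(t)=M_{u(t)} W_N M_{u(t)}$ (with $W_N$ the convolution by $w_N$) and applying Young's inequality; this controls cross terms such as $\Tr(\gamma K_2 \alpha^*)$ by $\|K_2\|\cdot\|\gamma\|_{\rm HS}\|\alpha\|_{\rm HS}$, and Gronwall then closes the estimate. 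The delicate point, and the main obstacle, is the inhomogeneous term $\Tr(\alpha^* K_2)$ coming from the pure source $K_2$ in the $\alpha$-equation: the naive HS--HS pairing is useless because $\|K_2\|_{\rm HS}\sim N^{3\beta/2}$ diverges, so one must instead exploit the operator-theoretic structure of $K_2$ (or the positivity of $w$ together with pointwise control on $u$) to extract an $N$-uniform bound. This is precisely the new ingredient needed to push the $\beta=0$ argument of \cite{LewNamSch-14} into the range $0\le\beta<1/3$.

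For (iii), since $\bH(t)$ is quadratic the evolution is implemented by a time-dependent Bogoliubov transformation on the CCR algebra, so the quasi-free property is preserved along the flow. For a pure quasi-free state the standard identity $\alpha(t)\alpha(t)^*=\gamma(t)+\gamma(t)^2$ yields $\langle\Phi(t),\cN\Phi(t)\rangle=\Tr\gamma(t)\le\|\alpha(t)\|_{\rm HS}^2$, which combined with \eqref{eq:bound-alpha-HS-main-result} and the same identity at $t=0$ (bounding $\|\alpha(0)\|_{\rm HS}^2+\|\gamma(0)\|_{\rm HS}^2$ by $C(1+\langle\Phi(0),\cN\Phi(0)\rangle)^2$) produces \eqref{eq:bound-N-Phit}. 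The improved bound \eqref{eq:bound-N-Phit-improved} exploits the dispersive decay $\|u(t)\|_{L^\infty}\lesssim(1+t)^{-3/2}$ from Lemma~\ref{lem:Hartree-evolution}, which renders $\|K_2(t)\|$ integrable in $t$; rerunning the energy estimate with this time-decaying coefficient trades the exponential for a polynomial, and a careful Duhamel treatment of the source term produces at most the stated $\log(1+t)$ growth.
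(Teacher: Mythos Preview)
Your treatment of (i) and (iii) is essentially the paper's: you differentiate the two-point functions and use the CCR to close the system \eqref{eq:linear-Bog-dm}, and you invoke the quasi-free identity $\alpha\alpha^*=\gamma(1+\gamma)$ to convert the Hilbert--Schmidt bound into a bound on $\Tr\gamma$. (The paper proves preservation of quasi-freeness slightly differently, by showing that $\gamma+\gamma^2-\alpha\alpha^*$ and $\gamma\alpha-\alpha\gamma^{\rm T}$ satisfy a homogeneous linear system with zero initial data, but your metaplectic argument is explicitly mentioned there as a valid alternative.)

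The gap is in (ii). You correctly isolate the obstruction: in $\frac{d}{dt}\|\alpha\|_{\rm HS}^2$ the source term produces $\Tr(\alpha^*K_2)$, and neither the HS--HS pairing nor an operator-norm/trace-norm pairing closes in terms of $\|\alpha\|_{\rm HS}$ and $\|\gamma\|_{\rm HS}$ uniformly in $N$. But you do not supply a mechanism to overcome this; ``exploit the operator-theoretic structure of $K_2$'' and ``positivity of $w$'' are not the right hints, and no direct bound on $\Tr(\alpha^*K_2)$ of the required form is available. The paper's resolution is structural rather than term-by-term: one splits $\alpha=Y_1+Y_2$, where $Y_1$ solves the purely free inhomogeneous problem
\[
i\partial_t Y_1=(-\Delta)Y_1+Y_1(-\Delta)^{\rm T}+K_2,\qquad Y_1(0)=0,
\]
and $Y_2$ carries the remaining (bounded-coefficient, homogeneous) terms. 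Duhamel plus an integration by parts in $t$ gives
\[
\|Y_1(t)\|_{\rm HS}\le \|\cL^{-1}\widetilde K_2(t)\|_{L^2}+\|\cL^{-1}\widetilde K_2(0)\|_{L^2}+\int_0^t\Big(\|K_2-\widetilde K_2\|_{L^2}+\|\cL^{-1}\partial_s\widetilde K_2\|_{L^2}\Big)\,\mathrm{d}s,
\]
with $\cL=-\Delta_x-\Delta_y$. The crucial input, which your proposal does not mention, is the $N$-uniform smoothing estimate
\[
\|\cL^{-1}\widetilde K_2(t,\cdot,\cdot)\|_{L^2}+\|\cL^{-1}\partial_t\widetilde K_2(t,\cdot,\cdot)\|_{L^2}\le C\|u(t)\|_{L^\infty}^{2/3},
\]
proved by passing to Fourier variables and using Hardy--Littlewood--Sobolev (this is Lemma~\ref{lem:vc}, essentially \cite[Lemma~4.3]{GriMac-13}). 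Only after this does the energy estimate for $(Y_2,\gamma)$ close via Gr\"onwall with the operator norm $\|K_2\|$, yielding \eqref{eq:bound-alpha-HS-main-result}. Your Duhamel remark in (iii) is the right instinct, but it is needed already for the basic bound (ii), not only for the refined $\log(1+t)$ estimate.
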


Proposition \ref{lem:evo-quasi-free} is a consequence of an abstract result proved in Section~\ref{sec:evo-quasi-free} on the evolution generated by a general quadratic Hamiltonian on Fock space.  

The $N$-independent estimate \eqref{eq:bound-N-Phit} plays an essential role in our analysis and it can be derived directly from the equations \eqref{eq:linear-Bog-dm}.

Our derivation of \eqref{eq:linear-Bog-dm} and \eqref{eq:bound-N-Phit} is inspired from the analysis in \cite{GriMac-13}. In fact, the bound \eqref{eq:bound-N-Phit} is similar to the paring estimate in \cite[Theorem 4.1]{GriMac-13} and the equations \eqref{eq:linear-Bog-dm} is analogous to the paring equations (17b)-(17c) in \cite{GriMac-13} (see also \cite{GriMacMar-10,GriMacMar-11} for earlier results). To be more precise, let us consider the case when $\Phi(t)$ is a quasi-free state. In this case, $\Phi(t)=T(t) \Omega$ for a unique Bogoliubov transformation on $\cF(\gH)$, and the equation \eqref{eq:eq-Phit} becomes
\be \label{eq:Bog-Tt}
\Big[ T^*(t) (i \partial_t T(t)) - T^*(t) \bH (t) T(t) \Big] \Omega =0 .
\ee
In \cite{GriMac-13}, the explicit form
$$
T(t)= \exp\left ( i\chi_N(t) + \iint \left[ \overline{k(t,x,y)} a_x a_y - k(t,x,y) a^*_x a^*_y \right] \mathrm{d} x \mathrm{d} y \right)
$$
has been taken, where $\chi_N(t) \in \R$ is a phase factor, and the pairing equations for $k(t,x,y)$ \cite[Eqs. (17b)-(17c)]{GriMac-13} have been derived such that 
$$ T^*(t) (i \partial_t T(t)) - T^*(t) \bH(t) T(t) = \dGamma(\xi)$$
for some operator $\xi:\gH\to \gH$, which ensures that \eqref{eq:Bog-Tt} holds true. 

Our derivation of the linear equations \eqref{eq:linear-Bog-dm} is different from and much shorter than the representation in \cite{GriMacMar-10,GriMacMar-11,GriMac-13}. In fact, \eqref{eq:linear-Bog-dm} follows quickly from \eqref{eq:eq-Phit} by analyzing the dynamics of the two-point correlation functions $\langle \Psi_N(t), a_x^* a_y \Psi_N(t)\rangle$ and  $\langle \Psi_N(t), a_x^* a^*_y \Psi_N(t)\rangle$.  

The first statement in (iii) is a general fact that the set of quasi-free states is stable under the evolution generated by a time-dependent quadratic Hamiltonian. This interesting statement should be well-known but we could not localize a precise reference. As pointed out to us by Jan Derezi\'nski (private communication), this statement follows from a similar statement for the evolution generated by a time-independent quadratic Hamiltonian and the closedness of the metaplectic group in Fock space. In the present paper, we will show that this statement is a direct consequence of the linear equations \eqref{eq:linear-Bog-dm}. 

The last ingredient in our approach is the following 

\begin{lemma}[Fluctuations of quasi-free states] \label{lem:Nk-quasi-free} For all $\ell\ge 1$, there exists a constant $C_\ell >0$ such that for all quasi-free states $\Psi$ in $\cF(\gH):$ 
\begin{align*}
\langle \Psi,  \cN^\ell \Psi \rangle \le  C_{\ell} (1+ \langle \Psi,  \cN \Psi \rangle )^{\ell}.
\end{align*}
\end{lemma}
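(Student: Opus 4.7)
The plan is to reduce $\langle \Psi,\cN^\ell \Psi\rangle$ to expectation values of normal-ordered particle-number operators, where Wick's theorem applies cleanly without producing CCR $\delta$-terms, and to bound the resulting traces using the positivity of the generalized one-particle density matrix of a quasi-free state.

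The starting point is the combinatorial identity
\begin{equation*}
\cN^\ell \;=\; \sum_{k=0}^{\ell} S(\ell,k)\,\cN_{(k)},\qquad \cN_{(k)}:=\int a^*_{x_1}\cdots a^*_{x_k}\, a_{x_1}\cdots a_{x_k}\,\mathrm{d} x_1\cdots\mathrm{d} x_k ,
\end{equation*}
where $S(\ell,k)$ is a Stirling number of the second kind and $\cN_{(k)}=\cN(\cN-1)\cdots(\cN-k+1)$ is the $k$-th factorial power of $\cN$; this is the standard relation between ordinary and factorial moments in second quantization. It therefore suffices to prove $\langle \Psi,\cN_{(k)}\Psi\rangle \le C_k(1+\langle \Psi,\cN\Psi\rangle)^k$ for each $k\le \ell$.

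For such a normal-ordered expectation, Wick's theorem \eqref{eq:Wick-2} writes $\langle \Psi,\,a^*_{x_1}\cdots a^*_{x_k} a_{x_1}\cdots a_{x_k}\,\Psi\rangle$ as a sum over the $(2k-1)!!$ pairings $\sigma\in P_{2k}$ of products of $k$ two-point functions. Because all creations stand to the left of all annihilations, the only two-point functions that appear are $\gamma_\Psi(x_j,x_i)$, $\overline{\alpha_\Psi(x_j,x_i)}$, and $\alpha_\Psi(x_i,x_j)$; in particular no CCR $\delta$-contribution arises. Integrating over $x_1,\dots,x_k$, each diagram collapses into a product of cyclic traces of monomials in $\gamma_\Psi$, $\alpha_\Psi$, and $\alpha_\Psi^*$ whose total number of factors equals $k$.

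The positivity of the generalized one-particle density matrix of a quasi-free state then gives $\alpha_\Psi^*\alpha_\Psi\le \gamma_\Psi^{\rm T}(1+\gamma_\Psi^{\rm T})$, whence $\|\alpha_\Psi\|_{\rm op}^2\le \|\gamma_\Psi\|_{\rm op}(1+\|\gamma_\Psi\|_{\rm op})$ and $\|\alpha_\Psi\|_{\rm HS}^2\le \Tr(\gamma_\Psi)+\Tr(\gamma_\Psi^2)$; together with $\|\gamma_\Psi\|_{\rm op}\le \Tr(\gamma_\Psi)=\langle \Psi,\cN\Psi\rangle$ and $\Tr(\gamma_\Psi^m)\le \Tr(\gamma_\Psi)^m$, a H\"older inequality for Schatten norms yields $|\Tr(A_1\cdots A_m)|\le C(1+\langle \Psi,\cN\Psi\rangle)^m$ whenever each $A_j\in\{\gamma_\Psi,\alpha_\Psi,\alpha_\Psi^*\}$. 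Each diagram is therefore bounded by $C(1+\langle \Psi,\cN\Psi\rangle)^k$, so $\langle \Psi,\cN_{(k)}\Psi\rangle\le C(2k-1)!!(1+\langle \Psi,\cN\Psi\rangle)^k$, and summing against the Stirling coefficients gives the lemma with an explicit $C_\ell$. The main obstacle is the operator inequality $\alpha_\Psi^*\alpha_\Psi\le \gamma_\Psi^{\rm T}(1+\gamma_\Psi^{\rm T})$ itself, which is classical (via positivity of the $2\times 2$ block matrix formed by $\gamma_\Psi$ and $\alpha_\Psi$) but requires some care with the conventions for transposes and complex conjugations on $\overline{\gH}$ versus $\gH$; once this is in place, the rest reduces to a routine Wick-diagram count.
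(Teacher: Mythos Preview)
Your proof is correct and follows the same overall skeleton as the paper's: pass from $\cN^\ell$ to the normal-ordered factorial moments $\cN_{(k)}$, expand $\langle\Psi,\cN_{(k)}\Psi\rangle$ by Wick's theorem, and then bound each diagram by $(1+\Tr\gamma_\Psi)^k$. The difference lies entirely in how the Wick diagrams are controlled. The paper exploits the quasi-free identities $\gamma\alpha=\alpha\gamma^{\rm T}$ and $\alpha\alpha^*=\gamma(1+\gamma)$ to diagonalize $\gamma_\Psi$ and $\alpha_\Psi$ simultaneously in a basis $\{u_n\}$ with eigenvalues $\lambda_n$; in that basis all cross terms vanish and each diagram reduces to an explicit polynomial in the $\lambda_n$'s, which is then summed. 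You instead keep the diagrams as cyclic traces $\Tr(A_1\cdots A_m)$ with $A_j\in\{\gamma,\alpha,\alpha^*\}$ and bound them by H\"older in Schatten norms. Your route is slightly more streamlined in that it avoids the diagonalization lemma altogether; the paper's route makes the combinatorics (and the constant $C_\ell$) more explicit.

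Two small points worth tightening. First, the inequality $\alpha_\Psi^*\alpha_\Psi\le \gamma_\Psi^{\rm T}(1+\gamma_\Psi^{\rm T})$ is not a consequence of the positivity of the generalized one-particle density matrix for arbitrary states; positivity alone yields only $\alpha_\Psi^*\alpha_\Psi\le \|\gamma_\Psi\|(1+\gamma_\Psi^{\rm T})$. For quasi-free $\Psi$ one actually has the \emph{equality} $\alpha_\Psi^*\alpha_\Psi=\gamma_\Psi^{\rm T}(1+\gamma_\Psi^{\rm T})$, which is \eqref{eq:dm-quasi-free} in the paper, and either version suffices for your bounds on $\|\alpha_\Psi\|_{\rm op}$ and $\|\alpha_\Psi\|_{\rm HS}$. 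Second, in the H\"older step one cannot simply put one factor in $S^1$ because $\alpha_\Psi$ need not be trace class; the correct splitting is two Hilbert--Schmidt norms and the rest operator norms, i.e.\ $|\Tr(A_1\cdots A_m)|\le \|A_1\|_{\rm HS}\|A_2\|_{\rm HS}\prod_{j\ge 3}\|A_j\|_{\rm op}$ for $m\ge 2$ (and $\Tr\gamma_\Psi$ for $m=1$). You have all the ingredients for this, since you record both $\|\alpha_\Psi\|_{\rm HS}$ and $\|\alpha_\Psi\|_{\rm op}$, but it is worth stating explicitly.
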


This result is well-known and a proof is provided in Section \ref{sec:proof-main-thm} for completeness. In our application, the case $\ell=4$ is sufficient to control the error in Proposition \ref{lem:Bog-app}.



\subsection{Main result} Now we are able to state our main result. 

\begin{theorem}[Bogoliubov correction to mean-field dynamics] \label{thm:main} Let $u(t)$ be the Hartree evolution in \eqref{eq:Hartree-equation} with an initial state $u(0)\in H^{2}(\R^3)$. Let $\Phi(t)=(\psi_n(t))_{n=0}^\infty \in \cF_+(t)$ be the Bogoliubov evolution in \eqref{eq:eq-Phit} with an initial quasi-free state $\Phi(0)= (\psi_n(0))_{n=0}^\infty \in \cF_+(0)$. Then the Schr\"odinger evolution $\Psi_N(t)=e^{-itH_N}\Psi_N(0)$ with the initial state 
\be \label{eq:PhiN0-thm}
\Psi_N(0) = \sum_{n=0}^N u(0)^{\otimes (N-n)} \otimes_s \psi_n(0) 
\ee
satisfies the following norm approximation:
\be \label{eq:PsiN-app-thm}
\left\| \Psi_N(t) - \sum_{n=0}^N u(t)^{\otimes (N-n)} \otimes_s \psi_n(t) \right\|_{\gH^N} \le C_0(t) N^{(3\beta-1)/2}, 
\ee
where
$$
C_0(t) \le e^{Ct}  \Big( 1 + \langle \Phi(0), \cN \Phi(0) \rangle \Big)^4
$$ 
for a constant $C$ depending only on $\|u(0)\|_{H^2(\R^3)}$. Moreover, if $u(0)\in\linebreak W^{\ell,1}(\R^3)$ with $\ell$ sufficiently large, then 
$$
C_0(t) \le C_1 (1+t)\Big( 1+ \log(1+t) + \langle \Phi(0), \cN \Phi(0)\rangle \Big)^4
$$ 
for a constant $C_1$ depending only on $\|u(0)\|_{W^{\ell,1}(\R^3)}$. 
\end{theorem}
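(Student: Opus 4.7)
The plan is to work in the excited Fock space and compare the transformed $N$-body evolution $\Phi_N(t):=U_N(t)\Psi_N(t)$ with the Bogoliubov evolution $\Phi(t)$. Since $U_N(t)$ is unitary and $U_N(t)\bigl(\sum_{n=0}^N u(t)^{\otimes(N-n)}\otimes_s\psi_n(t)\bigr)=(\psi_n(t))_{n=0}^N$, the left-hand side of \eqref{eq:PsiN-app-thm} equals $\|\Phi_N(t)-\mathbb{1}^{\le N}\Phi(t)\|$, where $\mathbb{1}^{\le N}$ projects onto $\cF_+^{\le N}(t)$. By the triangle inequality this is controlled by $\|\Phi_N(t)-\Phi(t)\|$ plus $\|\mathbb{1}^{>N}\Phi(t)\|$; Markov's inequality and Proposition~\ref{lem:evo-quasi-free} give $\|\mathbb{1}^{>N}\Phi(t)\|^2\le N^{-1}\langle\Phi(t),\cN\Phi(t)\rangle\lesssim N^{-1}\le N^{3\beta-1}$, so the problem reduces to bounding $\|\Phi_N(t)-\Phi(t)\|$.

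For the main term I would use a Duhamel representation with $\cU_{\bH}(t,s)$ the unitary propagator generated by $\bH(\cdot)$. From the identity $i\partial_t\Phi_N=\mathbb{1}^{\le N}\bH\Phi_N+R(t)\Phi_N$ on $\cF_+^{\le N}(t)$, which is exactly the content of the definition of $R$ in Proposition~\ref{lem:Bog-app}, together with $i\partial_t\Phi=\bH\Phi$, one obtains
\begin{equation*}
\|\Phi_N(t)-\Phi(t)\|\le\|\mathbb{1}^{>N}\Phi(0)\|+\int_0^t\!\Big(\|R(s)\Phi_N(s)\|+\|\mathbb{1}^{>N}\bH(s)\Phi_N(s)\|\Big)ds.
\end{equation*}
Proposition~\ref{lem:Bog-app} bounds $\|R(s)\Phi_N(s)\|^2$ by $C\bigl(N^{6\beta-2}\langle\cN_+^4\rangle+N^{3\beta-1}\langle\cN_+^3\rangle+N^{3\beta-2}\bigr)$ on $\Phi_N(s)$. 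For the boundary contribution only the pairing part of $\bH$ raises the particle number, so using $\|K_2(s)\|_{\mathrm{HS}}\lesssim N^{3\beta/2}$ together with Markov's inequality $\|\psi_n^{\Phi_N(s)}\|^2\le N^{-k}\langle\Phi_N(s),\cN_+^k\Phi_N(s)\rangle$ for $n\in\{N-1,N\}$ yields $\|\mathbb{1}^{>N}\bH(s)\Phi_N(s)\|^2\lesssim N^{3\beta-2}\langle\Phi_N(s),\cN_+^4\Phi_N(s)\rangle$. Thus every error reduces to a moment bound on $(\cN_++1)^4$ for $\Phi_N(s)$.

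The key technical step, which I expect to be the main obstacle, is to propagate these moments uniformly in $N$. My plan is a Grönwall argument: differentiate
\begin{equation*}
\frac{d}{dt}\langle\Phi_N,(\cN_+(t)+1)^k\Phi_N\rangle=\langle\Phi_N,\bigl(\partial_t(\cN_+(t)+1)^k+i[\mathcal{L}_N(t),(\cN_++1)^k]\bigr)\Phi_N\rangle.
\end{equation*}
The term $\partial_t\cN_+(t)=-\dGamma(|\dot u(t)\rangle\langle u(t)|+|u(t)\rangle\langle\dot u(t)|)$ is controlled by $\|\dot u(t)\|_{L^2}\lesssim 1$ relative to $(\cN_++1)^k$. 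For the commutator, after writing $\mathcal{L}_N$ via its action through $U_N$ as in the derivation of Proposition~\ref{lem:Bog-app}, the diagonal (number-preserving) part drops out and the off-diagonal terms, although individually of size $N^{3\beta/2}$, combine in such a way that $|[\mathcal{L}_N(t),(\cN_++1)^k]|\le C_k(\cN_++1)^k$ uniformly in $N$ and $\beta<1/3$; this is precisely the refinement over \cite{LewNamSch-14} mentioned in the introduction. The initial data satisfy $\langle\Phi_N(0),(\cN_++1)^4\Phi_N(0)\rangle\le\langle\Phi(0),(\cN+1)^4\Phi(0)\rangle\le C(1+\langle\Phi(0),\cN\Phi(0)\rangle)^4$ by Lemma~\ref{lem:Nk-quasi-free}, and Grönwall yields the uniform bound $\langle\Phi_N(t),(\cN_++1)^4\Phi_N(t)\rangle\le Ce^{Ct}(1+\langle\Phi(0),\cN\Phi(0)\rangle)^4$.

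Inserting this moment bound into the Duhamel estimate, the dominant error is $\|R(s)\Phi_N(s)\|\lesssim N^{(3\beta-1)/2}\cdot e^{Cs}(1+\langle\Phi(0),\cN\Phi(0)\rangle)^2$; the boundary term is of order $N^{(3\beta-2)/2}$ and the initial-data error is of order $N^{-1/2}$, both subleading. Integrating in $s$ produces the claimed rate $C_0(t)N^{(3\beta-1)/2}$ with $C_0(t)\le e^{Ct}(1+\langle\Phi(0),\cN\Phi(0)\rangle)^4$. The sharpened version for $u(0)\in W^{\ell,1}(\R^3)$ follows by replacing the $e^{Ct}$ factor with the $\log(1+t)$ bound of Proposition~\ref{lem:evo-quasi-free}(iii) and by exploiting the $L^\infty$ decay $\|u(t)\|_\infty\lesssim(1+t)^{-3/2}$ in the effective coupling constants entering the commutator estimates, which turns the Grönwall exponent into an integrable time factor.
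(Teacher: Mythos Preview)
Your Duhamel setup leads you to estimate $\|R(s)\Phi_N(s)\|$, which by Proposition~\ref{lem:Bog-app} requires uniform-in-$N$ control of $\langle\Phi_N(s),(\cN_++1)^4\Phi_N(s)\rangle$ for the \emph{full} many-body evolution. You propose to obtain this via a Gr\"onwall argument on the commutator $[\mathcal{L}_N(t),(\cN_++1)^k]$, asserting that the off-diagonal terms ``combine'' to give a bound $\le C_k(\cN_++1)^k$ independent of $N$. This is the gap: no such cancellation occurs. The two-particle pairing part of $\mathcal{L}_N$ is $\mathbb{K}_{\rm cr}\cdot\tfrac{\sqrt{(N-\cN_+)(N-\cN_+-1)}}{N-1}+{\rm h.c.}$ (cf.\ the term $R_3$ plus the pairing in $\bH$), and for $\cN_+\ll N$ this behaves like $\mathbb{K}_{\rm cr}+{\rm h.c.}$; its commutator with $\cN_+$ is bounded relative to $(\cN_++1)$ only with constant $\|K_2(t,\cdot,\cdot)\|_{L^2}\sim N^{3\beta/2}$ --- precisely the obstruction discussed in Section~\ref{sec:main-thm-Bog-app} and the reason Proposition~\ref{lem:evo-quasi-free} had to be proved by other means. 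The cubic and quartic remainders $R_2,R_4,R_5$ change particle number by $\pm1$ or $0$, not $\pm2$, so they cannot cancel the pairing contribution. Your ``refinement over \cite{LewNamSch-14}'' is thus a restatement of the problem, not a solution.

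The paper avoids moment bounds on $\Phi_N$ altogether by a different organization of the estimate. Instead of Duhamel, it differentiates $\|\Phi_N(t)-\Phi(t)\|^2$ directly, producing ${\rm Im}\,\langle R(t)\Phi_N(t),\Phi(t)\rangle$. Since $R(t)=R(t)^*$ and $\langle\Phi,R\Phi\rangle\in\R$, this equals ${\rm Im}\,\langle\Phi_N(t)-\Phi(t),R(t)\Phi(t)\rangle$, so Cauchy--Schwarz places $R(t)$ on $\Phi(t)$ rather than on $\Phi_N(t)$. The boundary contribution likewise becomes $\1_+^{\le N}\bH(1-\1_+^{\le N})\Phi(t)$. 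Now only $\langle\Phi(t),(\cN_++1)^4\Phi(t)\rangle$ is needed, and because $\Phi(t)$ stays quasi-free (Proposition~\ref{lem:evo-quasi-free}(iii)), Lemma~\ref{lem:Nk-quasi-free} reduces this to $\langle\Phi(t),\cN\Phi(t)\rangle$, which is bounded uniformly in $N$ via the density-matrix equations \eqref{eq:linear-Bog-dm} --- not via a commutator Gr\"onwall. This shift of the error from $\Phi_N$ to the quasi-free $\Phi$ is the essential idea your proposal is missing.
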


The proof of Theorem \ref{thm:main} will be provided in Section \ref{sec:proof-main-thm}. Let us give some remarks on the result.

\begin{remark}
Since $e^{-itH_N}$ is a unitary operator on $\gH^N$, the convergence 
\be \label{eq:PsiNt-app-rmk}
\lim_{N\to \infty}\left\| \Psi_N(t) - \sum_{n=0}^N u(t)^{\otimes (N-n)} \otimes_s \psi_n(t) \right\|_{\gH^N} =0
\ee
still holds when \eqref{eq:PhiN0-thm} is replaced by the weaker assumption 
$$
\lim_{N\to \infty} \left\| \Psi_N(0) - \sum_{n=0}^N u(0)^{\otimes (N-n)} \otimes_s \psi_n(0) \right\|_{\gH^N} =0.
$$
Strictly speaking, the initial vector $\Phi_N(0)$ chosen in \eqref{eq:PhiN0-thm} is not normalized, but its norm converges to 1 and the renormalization is trivial.  
\smallskip

We also note that the initial data $u(0)$ and $\Phi(0)$ in Theorem \ref{thm:main} can be chosen to be $N$-dependent, provided that the $N$-dependences of $\|u(0)\|_{H^s(\R^3)}$ and $\langle \Phi(0), \cN \Phi(0)\rangle$ can be compensated by $N^{3\beta-1}$.  
\end{remark}

\begin{remark} In many physical applications, one is often interested in the projection $|\Psi\rangle \langle \Psi|$ of a wave function instead of the wave function $\Psi$ itself. From \eqref{eq:PsiNt-app-rmk} we obtain 
\be \label{eq:cv-dm-gHN}
\lim_{N\to \infty} \Tr_{\gH^N} \Big| |\Psi_N(t)\rangle \langle \Psi_N(t)| - U_N^* \left| \Phi(t) \right\rangle \left\langle \Phi(t) \right| U_N(t) \Big| = 0.
\ee
When $\Phi(t)$ is a quasi-free state, the projection $\left| \Phi(t) \right\rangle \left\langle \Phi(t) \right|$ is determined uniquely by its density matrices. Thus $|\Psi_N (t)\rangle \langle \Psi_N(t)|$ can be well approximated in trace norm using $(u(t),\gamma(t),\alpha(t))$ which, in principle, can be computed as accurate as we want using the one-body equations \eqref{eq:eq-Phit} and \eqref{eq:linear-Bog-dm}.

Moreover, since the one-particle density matrices can be obtained by taking the partial trace, namely$$N^{-1}\gamma_{\Psi_N(t)}= \Tr_{2\to N}|\Psi_N(t)\rangle \langle \Psi_N(t)|,$$
the convergence \eqref{eq:cv-dm-gHN} implies immediately the Bose-Einstein condensation
\be \label{eq:1pdmconv}
\lim_{N\to \infty}\Tr \left| N^{-1}\gamma_{\Psi_N(t)} - |u(t)\rangle \langle u(t)| \right| =0.
\ee
Note that, when $\beta>0$, the Hartree dynamics $u(t)$ converges to the NLS dynamics $v(t)$ in \eqref{eq:NLS} as $N\to \infty$. Therefore \eqref{eq:1pdmconv} is equivalent to 
\be \nn
\lim_{N\to \infty}\Tr \left| N^{-1}\gamma_{\Psi_N(t)} - |v(t)\rangle \langle v(t)| \right| =0.
\ee
\end{remark}
\begin{remark}\label{rem:dimensions} Our result is stated and proved in three dimensions, but it can be extended straightforwardly to one and two dimensions. More precisely, in $d \le 3$ dimensions, we can consider the two-body interaction potential  of the form $w_N(x-y)= N^{d\beta} w(N^\beta(x-y))$, and the result in Theorem \eqref{thm:main} still holds true (on the right side of \eqref{eq:PsiN-app-thm} the error now becomes $C_0(t)N^{(d\beta-1)/2}$).  
\end{remark}

\begin{remark}\label{rem:focusing} Note that Lemma \ref{lem:Hartree-evolution} is the only place where we need the assumption $w\ge 0$. The rest of our proof does not require the sign assumption on $w$ (cf. Remark \ref{re:w<0}). In particular, our result can be extended to one or two dimensions with attractive interaction potential (i.e. $w<0$), provided the well-posedness of the corresponding Hartree equation, as in Lemma \ref{lem:Hartree-evolution}, still holds. In particular, our method covers the derivation of the 1D and 2D focusing dynamics with a harmonic trap (see \cite{CheHol-13, CheHol-15} for results on the leading order).
\end{remark}

\section{Bogoliubov's approximation} \label{sec:Bog-app}

In this section we justify Bogoliubov's approximation \eqref{eq:Bog-app-newHNt}. 

\begin{proof}[Proof of Proposition \ref{lem:Bog-app}] Let us denote $\1_+^{\le N}=\1_{\cF_+^{\le N}(t)}=\1(\cN_+(t)\le N)$ for short. Recall that from the calculations in \cite[Eqs. (40)-(41)]{LewNamSch-14}, we have
\begin{align}
\label{def:R_N}
R(t) &=  \1_+^{\le N} \Big[i \left( \partial_t U_N(t)\right) U_N^*(t) + U_N(t)H_N U_N^*(t) -  \bH(t) \Big] \1_+^{\le N}\\
\nn &= \frac{1}{2}\sum_{j=1}^{5} \1_+^{\le N} ( R_{j} + R_j^*) \1_+^{\le N} 
\end{align}
where
\begin{align*}
R_{1}&=R_1^*= \mathrm{d}\Gamma(Q(t)[w_N*|u(t)|^2+K_1-\mu_N(t)]Q(t))\frac{1-\cN_+(t)}{N-1},\\
R_{2}&=-2\frac{\cN_+(t)\sqrt{N-\cN_+(t)}}{N-1} a(Q(t)[w_N*|u(t)|^2]u(t)),\\
R_{3}&= \iint  K_2(t,x,y) a^*_x a^*_y \mathrm{d} x \mathrm{d} y \left(\frac{\sqrt{(N-\cN_+(t))(N-\cN_+(t)-1)}}{N-1}-1\right),\\
R_{4}&=R_4^*= \frac{1}{2(N-1)}\iiiint({Q(t)}\otimes{Q(t)}w_N Q(t)\otimes Q(t))(x,y;x',y') \\
&  \qquad \qquad \qquad \qquad \qquad \quad  \times a^*_x a^*_y a_{x'} a_{y'} \,\mathrm{d} x \mathrm{d} y \mathrm{d} x' \mathrm{d} y',\\
R_{5}& =\frac{\sqrt{N-\cN_+(t)}}{N-1}\iiiint( 1\otimes Q(t) w_N Q(t)\otimes Q(t))(x,y;x',y') \\
& \qquad \qquad \qquad \qquad \qquad   \times \overline{u(t,x)} a^*_y a_{x'} a_{y'} \,\mathrm{d} x \mathrm{d} y \mathrm{d} x' \mathrm{d} y'.
\end{align*}
By the Cauchy-Schwarz inequality we have that 
\be \label{eq:RN-R1-R5}
R^2(t) \leq 100 \sum_{j=1}^{5} \1_+^{\le N} ( R_{j} \1_+^{\le N} R_j^* + R_j^* \1_+^{\le N} R_j) \1_+^{\le N}.
\ee
Now we estimate all terms on the right side of \eqref{eq:RN-R1-R5}. We will always denote by $C$ a constant depending only on $\|u(0)\|_{\rm H^s(\R^3)}$. 
\medskip 

\noindent $\bf {j=1}$. Using $\|w_N\|_{L^1}=\|w\|_{L^1}$ we get 
\begin{align} \label{eq:bound-wN*uu}
\left\| w_N*|u(t)|^2 \right\|_{L^\infty(\R^3)} \le \|w\|_{L^1} \|u(t)\|_{L^\infty}^2 \le C.
\end{align}
Similarly, 
\begin{align} \label{eq:bound-muN} |\mu_N(t)| &= \frac12 \left| \iint_{\R^3 \times \R^3}|u(t,x)|^2 w_N(x-y)|u(t,y)|^2\,\mathrm{d} x\,\mathrm{d} y \right| \\
&\le \frac{1}{2}\|u(t)\|_{L^2(\R^3)}^2 \|u(t)\|_{L^\infty (\R^3)}^2  \|w_N \|_{L^1(\R^3)} \le C.\nn
\end{align}
Moreover, 
\begin{align*}
\left| \langle f, \widetilde K_1(t) g \rangle  \right| &= \left| \iint \overline{f(x)} u(t,x) w_N(x-y) \overline{u(t,y)} g(y) \,\mathrm{d} x\,\mathrm{d} y \right| \\
& \le \|u(t)\|^2_{L^\infty(\R^3)} \left(\iint |f(x)|^2 |w_N(x-y)| \mathrm{d} x \mathrm{d} y \right)^{1/2}  \\
&\quad \times \left(\iint |g(y)|^2 |w_N(x-y)| \mathrm{d} x \mathrm{d} y \right)^{1/2} \\
& \le C \|u(t)\|^2_{L^\infty(\R^3)}   \| f\|_{L^2(\R^3)} \| g\|_{L^2(\R^3)} 
\end{align*}
for all $f,g\in L^2(\R^3)$. Therefore,  
\be \label{eq:bound-K1}
\| K_1(t) \| = \|Q(t) \widetilde  K_1(t) Q(t)\| \le \|\widetilde  K_1(t)\| \le C \|u(t)\|^2_{L^\infty(\R^3)} \le C.
\ee
Thus, in summary,
$$
 \pm \dGamma(Q(t)[w_N*|u(t)|^2+K_1-\mu(t)]Q(t)) \le C \dGamma (Q(t))= C \cN_+(t).
$$
Since $\dGamma(Q(t)[w_N*|u(t)|^2+K_1-\mu(t)]Q(t))$ commutes with $\cN_+(t)$, we find that
\begin{align*}
R_1^2 &= \dGamma(Q(t)[w_N*|u(t)|^2+K_1-\mu(t)]Q(t))^2 \frac{(\cN_+(t)-1)^2}{(N-1)^2} \\
&\le C \cN_+(t)^2  \frac{(\cN_+(t)-1)^2}{(N-1)^2} \le C \frac{\cN_+^4(t)}{N^2}.
\end{align*}
Consequently, 
\be \label{eq:j=1}
R_1 \1_+^{\le N} R_1 \le R_1^2 \le  C \frac{\cN_+^4(t)}{N^2}.
\ee
\medskip 

\noindent $\bf {j=2}$. Note that $v:=Q(t)[w_N*|u(t)|^2]u(t)$ satisfies
\begin{align*}
\| v \|_{L^2(\R^3)} &\le \| w_N*|u(t)|^2]u(t) \|_{L^2(\R^3)} \\
&\le \|w_N*|u(t)|^2 \|_{L^\infty(\R^3)} \|u(t)\|_{L^2(\R^3)} \le C.
\end{align*}
Using $a(v)a^*(v)= \| v\|^2 + a^*(v)a(v) \le \|v\|^2 ( \cN_+(t)+1)$, we get
\begin{align*}
 R_2 \1_+^{\le N} R_2^* &= 4 \frac{\cN_+(t)\sqrt{N-\cN_+(t)}}{N-1} a(v)\1_+^{\le N} a^*(v) \frac{\cN_+(t)\sqrt{N-\cN_+(t)}}{N-1} \\
& = 4 \frac{\cN_+(t)\sqrt{N-\cN_+(t)}}{N-1} \1^{\le N-1} a(v) a^*(v)  \1^{\le N-1} \frac{\cN_+(t)\sqrt{N-\cN_+(t)}}{N-1} \\
& \le C \frac{\cN_+(t)\sqrt{N-\cN_+(t)}}{N-1} \1^{\le N-1} (\cN_+ +1) \1^{\le N-1} \frac{\cN_+(t)\sqrt{N-\cN_+(t)}}{N-1} \\
& \le C \frac{\cN_+^3}{N}.
\end{align*}
Similarly, using
\begin{align*}
R_2^* &= -2 a^*(v) \frac{\cN_+(t)\sqrt{N-\cN_+(t)}}{N-1}  \\
&= -2   \frac{(\cN_+(t)- 1)\sqrt{N-\cN_+(t)+1}}{N-1} a^*(v)
\end{align*}
we find that
\begin{align*}
 R_2 \1_+^{\le N}R_2^* &= 4 \frac{(\cN_+(t)-1)\sqrt{N-\cN_+(t)+1}}{N-1}  a^*(v) \1_+^{\le N} a(v)  \\
&\quad \times \frac{(\cN_+(t)-1)\sqrt{N-\cN_+(t)+1}}{N-1}  \\
&= 4 \frac{(\cN_+(t)-1)\sqrt{N-\cN_+(t)+1}}{N-1} \1^{\le N+1} a^*(v)  a(v)  \\
&\quad \times \1^{\le N+1}  \frac{(\cN_+(t)-1)\sqrt{N-\cN_+(t)+1}}{N-1}  \\
&\le  C \frac{(\cN_+(t)-1)\sqrt{N-\cN_+(t)+1}}{N-1} \1^{\le N+1}  (\cN_+(t)+1)  \\
&\quad \times \1^{\le N+1} \frac{(\cN_+(t)-1)\sqrt{N-\cN_+(t)+1}}{N-1} \\
& \le C \frac{\cN_+^3(t)}{N}.
\end{align*}
Thus
\begin{align} \label{eq:j=2}
R_2^* \1_+^{\le N} R_2 + R_2 \1_+^{\le N} R_2^* \le C \frac{\cN_+^3(t)}{N}. 
\end{align}

\noindent $\bf {j=3}$. We can write $R_3 = \mathbb{K_{\rm cr}} g(\N_+(t))$ where 
\begin{align} \label{eq:def-Kcr}
\mathbb{K_{\rm cr}}:= \iint K_2(t,x,y) a^*_x a^*_y\, \mathrm{d} x \, \mathrm{d} y 
\end{align}
and
$$
g(\N_+(t)):= \frac{\sqrt{(N-\N_+(t) )(N-\N_+(t) -1)}}{N-1}-1 .
$$
Let us show that 
\begin{align}
\mathbb{K_{\rm cr}}  \mathbb{K^*_{\rm cr}} + \mathbb{K^*_{\rm cr}}  \mathbb{K_{\rm cr}} &\le 2 \|K_2(t, \cdot, \cdot)\|^2_{L^2} (\cN_+(t)+1)^2 \le  CN^{3\beta }(\cN_+(t)+1)^2 .\label{eq:Kcr*-Kcr} 
\end{align}
Here we have used 
\begin{align} \label{eq:K2-L2}
\|K_2(t,\cdot, \cdot )\|_{L^2}^2 &= \|Q(t)\otimes Q(t) \widetilde K_2(t,\cdot, \cdot )\|_{L^2}^2  \le \|\widetilde K_2(t,\cdot, \cdot )\|_{L^2}^2  \\
&= \iint |u(t,x)|^2 |w_N(x-y)|^2 |u(t,y)|^2 \, \mathrm{d} x \, \mathrm{d} y \nn\\
&\le \|u(t)\|_{L^\infty(\R^3)}^2 \|w_N\|_{L^2(\R^3)}^2 \|u(t)\|_{L^2(\R^3)}^2 \le C N^{3\beta}.\nn
\end{align}
In fact, \eqref{eq:Kcr*-Kcr} is well-known (see e.g. \cite[eq. (23) and (26)]{NamSei-15}), but we offer an alternative proof below because the proof strategy will be used later to control $R_5$. First, using the Cauchy-Schwarz inequality $XY+Y^*X^*\le XX^*+Y^*Y$ we get
\begin{align} \label{eq:KcrKcr*}
&\quad\ \mathbb{K}_{\rm cr} \mathbb{K}_{\rm cr}^* = \iiiint K_2(t,x,y) \overline{K_2(t,x',y')} a^*_x a^*_y a_{x'} a_{y'} \, \mathrm{d} x \mathrm{d} y \mathrm{d} x' \mathrm{d} y' \\
& = \frac{1}{2} \iiiint \left(  K_2(t,x,y) \overline{K_2(t,x',y')} a^*_x a^*_y a_{x'} a_{y'} + {\rm h.c.} \right) \, \mathrm{d} x \mathrm{d} y  \mathrm{d} x' \mathrm{d} y' \nn\\
&\le \frac{1}{2} \iiiint \Big(  |K_2(t,x',y')|^2 a^*_x a^*_y a_x a_y \nn\\
& \hspace{6em} + |K_2(t,x,y)|^2  a^*_{x'} a^*_{y'} a_{x'} a_{y'} \Big) \, \mathrm{d} x \mathrm{d} y  \mathrm{d} x' \mathrm{d} y'\nn\\
& = \|K_2 (t,\cdot, \cdot)\|_{L^2(\R^3\times \R^3)} ^2 \cN_+(t) (\cN_+(t)-1) .\nn
\end{align}
Here we have denoted $X+{\rm h.c.}=X+X^*$ for short (h.c. stands for Hermitian conjugate). Moreover,
\begin{align*}
\mathbb{K}_{\rm cr}^* \mathbb{K}_{\rm cr} &= \iiiint \overline{K_2(t,x,y)} {K_2(t,x',y')} a_x a_y a^*_{x'} a^*_{y'} \, \mathrm{d} x \mathrm{d} y \mathrm{d} x' \mathrm{d} y' \\
& = \iiiint \overline{K_2(t,x,y)} {K_2(t,x',y')} a^*_{x'} a^*_{y'} a_x a_y  \, \mathrm{d} x \mathrm{d} y \mathrm{d} x' \mathrm{d} y' \\
& \quad +  \iiiint \overline{K_2(t,x,y)} {K_2(t,x',y')} [a_x a_y,a^*_{x'} a^*_{y'}] \, \mathrm{d} x \mathrm{d} y \mathrm{d} x' \mathrm{d} y' 
.
\end{align*}
The first term of the right side of nothing but $\mathbb{K}_{\rm cr} \mathbb{K}_{\rm cr}^*$ which has been already estimated. For the second term, using $K_2(t,x,y)=K_2(t,y,x)$ and
\begin{align*}
[a_x a_y,a^*_{x'} a^*_{y'}]  &= \delta(x'-y) a_{y'}^* a_x + \delta(x-x') a_{y'}^* a_y + \delta (y-y') a_{x'}^* a_{x}   \\
&\quad +\delta (x-y') a^*_{x'} a_y +\delta (x'-y) \delta (x-y') + \delta(x-x')\delta(y-y')
\end{align*} 
we find that
\begin{align*}
 &\quad\ \iiiint \overline{K_2(t,x,y)} {K_2(t,x',y')} [a_x a_y,a^*_{x'} a^*_{y'}] \, \mathrm{d} x \mathrm{d} y \mathrm{d} x' \mathrm{d} y' \\
 &= 4 \iiint  \overline{K_2(t,x,y)} K_2(t,y,y') a_{y'}^* a_x \mathrm{d} x \mathrm{d} y \mathrm{d} y' + 2 \iint |K_2(t,x,y)|^2 \mathrm{d} x \mathrm{d} y \\
 & = 4 \dGamma( K_2(t) K_2^*(t) )  + 2\| K_2(t, \cdot, \cdot)\|^2_{L^2(\R^3\times \R^3)}, 
\end{align*} 
and hence
\begin{align} \label{eq:Kcr*Kcr}
\mathbb{K}_{\rm cr}^* \mathbb{K}_{\rm cr} &=  \mathbb{K}_{\rm cr} \mathbb{K}_{\rm cr}^* + 4 \dGamma( K_2(t) K_2^*(t) )  + 2\| K_2(t, \cdot, \cdot)\|^2_{L^2(\R^3\times \R^3)}.
\end{align}
Here we have denoted by $K_2(t):\overline{\gH}\to \gH$ the operator with kernel $K_2(t,x,y)$. Putting differently, $K_2(t)=Q(t)\widetilde K_2(t) \overline{Q(t)}$ with $\widetilde K_2(t): \overline{\gH}\to \gH$ the operator with kernel $u(x)w_N(x-y)u(y)$. Similarly to \eqref{eq:bound-K1} we have
\be \label{eq:bound-K2-norm}
\|K_2(t)\|= \| Q(t)\widetilde K_2(t) \overline{Q(t)}\| \le \|\widetilde K_2(t)\| \le C\|u(t)\|_{L^\infty(\R^3)}^2 \le C.
\ee
Therefore, $\dGamma( K_2(t) K_2^*(t) ) \le C \cN_+(t)$. Thus \eqref{eq:Kcr*-Kcr} follows \eqref{eq:Kcr*Kcr} and \eqref{eq:KcrKcr*}.  

Now from $R_3= \mathbb{K_{\rm cr}} g(\cN_+(t))$, using \eqref{eq:Kcr*-Kcr}, \eqref{eq:K2-L2} and the simple estimates 
$$ \1_+^{\le N-2} g^2(\cN_+(t)) + \1_+^{\le N+2} g^2(\cN_+(t)-2)  \le 32 \frac{(\cN_+(t)+1)^2}{N^2} \1_+^{\le N}$$
we conclude that 
\begin{align} \label{eq:j=3}
&\quad\ R_3^* \1_+^{\le N} R_3 + R_3 \1_+^{\le N} R_3^* \\
&= g(\cN_+(t)) \mathbb{K_{\rm cr}}^*  \1_+^{\le N} \mathbb{K_{\rm cr}} g(\cN_+(t)) + \mathbb{K_{\rm cr}} g(\cN_+(t)) \1_+^{\le N}  g(\cN_+(t)) \mathbb{K^*_{\rm cr}} \nn \\
& = g^2(\cN_+(t)) \1_+^{\le N-2} \mathbb{K_{\rm cr}}^*\mathbb{K_{\rm cr}} + g^2(\cN_+(t)-2)\1_+^{\le N+2} \mathbb{K_{\rm cr}} \mathbb{K^*_{\rm cr}} \nn\\
& \le \left( g^2(\cN_+(t)) \1_+^{\le N-2}  +  g^2(\cN_+(t)-2)\1_+^{\le N+2} \right) \Big( \mathbb{K_{\rm cr}}^*\mathbb{K_{\rm cr}} + \mathbb{K_{\rm cr}} \mathbb{K^*_{\rm cr}} \Big) \nn\\
& \le C N^{3\beta-2} (\cN_+(t)+1)^4.\nn
\end{align}
Here we have also used the fact that  $\mathbb{K_{\rm cr}}^*\mathbb{K_{\rm cr}}$ and $\mathbb{K_{\rm cr}} \mathbb{K^*_{\rm cr}}$ commute with $\cN_+(t)$. 
\medskip 

\noindent $\bf {j=4}$. By \eqref{eq:second-quantization-Wxy}, for every two-body operator $W\ge 0$ one has
\begin{align} \label{eq:second-quantization-W>0}
& \frac{1}{2} \iiiint W(x,y;x',y') 
a^*_x a^*_y a_{x'} a_{y'}\,\mathrm{d} x \mathrm{d} y \mathrm{d} x' \mathrm{d} y'
 \ge 0
\end{align}
where $W(x,y;x',y')$ is the kernel of $W$. Consequently, 
\begin{align*}
\pm R_{4} &= \pm \frac{1}{2(N-1)}\iiiint({Q(t)}\otimes{Q(t)}w_N Q(t)\otimes Q(t))(x,y;x',y') \\
&  \qquad \qquad \qquad \qquad \quad   \times  a^*_x a^*_y a_{x'} a_{y'}\,\mathrm{d} x \mathrm{d} y \mathrm{d} x' \mathrm{d} y', \\
&\le \frac{\|w_N\|_{L^\infty(\R^3)}}{2(N-1)}\iiiint( Q(t)\otimes Q(t))(x,y;x',y')a^*_x a^*_y a_{x'} a_{y'}\,\mathrm{d} x \mathrm{d} y \mathrm{d} x' \mathrm{d} y' \\
& \le C N^{3\beta-1} \cN_+^2(t). 
\end{align*}
Here we have used the simple bound $\|w_N \|_{L^\infty(\R^3)}= N^{3\beta} \|w\|_{L^\infty(\R^3)}$ in the last estimate. Since $R_4$ commutes with $\cN_+$ we find that
\be \label{eq:j=4}
R_4 \1_+^{\le N} R_4 \le R_4^2 \le C N^{6\beta-2} \cN_+^4(t).
\ee

\noindent $\bf {j=5}$. This is the most complicated case. Recall that 
$$
R_5 = \frac{\sqrt{N-\cN_+(t)}}{N-1} R_6 
$$
with
\begin{align*}
R_6 := \iiiint (1\otimes Q(t) w_N Q(t)\otimes Q(t))(x,y;x',y') \overline{u(t,x)} a_y^* a_{x'}a_{y'}\, \mathrm{d} x \mathrm{d} y \mathrm{d} x' \mathrm{d} y'.
\end{align*}
We will show that
\be \label{eq:R6*R6+R6R6*}
R_6^* R_6 + R_6 R_6^* \le C N^{3\beta} \cN^3_+(t).
\ee
\begin{remark} \label{re:w<0} Note that in the following we use $w\ge 0$, but the proof can be adapted easily to cover any $w$ without the sign assumption by decomposing $w=w_+-w_-$ and treating each term $w_\pm$ separately.
\end{remark}
 
We will write $Q=Q(t)$ and $u=u(t)$ for short. We denote
$$Q(x,y)= \delta(x-y)- u(x) \overline{u(y)},$$
the kernel of $Q$, and introduce the operators
\be 
\label{eq:def-bx-Bx} b_x := \int_{\R^3} Q(x,y) a_y \mathrm{d} y, \quad B_x := \int_{\R^3} w_N(x-y) b_y^* b_y \mathrm{d} y \ge 0.
\ee
The advantage of these notations is that using  
\begin{align*} (1\otimes Q w_N Q\otimes Q)(x,y;x',y') = \int Q(y,y_1) w_N(x-y_1) Q(x,x') Q(y_1,y') \, \mathrm{d} y_1
\end{align*}
we can rewrite
\begin{align} \label{eq:R6-reform}
R_6 &= \iint\!\!\!\iiint    Q(y,y_1) w_N(x-y_1)\\
&\hspace{5em}\times Q(x,x') Q(y_1,y') \overline{u(x)} a^*_y a_{x'} a_{y'} \,\mathrm{d} x \mathrm{d} y \mathrm{d} x' \mathrm{d} y' \nn\\
&= \iint \overline{u(x)} w_N(x-y_1) b^*_{y_1} b_{y_1} b_x \, \mathrm{d} x  \mathrm{d} y_1 = \int \overline{u(x)} B_x b_x. \nn
\end{align}
Let us list some basic properties of $b_x$ and $B_x$ defined in \eqref{eq:def-bx-Bx}. From the CCR~\eqref{eq:ccr-x} it is straightforward to see that
\be \label{eq:bx-prop}
[b_x,b_y]=0=[b^*_x, b^*_y], \quad [b_x,b_y^*] = Q(x,y) = \delta(x-y)- u(x) \overline{u(y)}.
\ee
Moreover,
\begin{align} \label{eq:int-bx}
\int b_x^* b_x \, \mathrm{d} x &= \iiint Q(z,x)  Q(x,y) a^*_z a_y \, \mathrm{d} x \mathrm{d} y \mathrm{d} z \\
&= \iint Q(z,y) a^*_z a_y \, \mathrm{d} y \mathrm{d} z = \dGamma(Q)= \cN_+(t) \nn
\end{align}
and consequently,
\begin{align} B_x &\le \|w_N\|_{L^\infty} \int b_y^* b_y \mathrm{d} y \le C N^{3\beta} \cN_+(t), \label{eq:Bx<=} \\
\int B_x \mathrm{d} x & = \int \left( \int w_N (x-y) dx \right) b_y^* b_y \mathrm{d} y \le C \cN_+(t), \label{eq:int-Bx<=} \\
\int B_x^2 \mathrm{d} x & \le C N^{3\beta} \cN_+(t) \int B_x \mathrm{d} x \le C N^{3\beta} \cN^2_+(t). \label{eq:int-BxBx<=}
\end{align}
In the last estimate we have used the fact that $B_x$ commutes with $\cN_+(t)$.
 
Now using \eqref{eq:R6-reform} we can write
\begin{align} \label{eq:R6-R6*-decomp}
R_6 R_6^* &= \iint \overline{u(x)} {u(y)} B_x b_x b_y^* B_y \, \mathrm{d} x \mathrm{d} y  \\
&= \iint \overline{u(x)} {u(y)} B_x  b_y^* b_x B_y \, \mathrm{d} x \mathrm{d} y\nn\\
&\quad + \iint \overline{u(x)} {u(y)} B_x [b_x, b_y^*] B_y \, \mathrm{d} x \mathrm{d} y .\nn
\end{align}
The second term of \eqref{eq:R6-R6*-decomp} can be estimated easily using \eqref{eq:bx-prop} and \eqref{eq:int-BxBx<=}:
\begin{align} \label{eq:R6-R6*-ineq1}
&\quad\ \iint \overline{u(x)} {u(y)} B_x [b_x, b_y^*] B_y \mathrm{d} x \mathrm{d} y \\
&= \int |u(x)|^2 B_x^2 \mathrm{d} x -  \left( \int |u(x)|^2 B_x \mathrm{d} x \right)^2 \nn\\
& \le \|u\|_{L^\infty(\R^3)}^2 \int B_x^2 \mathrm{d} x  \le C N^{3\beta} \cN_+^2(t).\nn  
\end{align}
To estimate the first term of \eqref{eq:R6-R6*-decomp}, we employ the Cauchy-Schwarz inequality $XY+Y^*X^*\le XX^*+Y^*Y$ and obtain
\begin{align} \label{eq:R6-R6*-ineq2}
&\quad\ \iint \overline{u(x)} {u(y)} B_x b_y^* b_x B_y \, \mathrm{d} x \mathrm{d} y \\
& = \frac{1}{2} \iint \left( \overline{u(x)} {u(y)} B_x  b_y^* b_x B_y + {\rm h.c.} \right) \, \mathrm{d} x \mathrm{d} y \nn\\
& \le  \frac{1}{2} \iint  \left(|u(x)|^2 B_x b_y^* b_y B_x + |u(y)|^2 B_y b_x^* b_x B_y \right) \, \mathrm{d} x \mathrm{d} y \nn\\
& \le \|u\|^2_{L^\infty(\R^3)} \iint B_x b_y^* b_y B_x \, \mathrm{d} x \mathrm{d} y \le C N^{3\beta}\cN_+^3(t). \nn
\end{align}
Here the last estimate follows from \eqref{eq:int-bx}, \eqref{eq:int-BxBx<=} and the fact that $B_x$ commutes with $\cN_+$. Thus, in summary, from \eqref{eq:R6-R6*-decomp}-\eqref{eq:R6-R6*-ineq1}-\eqref{eq:R6-R6*-ineq2} we get
\begin{align} \label{eq:R6R6*-final}
R_6 R_6^* \le C N^{3\beta}\cN_+^3(t).
\end{align} 
Now we  estimate 
\begin{align} \label{eq:R6*R6}
R_6^* R_6 &= \iint {u(x)} \overline {u(y)} b_x^* B_x B_y b_y \, \mathrm{d} x \mathrm{d} y \\
&= \iint {u(x)} \overline {u(y)} b_x^* B_y B_x b_y \, \mathrm{d} x \mathrm{d} y \nn\\
&\quad + \iint {u(x)} \overline {u(y)} b_x^* [B_x,B_y] b_y \, \mathrm{d} x \mathrm{d} y.\nn
\end{align}
The first term of \eqref{eq:R6*R6} can be bounded similarly to the first term of \eqref{eq:R6-R6*-decomp}. Indeed, by the Cauchy-Schwarz inequality $XY+Y^*X^*\le XX^*+Y^*Y$ and  \eqref{eq:int-bx}, \eqref{eq:int-BxBx<=}, we have
\begin{align} \label{eq:R6*R6-ineq1}
&\quad\ \iint {u(x)} \overline{u(y)} b_x^* B_y B_x b_y \, \mathrm{d} x \mathrm{d} y \\
& =\frac{1}{2}\iint \Big( {u(x)} \overline{u(y)} b_x^* B_y B_x b_y + {\rm h.c.} \Big)  \, \mathrm{d} x \mathrm{d} y \nn\\
& \le  \frac{1}{2}\iint \Big( |u(x)|^2 b_x^* B_y^2 b_x + |u(y)|^2 b_y^* B_x^2 b_y \Big) \, \mathrm{d} x \mathrm{d} y \nn\\
& \le \|u\|_{L^\infty(\R^3)}^2 \iint b_x^* B_y^2 b_x \mathrm{d} x \mathrm{d} y \nn\\
& \le C N^{3\beta}  \int b_x^* \cN^2_+(t) b_x \mathrm{d} x \nn\\
& =  C N^{3\beta}  \int  b_x^*  b_x (\cN_+(t)+ 1)^2  \mathrm{d} x\nn\\
&\le C N^{3\beta} \cN^3_+(t).\nn
\end{align}
To estimate the second term of \eqref{eq:R6*R6}, we use 
\begin{align*}
\left[b_{x'}^*b_{x'}, b_{y'}^* b_{y'}  \right] &= b_{x'}^* [b_{x'},b_{y'}^*] b_{y'} - b^*_{y'}[b_{y'},b_{x'}^*] b_{x'}  \\
& = Q(x',y')  b_{x'}^*  b_{y'} - Q(y',x') b^*_{y'} b_{x'}   \\
&= - u(x')\overline{u(y')} b_{x'}^*  b_{y'}  + u(y')\overline{u(x')} b^*_{y'} b_{x'}
\end{align*}
and write
\begin{align*}
&\quad\ \iint  {u(x)} \overline{u(y)} b_x^* [B_x,B_y]b_y \mathrm{d} x \mathrm{d} y \\
& = \iint  {u(x)} \overline{u(y)} b_x^* \left[\int w_N(x-x') b_{x'}^*b_{x'} \mathrm{d} x' , \int w_N(y-y') b_{y'}^* b_{y'} \mathrm{d} y'  \right] b_y \, \mathrm{d} x \mathrm{d} y \\
& = \iiiint  {u(x)} \overline{u(y)} w_N(x-x') w_N(y-y') b_x^* [b_{x'}^*b_{x'},b_{y'}^* b_{y'}] b_y  \, \mathrm{d} x \mathrm{d} y \mathrm{d} x' \mathrm{d} y' \\
& = - \iiiint  {u(x)} \overline{u(y)} w_N(x-x') w_N(y-y') u(x')  \overline{u(y')}  b_x^* b_{x'}^*  b_{y'}  b_y \mathrm{d} x \mathrm{d} y \mathrm{d} x' \mathrm{d} y' \\
&\quad + \iiiint  {u(x)} \overline{u(y)} w_N(x-x') w_N(y-y')   u(y')  \overline{u(x')}  b_x^* b^*_{y'} b_{x'}  b_y \mathrm{d} x \mathrm{d} y \mathrm{d} x' \mathrm{d} y'
\end{align*}
The term with the minus sign is negative because
\begin{align*}
& \iiiint  {u(x)} \overline{u(y)} w_N(x-x') w_N(y-y') u(x')  \overline{u(y')}  b_x^* b_{x'}^*  b_{y'}  b_y \mathrm{d} x \mathrm{d} y \mathrm{d} x' \mathrm{d} y' \\
=  &
 \left( \iint  {u(x)} w_N(x-x') u(x') b_x^* b_{x'}^* \mathrm{d} x \mathrm{d} x' \right) \times \\
& \times \left( \iint  \overline{u(y)}w_N(y-y') \overline{u(y')} b_{y'}  b_y  \mathrm{d} y \mathrm{d} y'  \right)\\
 = &  AA^* \ge  0
\end{align*}
with 
$$
A= \iint  {u(x)} w_N(x-x') u(x') b_x^* b_{x'}^* \mathrm{d} x \mathrm{d} x' .
$$
Thus
\begin{align*}
& \iint  {u(x)} \overline{u(y)} b_x^* [B_x,B_y]b_y \mathrm{d} x \mathrm{d} y \\
& \le \iiiint  {u(x)} \overline{u(y)} w_N(x-x') w_N(y-y')   u(y')  \overline{u(x')}  b_x^* b^*_{y'} b_{x'}  b_y \mathrm{d} x \mathrm{d} y \mathrm{d} x' \mathrm{d} y'.
\end{align*}
Next, we use 
$$b_x^* b^*_{y'} b_{x'}  b_y=b_x^*b_{x'}b_{y'}^*b_y - Q(x',y')b_x^* b_y.$$
For the term involving $b_x^*b_{x'}b_{y'}^*b_y $
we have 
\begin{align*} 
\iiiint  {u(x)} \overline{u(y)} w_N(x-x') w_N(y-y')   u(y')  \overline{u(x')}  b_x^*b_{x'}b_{y'}^*b_y  \mathrm{d} x \mathrm{d} y \mathrm{d} x' \mathrm{d} y' = B^2
\end{align*}
where 
$$B:=\iint u(x)w_N(x-x')\overline{u(x')}b_x^* b_{x'} \mathrm{d} x \mathrm{d} x'.$$
By the Cauchy-Schwarz inequality and \eqref{eq:int-bx}, we can estimate
\begin{align*} 
\pm B &\le \frac{1}{2}\iint |w_N(x-x')|   \Big[ |u(x)|^2 b_x^*b_x + |u(x')|^2 b^*_{x'}b_{x'}  \Big] \mathrm{d} x \mathrm{d} x'\\
&\le C\|u(t)\|_{L^\infty}^2 \cN_+.
\end{align*}
Moreover, since $B$ commutes with $\cN_+$ we thus obtain
$$B^2\leq C\|u(t)\|_{L^\infty}^4 \cN_+^2(t).$$
It remains to bound the term involving $Q(x',y')b_x^* b_y$:
\begin{align*}
&\iiiint  {u(x)} \overline{u(y)} w_N(x-x') w_N(y-y')   u(y')  \overline{u(x')} Q(x',y')b_x^* b_y \mathrm{d} x \mathrm{d} y \mathrm{d} x' \mathrm{d} y' \\
=&\iiiint  {u(x)} \overline{u(y)} w_N(x-y') w_N(y-y')  | u(y')|^2 b_x^* b_y \mathrm{d} x \mathrm{d} y \mathrm{d} y' \\
&-\iiiint  {u(x)} \overline{u(y)} w_N(x-x') w_N(y-y')   |u(y')|^2  |u(x')|^2 b_x^* b_y   \mathrm{d} x \mathrm{d} y \mathrm{d} x' \mathrm{d} y'
\end{align*}
Now, by the Cauchy-Schwarz inequality and \eqref{eq:int-bx} again we have
\begin{align*}
& \pm\iiiint  {u(x)} \overline{u(y)} w_N(x-y') w_N(y-y')  | u(y')|^2 b_x^* b_y \mathrm{d} x \mathrm{d} y \mathrm{d} y'\\
&= \int \Big( \pm \iint  {u(x)} \overline{u(y)} w_N(x-y') w_N(y-y') b_x^* b_y \mathrm{d} x \mathrm{d} y \Big) |u(y')|^2 \mathrm{d} y'
\\
&\leq \int \frac{1}{2}\Big( \iint \Big[ |u(x)|^2 |w_N(y-y')|^2 b_x^* b_x + |u(y)|^2  |w_N(x-y')|^2\Big]    b_y^* b_y \mathrm{d} x \mathrm{d} y \Big) | u(y')|^2 \mathrm{d} y'\\
&\leq \|u(t)\|^2_{L^\infty} \|w_N\|_{L^2}^2 \cN_+(t)\\
&\leq CN^{3\beta} \|u(t)\|^2_{L^\infty} \cN_+ (t)
\end{align*}
and 
\begin{align*}
&\pm \iiiint  {u(x)} \overline{u(y)} w_N(x-x') w_N(y-y')   |u(y')|^2  |u(x')|^2 b_x^* b_y    \mathrm{d} x  \mathrm{d} y  \mathrm{d} x'  \mathrm{d} y' \\
&= \iint \Big(\pm \iint  u(x) \overline{u(y)} w_N(x-x') w_N(y-y')      b_x^* b_y    \mathrm{d} x  \mathrm{d} y  \Big) |u(x')|^2 |u(y')|^2 \mathrm{d} x'   \mathrm{d} y'\\
&\le \iint \frac{1}{2} \Big( \iint |w_N(x-x')| |w_N(y-y')| \Big[ |u(x)|^2 b_x^* b_x + |u(y)|^2 b_y^*b_y \Big]  \mathrm{d} x  \mathrm{d} y  \Big) \times \\
&\quad \times   |u(x')|^2 |u(y')|^2  \mathrm{d} x'   \mathrm{d} y'\\
& \leq \|w_N\|_{L^1}^2 \|u(t)\|^6_{L^\infty} \cN_+^2 (t).
\end{align*}

In summary, we have proved that 
\begin{align} \label{eq:R6*R6-ineq2}
\iint  {u(x)} \overline{u(y)} b_x^* [B_x,B_y]b_y \mathrm{d} x \mathrm{d} y  \le  CN^{3\beta} \cN^2_+(t).
\end{align}
From \eqref{eq:R6*R6}-\eqref{eq:R6*R6-ineq1}-\eqref{eq:R6*R6-ineq2} we find that
\begin{align} \label{eq:R6*R6-final}
R_6^* R_6 \le CN^{3\beta} \cN^3_+(t).
\end{align}
From \eqref{eq:R6R6*-final} and \eqref{eq:R6*R6-final}, it follows that
\begin{align}\label{eq:j=5}
&\quad\ R_{5} \1_+^{\le N} R_5^* +  R_5^* \1_+^{\le N} R_5 \\
& = \frac{\sqrt{N-\cN_+(t)}}{N-1} R_6 \1_+^{\le N}  R_6^* \frac{\sqrt{N-\cN_+(t)}}{N-1}  + R_6^*  \frac{(N-\cN_+(t))\1_+^{\le N}}{(N-1)^2} R_6 \nn\\
& \le \frac{\sqrt{N-\cN_+(t)}}{N-1} R_6 R_6^* \frac{\sqrt{N-\cN_+(t)}}{N-1} + \frac{N}{(N-1)^2} R_6^*R_6 \nn\\
& \le  C N^{3\beta-1} \cN_+^3(t).\nn
\end{align}

\noindent{\bf Conclusion.} Substituting \eqref{eq:j=1}, \eqref{eq:j=2}, \eqref{eq:j=3}, \eqref{eq:j=4} and \eqref{eq:j=5} into \eqref{eq:RN-R1-R5} we get the desired bound
\[
R^2(t)\le C \Big( N^{6\beta-2} \cN_+^4(t) + N^{3\beta-1}\cN_+^3(t) + N^{3\beta-2} \Big).\vspace{-1em}
\]
\end{proof}

\section{Evolution generated by quadratic Hamiltonians} \label{sec:evo-quasi-free}

\subsection{A general result} We have the following general result on the evolution generated by a quadratic Hamiltonian.

\begin{proposition}[Evolution generated by quadratic Hamiltonians] \label{lem:evol-quad-abstract} 
Let $\{\bH(t)\}$, $t\in [0,1]$, be a family of quadratic Hamiltonians on $\cF(\gH)$ of the form
$$ \bH(t) := \dGamma(h(t)) + \frac{1}{2} \iint \left( k(t,x,y)a^*_x a^*_y + \frac{1}{2} \overline{k(t,x,y)} a_x a_y \right) \mathrm{d} x  \mathrm{d} y$$
where $h(t)=h_1+h_2(t):\gH\to \gH$, with $h_1>0$ time-independent and $h_2(t)$ bounded, and with $k(t):\overline{\gH}\to \gH$ Hilbert-Schmidt with symmetric kernel\linebreak $k(t,x,y)=k(t,y,x)$. Assume that 
$$ \sup_{t\in [0,1]} \Big( \|h_2(t)\| + \| \partial_t h_2(t) \| + \|k(t,\cdot, \cdot)\|_{\gH^2} + \| \partial_t k(t,\cdot,\cdot) \|_{\gH^2} \Big) <\infty.$$
Then for every normalized vector $\Phi(0)\in \cF(\gH)$ satisfying $\langle \Phi(0), \cN \Phi(0)\rangle <\infty$, the equation
\be \label{eq:eq-Phit-abstract}
\begin{cases}
i \partial_t \Phi(t) =  \bH(t) \Phi(t), \\
\Phi(t=0) = \Phi(0)
\end{cases}
\ee
has a unique solution $\Phi(t)$ with $t\in [0,1]$ and the following statements hold true.
\begin{itemize}
\item[(i)] The pair of density matrices $(\gamma(t),\alpha(t))=(\gamma_{\Phi(t)},\alpha_{\Phi(t)})$ satisfies 
\be \label{eq:linear-eq-abstract}
\begin{cases}
i\partial_t \gamma = h \gamma - \gamma h + k\alpha^* - \alpha k^*, \\
i\partial_t \alpha =  h \alpha + \alpha h^{\rm T} + k  + k \gamma^{\rm T} + \gamma k, \\
\gamma(t=0) = \gamma(0), \quad \alpha(t=0)=\alpha(0).
\end{cases}
\ee
Moreover, $(\gamma_{\Phi(t)},\alpha_{\Phi(t)})$ is the unique solution to \eqref{eq:linear-eq-abstract} under the constraints 
\be 
\label{eq:assump-HS-alpha-gamma} \gamma=\gamma^*,~~\alpha=\alpha^{\rm T}, ~~ \sup_{t\in [0,1]} \Big( \Tr(\alpha(t)\alpha^*(t))  + \Tr(\gamma^2(t)) \Big) < \infty.
\ee

\item[(ii)] For every decomposition $k=k_1+k_2$, we have
\begin{align}\label{eq:bound-alpha-HS}
\|\alpha(t)\|_{\rm HS}^2+ \|\gamma(t)\|_{\rm HS}^2 \le \Theta(t) + \int_0^t \exp\Big( \int_s^t \xi(r) \mathrm{d} r \Big) \xi(s) \Theta (s) \mathrm{d} s
\end{align}
for all $t\in [0,1]$, where 
\begin{align*} 
\xi(t)&:=6\Big(\|k(t)\|+ \|h_2(t)\| \Big) ,\\
\Theta(t) &:= 2\|\alpha(0)\|_{\rm HS}^2 + 2\|\gamma(0)\|_{\rm HS}^2 \\
&\quad+ 2\bigg( \|\cL^{-1} k_1(t,\cdot, \cdot)\|_{L^2} + \|\cL^{-1} k_1(0,\cdot, \cdot)\|_{L^2}  \\[-1ex]
&\qquad\quad+\int_0^t \big( \|k_2(s,\cdot,\cdot)\|_{L^2}+ \|\cL^{-1} \partial_s k_1(s,\cdot, \cdot)\|_{L^2} \big) \mathrm{d} s \bigg)^2,\\
\cL &:= h_1\otimes 1 + 1\otimes h_1.
\end{align*}
\item[(iii)] If $\Phi(0)$ is a quasi-free state, then $\Phi(t)$ is a quasi-free state for all $t\in [0,1]$. In this case, 
\be \label{eq:bound-cN-abstract}
\langle \Phi(t), \cN \Phi(t) \rangle \le  \Theta_1(t) + \int_0^t \exp\left( \int_s^t \xi(r) \mathrm{d} r \right) \xi(s) \Theta_1 (s) \mathrm{d} s
\ee
where 
$$\Theta_1(t):=\Theta(t) -2\|\alpha(0)\|_{\rm HS}^2 - 2\|\gamma(0)\|_{\rm HS}^2 + 4\Big(1+\langle \Phi(0), \cN \Phi(0)\rangle \Big)^2.$$ 
\end{itemize}
\end{proposition}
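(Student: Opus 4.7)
The plan is to prove the proposition in three stages corresponding to items (i)--(iii).

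For part (i), well-posedness of $i\partial_t\Phi=\bH(t)\Phi$ on $\cF(\gH)$ is standard once one notes that $\bH(t)$ is the sum of a second-quantized self-adjoint operator and a bounded quadratic pairing with Hilbert--Schmidt kernel; the commutator estimates $\pm[\cN,\bH(t)]\lesssim\cN+1$ give an a priori bound on $\langle\Phi(t),(\cN+1)\Phi(t)\rangle$, and Kato's theory of time-dependent Hamiltonians produces a global solution. I would then derive \eqref{eq:linear-eq-abstract} directly by computing
$$i\partial_t\gamma(t,x,y)=\langle\Phi(t),[a_y^* a_x,\bH(t)]\Phi(t)\rangle,\qquad i\partial_t\alpha(t,x,y)=\langle\Phi(t),[a_y a_x,\bH(t)]\Phi(t)\rangle;$$
using the CCR, the two commutators are quadratic in the $a^{\#}$'s, so their expectations are again polynomials in $\gamma,\alpha,h,k$, which closes the system into exactly \eqref{eq:linear-eq-abstract}. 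Uniqueness under the HS-constraint \eqref{eq:assump-HS-alpha-gamma} is a Gr\"onwall argument on $\|\delta\gamma\|_{\rm HS}^2+\|\delta\alpha\|_{\rm HS}^2$ for the difference of two candidate solutions.

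For part (ii), the key idea is the ansatz $\tilde\alpha(t):=\alpha(t)-\cL^{-1}k_1(t)$, which is legitimate because $h_1>0$ makes $\cL$ invertible; since $h_1(\cL^{-1}k_1)+(\cL^{-1}k_1)h_1^{T}=k_1$, a direct substitution into the $\alpha$-equation gives
$$i\partial_t\tilde\alpha=h\tilde\alpha+\tilde\alpha h^{T}+F,\quad F:=k_2+h_2\cL^{-1}k_1+\cL^{-1}k_1\,h_2^{T}-i\partial_t\cL^{-1}k_1+k\gamma^{T}+\gamma k,$$
whose HS norm is controlled by the data and by $\|k(t)\|\|\gamma\|_{\rm HS}$. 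Differentiating $\|\tilde\alpha(t)\|_{\rm HS}^2+\|\gamma(t)\|_{\rm HS}^2$, the self-adjoint drift pieces (the $h$-commutators on $\gamma$ and the $h\tilde\alpha+\tilde\alpha h^{T}$ on $\tilde\alpha$) drop out under $\mathrm{Im}\,\Tr$, leaving a Gr\"onwall inequality with rate $\xi(t)$ and forcing of order $\xi(t)\Theta(t)$; this yields \eqref{eq:bound-alpha-HS} after using $\|\alpha\|_{\rm HS}\le\|\tilde\alpha\|_{\rm HS}+\|\cL^{-1}k_1\|_{\rm HS}$.

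For part (iii), quasi-freeness is preserved because quasi-free states are uniquely determined by their 2-point functions: given $\Phi(0)$ quasi-free, let $\tilde\Phi(t)$ denote the quasi-free state with density matrices $(\gamma(t),\alpha(t))$ solving \eqref{eq:linear-eq-abstract}; since on quasi-free vectors the Fock-space flow is determined by the evolution of its 2-point functions, $\tilde\Phi(t)$ satisfies $i\partial_t\tilde\Phi=\bH(t)\tilde\Phi$, and uniqueness from part (i) forces $\Phi(t)=\tilde\Phi(t)$. For the $\cN$-bound I would compute
$$\frac{d}{dt}\langle\Phi(t),\cN\Phi(t)\rangle=-2\,\mathrm{Im}\iint k(t,x,y)\overline{\alpha(t,x,y)}\,\mathrm{d} x\,\mathrm{d} y,$$
split $k=k_1+k_2$, bound $\langle k_2,\alpha\rangle$ directly by $\|k_2\|_{\rm HS}\|\alpha\|_{\rm HS}$, and for $\langle k_1,\alpha\rangle=\langle\cL^{-1}k_1,\cL\alpha\rangle$ substitute $\cL\alpha=i\partial_t\alpha-h_2\alpha-\alpha h_2^{T}-k(1+\gamma^{T})-\gamma k$ from the $\alpha$-equation and integrate the $\partial_t\alpha$ contribution by parts in time. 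Combining with the quasi-free inequality $\|\alpha(t)\|_{\rm HS}^2\le\langle\cN(t)\rangle(1+\langle\cN(t)\rangle)$ (from $\alpha\alpha^*\le\gamma(1+\gamma)$) and Gr\"onwall yields \eqref{eq:bound-cN-abstract}.

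The main obstacle is precisely this $\cL^{-1}$ treatment of $k$ in parts (ii) and (iii): in the paper's eventual application $\|k(t,\cdot,\cdot)\|_{L^2}\sim N^{3\beta/2}$ while $\|\cL^{-1}k_1(t,\cdot,\cdot)\|_{L^2}$ and $\|k_2(t,\cdot,\cdot)\|_{L^2}$ can be kept $O(1)$ for a good splitting, so trading one power of the free-dynamics generator $\cL$ on $k$ for a time derivative on $\alpha$ (either through the ansatz or through integration by parts) is what delivers the $N$-independent estimate required by Theorem~\ref{thm:main}.
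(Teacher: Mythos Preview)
Your part (i) matches the paper's Steps 1--3 exactly: Kato-type well-posedness via the commutator bound $\pm i[\cN,\bH(t)]\le C(\cN+1)$, derivation of \eqref{eq:linear-eq-abstract} by computing $[a_y^*a_x,\bH(t)]$ and $[a_xa_y,\bH(t)]$, and uniqueness by Gr\"onwall on $\|\delta\gamma\|_{\rm HS}^2+\|\delta\alpha\|_{\rm HS}^2$. The paper adds one technical care you should note: since $h_1$ is unbounded, the cancellation ``$\Tr(h\gamma^2-\gamma^2 h)=0$'' is justified by first projecting with $\1(h_1\le\Lambda)$ and then letting $\Lambda\to\infty$.

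Your part (ii) is a legitimate variant of the paper's Step 4, but your ansatz has a sign error. With $\tilde\alpha=\alpha-\cL^{-1}k_1$ the inhomogeneity becomes $k_1+k=2k_1+k_2$, not $k_2$; you need $\tilde\alpha=\alpha+\cL^{-1}k_1$ (or equivalently subtract $-\cL^{-1}k_1$) so that $-k_1+k=k_2$. The paper instead writes $\alpha=Y_1+Y_2$ with $Y_1$ the \emph{dynamical} solution of $i\partial_tY_1=h_1Y_1+Y_1h_1^{\rm T}+k$, $Y_1(0)=0$, and then applies Duhamel plus integration by parts in $s$ to the $k_1$-part of the integral. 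After the sign fix your static shift is equivalent and arguably slightly cleaner.

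Your part (iii) has a genuine gap. You write: ``let $\tilde\Phi(t)$ denote the quasi-free state with density matrices $(\gamma(t),\alpha(t))$''. A \emph{pure} quasi-free state with prescribed $(\gamma,\alpha)$ exists only if $\alpha\alpha^*=\gamma(1+\gamma)$ and $\gamma\alpha=\alpha\gamma^{\rm T}$, and you have not shown that these algebraic constraints are propagated by \eqref{eq:linear-eq-abstract}. Your justification that $\tilde\Phi$ then solves the Schr\"odinger equation (``on quasi-free vectors the Fock-space flow is determined by the evolution of its 2-point functions'') presupposes that the flow preserves quasi-freeness, which is exactly the claim. The paper closes this by setting $Y_3:=\gamma+\gamma^2-\alpha\alpha^*$ and $Y_4:=\gamma\alpha-\alpha\gamma^{\rm T}$, checking from \eqref{eq:linear-eq-abstract} that $(Y_3,Y_4)$ satisfy a \emph{homogeneous} linear system of the same type as in the uniqueness step, and concluding $Y_3\equiv Y_4\equiv 0$ from $Y_3(0)=Y_4(0)=0$.

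For the $\cN$-bound you take a detour (differentiate $\Tr\gamma$, integrate by parts in time against $\cL^{-1}k_1$). The paper's route is much shorter: once $\alpha\alpha^*=\gamma(1+\gamma)$ is known, $\langle\Phi(t),\cN\Phi(t)\rangle=\Tr\gamma(t)\le\Tr(\alpha\alpha^*)=\|\alpha(t)\|_{\rm HS}^2$, and \eqref{eq:bound-cN-abstract} follows directly from \eqref{eq:bound-alpha-HS} together with $\|\alpha(0)\|_{\rm HS}^2+\|\gamma(0)\|_{\rm HS}^2\le 2(1+\langle\Phi(0),\cN\Phi(0)\rangle)^2$. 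Your integration-by-parts scheme also produces a term $\langle\cL^{-1}k_1,k\rangle$ whose imaginary part you would still need to control without invoking $\|k\|_{\rm HS}$; this is avoidable by the paper's argument.
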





Before proving Proposition \ref{lem:evol-quad-abstract}, let us recall some well-known properties of density matrices.  

\begin{lemma}[Density matrices] \label{lem:dm} Let $\Psi$ be a normalized vector in $\cF(\gH)$ with $\langle \Psi, \cN \Psi \rangle <\infty$ and let $(\gamma_\Psi,\alpha_\Psi)$ be its density matrices. Then 
\be \label{eq:gamma-alpha-ineq}
\gamma_\Psi \ge 0 , \quad  \Tr(\gamma_\Psi)=\langle \Phi, \cN \Phi\rangle, \quad \alpha = \alpha_\Psi^{\rm T}, \quad \gamma_\Psi \ge \alpha_\Psi (1+\gamma_\Psi^{\rm T})^{-1} \alpha_\Psi^*.
\ee
Moreover, $\Psi$ is a quasi-free if and only if
\be 
\label{eq:dm-quasi-free}
\gamma \alpha = \alpha \gamma^{\rm T}, \quad \alpha_\Psi \alpha_\Psi^* = \gamma_\Psi (1+\gamma_\Psi).
\ee
\end{lemma}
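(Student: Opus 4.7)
The plan is to handle the three parts of the lemma in order: the elementary identities, the Schur-type operator inequality, and the quasi-free characterization, whose ``if'' direction is the main obstacle.

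For the elementary part I would unpack~\eqref{eq:def-gamma-alpha} directly: positivity follows from $\langle f,\gamma_\Psi f\rangle = \|a(f)\Psi\|^2 \ge 0$; summing over an orthonormal basis $\{e_n\}$ of $\gH$ gives $\Tr\gamma_\Psi=\sum_n \|a(e_n)\Psi\|^2=\langle \Psi,\cN\Psi\rangle$, which is finite by assumption; and symmetry of the kernel of $\alpha_\Psi$ is immediate from the CCR $[a(f),a(g)]=0$. For the Schur-type inequality I would introduce the generalized one-body density matrix on $\gH\oplus\overline{\gH}$,
\[
\Gamma_\Psi := \begin{pmatrix}
\gamma_\Psi & \alpha_\Psi \\
\alpha_\Psi^* & 1+\gamma_\Psi^{\rm T}
\end{pmatrix},
\]
and verify by a direct expansion that $\langle f\oplus\bar g,\Gamma_\Psi(f\oplus\bar g)\rangle = \|a(f)\Psi + a^*(g)\Psi\|^2 \ge 0$; the $+1$ in the lower-right block comes precisely from the commutator $[a(g),a^*(g)]=\|g\|^2$. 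Since $1+\gamma_\Psi^{\rm T}\ge 1$ is invertible, the Schur complement yields $\gamma_\Psi \ge \alpha_\Psi(1+\gamma_\Psi^{\rm T})^{-1}\alpha_\Psi^*$.

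For the ``only if'' direction I would apply Wick's formula~\eqref{eq:Wick-2} to the two four-point functions $\langle\Psi,a^*(f_1)a^*(f_2)a(f_3)a(f_4)\Psi\rangle$ and $\langle\Psi,a(f_1)a(f_2)a^*(f_3)a^*(f_4)\Psi\rangle$, expand each of the three pair contractions in terms of matrix elements of $\gamma_\Psi$ and $\alpha_\Psi$, and compare the two Wick expansions after normal-ordering with the CCR. Choosing the test functions in an orthonormal basis and reading off the coefficients produces the two kernel identities $\alpha\alpha^*=\gamma(1+\gamma)$ and $\gamma\alpha=\alpha\gamma^{\rm T}$. Note in passing that these two identities together force equality in the Schur inequality above, since the intertwining $\alpha(1+\gamma^{\rm T})=(1+\gamma)\alpha$ allows one to compute $\alpha(1+\gamma^{\rm T})^{-1}\alpha^* = (1+\gamma)^{-1}\alpha\alpha^* = (1+\gamma)^{-1}\gamma(1+\gamma) = \gamma$.

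The main obstacle is the converse: producing a quasi-free Fock-space vector from only the algebraic constraints on $(\gamma,\alpha)$. The plan is to diagonalize via a Bogoliubov rotation. The Schur inequality together with $\Tr\gamma_\Psi<\infty$ implies that $\alpha_\Psi$ is Hilbert--Schmidt, so the Shale--Stinespring criterion guarantees the existence of a unitary Bogoliubov implementer $\mathbb{U}$ on $\cF(\gH)$ realizing the Bloch--Messiah decomposition, after which the transformed state $\Psi':=\mathbb{U}\Psi$ has density matrices $(\widetilde\gamma,\widetilde\alpha)$ with $\widetilde\alpha=0$ and $\widetilde\gamma$ diagonal. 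The algebraic identities $\gamma\alpha=\alpha\gamma^{\rm T}$ and $\alpha\alpha^*=\gamma(1+\gamma)$ are covariant under Bogoliubov conjugation (they encode exactly equality in the Schur inequality, a unitary invariant of $\gH\oplus\overline{\gH}$), so they transfer to the transformed pair, giving $\widetilde\gamma(1+\widetilde\gamma)=0$ and hence $\widetilde\gamma=0$. Then $\langle\Psi',\cN\Psi'\rangle=\Tr\widetilde\gamma=0$ forces $\Psi'$ to be a scalar multiple of the vacuum $\Omega$, so $\Psi$ is a phase times $\mathbb{U}^*\Omega$, which by construction is a pure quasi-free state. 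The delicate points are the Fock-space implementability of $\mathbb{U}$ and the preservation of the two algebraic constraints under the Bogoliubov flow; both are classical but need some bookkeeping to verify.
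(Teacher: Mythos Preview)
Your argument for the elementary identities and for the Schur-complement inequality $\gamma_\Psi \ge \alpha_\Psi(1+\gamma_\Psi^{\rm T})^{-1}\alpha_\Psi^*$ is exactly the paper's: it too introduces the generalized one-body density matrix $\Gamma$ on $\gH\oplus\overline\gH$, obtains $\Gamma\ge 0$ from $\langle\Psi,(a(f)+a^*(g))^*(a(f)+a^*(g))\Psi\rangle\ge 0$, and then invokes the Schur complement.

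Where you diverge is in the quasi-free characterization. The paper does not compute four-point Wick contractions nor construct a Bogoliubov rotation; it simply quotes the well-known fact (from Solovej's lecture notes) that a state is pure quasi-free if and only if $\Gamma S\Gamma=-\Gamma$ with $S=\begin{pmatrix}1&0\\0&-1\end{pmatrix}$, and then checks by a two-line block-matrix computation that this single matrix identity is equivalent to the pair $\gamma\alpha=\alpha\gamma^{\rm T}$, $\alpha\alpha^*=\gamma(1+\gamma)$. Your route is correct and more self-contained: the ``only if'' via Wick contractions of $a^\#a^\#a^\#a^\#$ yields the two identities directly, and your ``if'' is precisely the argument underlying the cited theorem---symplectically diagonalize $\Gamma$ (Williamson), implement on Fock space (Shale, using that $\alpha$ is Hilbert--Schmidt from $\Tr\gamma<\infty$), observe that the purity constraint $\Gamma S\Gamma=-\Gamma$ is invariant under $\Gamma\mapsto V\Gamma V^*$ because $V^*SV=S$, and conclude that the diagonalized state has $\widetilde\gamma=0$, hence is the vacuum. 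What you gain is an explicit proof instead of a citation; what the paper gains is brevity. Your caveats about implementability and covariance are appropriate, and the bookkeeping is indeed standard.
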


\begin{proof} The first three properties of \eqref{eq:gamma-alpha-ineq} are obvious. The last inequality of~\eqref{eq:gamma-alpha-ineq} is often formulated as  
$$
\Gamma:= \left( {\begin{array}{*{20}{c}}
   \gamma_\Psi & \alpha_\Psi \\ 
  \alpha_\Psi^{*} & 1+ \gamma_\Psi^{\rm T} 
\end{array}} \right) \ge 0
$$
on $\gH \oplus \overline{\gH}$, which can be seen immediately from the inequality
$$ \langle \Psi, (a(f)+a^*(g))^* (a(f)+a^*(g)) \Psi \rangle \ge 0 $$
for all $f,g\in \gH$ and the CCR \eqref{eq:ccr}. The equivalence between $\Gamma\ge 0$ and the last inequality in \eqref{eq:gamma-alpha-ineq} is straightforward, see e.g.  \cite[Lemma 1.1, p. 93]{Nam-thesis} for a proof. It is also well-known, see e.g. \cite[Theorem 10.4]{Solovej-ESI-2014}, that $\Psi$ is a quasi-free if and only if 
$$
\Gamma \left( {\begin{array}{*{20}{c}}
   1 & 0 \\ 
  0 & -1 
\end{array}} \right) \Gamma = - \Gamma.
$$
A direct calculation shows that the latter equality is equivalent to \eqref{eq:dm-quasi-free}. 
\end{proof}
\begin{remark} In fact, the first condition $\gamma \alpha = \alpha \gamma^{\rm T}$ of \eqref{eq:dm-quasi-free} is a consequence of the second condition $\alpha_\Psi \alpha_\Psi^* = \gamma_\Psi (1+\gamma_\Psi)$, see \cite[Proposition 4.5]{BacBreKonMen-14}.
\end{remark}

Now we are able to give

\begin{proof}[Proof of Proposition  \ref{lem:evol-quad-abstract}]
 
The proof is divided into several steps.
\smallskip

\noindent {\bf Step 1} (Existence and uniqueness of solution to \eqref{eq:eq-Phit-abstract}). Note that from the  estimate \eqref{eq:Kcr*-Kcr} and the operator monotonicity of the square root, we get
$$
\pm \left( \iint \Big(k(t,x,y)a^*_xa^*_y + {\rm h.c.}\Big) \mathrm{d} x \mathrm{d} y \right) \le \sqrt{2} \| k(t,\cdot,\cdot) \|_{L^2} (\cN+1).
$$
Recall that we have denoted $X+{\rm h.c.}=X+X^*$ for short. Therefore, from the conditions on $h(t)$ and $k(t)$ we get
\begin{align*}
\pm (\bH(t) -\dGamma(h_1)) &= \pm \left( \dGamma(h_2(t)) +  \iint \Big(k(t,x,y)a^*_xa^*_y + {\rm h.c.}\Big) \mathrm{d} x \mathrm{d} y  \right) \\
&\le C_2 (\cN +1),\\
\pm \partial_t \bH(t) &= \pm \left( \dGamma( \partial_t h_2(t)) +  \iint \Big(\partial_t k(t,x,y)a^*_xa^*_y + {\rm h.c.}\Big) \mathrm{d} x \mathrm{d} y \right)\\
&\le C_2 (\cN +1),\\
\pm i [\bH(t),\N ] &= \pm \frac{i}{2}\iint \Big[k(t,x,y)a^*_x a^*_y + {\rm h.c.},\cN \Big] \mathrm{d} x \, \mathrm{d} y \nn\\
&=  \pm \iint \Big( - i k(t,x,y) a^*_x a^*_y  + {\rm h.c.} \Big) \mathrm{d} x \, \mathrm{d} y\\
&\le C_2(\cN + 1)
\end{align*}
with 
\be \label{eq:def-C2}
C_2:=\sup_{t\in [0,1]} \left( \|h_2(t)\| + \sqrt{2}\|k(t, \cdot, \cdot)\|_{L^2} \right) <\infty. 
\ee
Thus by \cite[Theorem 8]{LewNamSch-14}, we know that for every $\Phi(0)$ in the quadratic form domain of $\dGamma(h_1+1)$, the equation \eqref{eq:eq-Phit-abstract} has a unique solution $\Phi(t)$. 

Note that for every time-independent operator $\mathcal{O}$ on Fock space, 
\begin{align} \label{eq:diff-observable}
i \partial_t \langle \Phi(t), \mathcal{O} \Phi(t) \rangle &= - \langle i \partial_t \Phi(t), \mathcal{O} \Phi(t) \rangle + \langle  \Phi(t), \mathcal{O} i \partial_t \Phi(t)  \\
&= \langle \Phi(t), [\mathcal{O}, \bH(t)] \Phi(t) \rangle.\nn
\end{align}
In particular, choosing $\mathcal{O}=1$ we get $\|\Phi(t)\|=\|\Phi(0)\|$. Therefore, the  propagator $\mathscr{U}(t): \cF(\gH)\to \cF(\gH)$ defined by $\mathscr{U}(t) \Phi(0)=\Phi(t)$ is a (partial) unitary, and hence it can be extended from the quadratic form domain of $\dGamma(h_1+1)$ to the whole Fock space. In the following, we are interested in the case when $\langle \Phi(0), \cN \Phi(0) \rangle<\infty$. In this case, by \eqref{eq:diff-observable} we have 
$$
\frac{d}{dt} \langle \Phi(t), (\cN+1) \Phi(t) \rangle = \langle \Phi(t), i[\bH(t),\cN ] \Phi(t)\rangle \le C_2 \langle \Phi(t), (\cN+1) \Phi(t) \rangle, 
$$
and by Gr\"onwall's inequality, 
\be \label{eq:Phit-cN-weak-bound}
\langle \Phi(t), (\cN+1) \Phi(t) \rangle \le e^{C_2 t} \langle \Phi(0), (\cN+1) \Phi(0) \rangle.
\ee

Thus, in summary, for every $\Phi(0)$ with $\langle \Phi(0), \cN \Phi(0) \rangle <\infty$, the equation~\eqref{eq:eq-Phit-abstract} has a unique solution $\Phi(t)$ satisfying $\|\Phi(t)\|=\|\Phi(0)\|$ and \eqref{eq:Phit-cN-weak-bound}. 
\medskip

\noindent {\bf Step 2} (Derivation of linear equations \eqref{eq:linear-eq-abstract}). We will use \eqref{eq:diff-observable} to compute the time-derivatives of the kernels
$$ 
\gamma_{\Phi(t)}(x,y)=\langle \Phi(t), a_y^* a_x \Phi(t) \rangle, \quad \alpha_{\Phi(t)}(x,y)=\langle \Phi(t), a_x a_y \Phi(t) \rangle.
$$

Let us use \eqref{eq:diff-observable} with $\mathcal{O}=a_{y'}^* a_{x'}$. From the CCR \eqref{eq:ccr-x} it follows that
\begin{align*}
[a_{y'}^* a_{x'}, \bH(t)] &= \iint \bigg[a_{y'}^* a_{x'}, h(t,x,y)a^*_x a_y\\
& \qquad \quad\ + \frac{1}{2} k(t,x,y)a^*_x a^*_y + \frac{1}{2} k^*(t,x,y) a_x a_y \bigg] \mathrm{d} x  \mathrm{d} y \\
& = \iint h(t,x,y) \Big(\delta (x'-x) a^*_{y'}a_y - \delta (y'-y)a^*_x a_{x'}) \Big) \mathrm{d} x  \mathrm{d} y \\
&\quad + \frac{1}{2} \iint k(t,x,y) \Big(  \delta(x'-x) a^*_{y'}a^*_y + \delta (x'-y)a^*_{y'}a^*_x \Big) \mathrm{d} x  \mathrm{d} y \\
&\quad - \frac{1}{2} \iint k^*(t,x,y) \Big(\delta(y'-y) a_x a_{x'} + \delta (y'-x) a_y a_{x'} \Big) \mathrm{d} x  \mathrm{d} y.
\end{align*}
Therefore, 
\begin{align*}
&\quad\ i\partial_t \gamma_{\Phi(t)}(x',y') = i\partial_t \langle \Phi(t), a_{y'}^* a_x \Phi(t) \rangle = \langle \Phi(t), [a_{y'}^* a_x , \bH(t)] \Phi(t) \rangle \\
&= \iint h(t,x,y) \Big(\delta (x'-x) \gamma_{\Phi(t)}(, y,y') - \delta (y'-y) \gamma_{\Phi(t)}(x',x) \Big) \mathrm{d} x  \mathrm{d} y \\
&\quad + \frac{1}{2} \iint k(t,x,y) \Big( \delta(x'-x) \alpha_{\Phi(t)}^*(y,y') + \delta (x'-y) \alpha_{\Phi(t)}^*(y',x) \Big) \mathrm{d} x  \mathrm{d} y \\
&\quad - \frac{1}{2} \iint k^*(t,x,y) \Big( \delta(y'-y) \alpha_{\Phi(t)}(,x,x') + \delta (y'-x) \alpha_{\Phi(t)}(,y,x') \Big) \mathrm{d} x  \mathrm{d} y \\
& = \Big( h(t) \gamma_{\Phi(t)} - \gamma_{\Phi(t)} h(t) + k(t)\alpha^*_{\Phi(t)} - \alpha_{\Phi(t)} k^*(t) \Big)(x',y'). 
\end{align*}
Here we have used $k(t,x,y)=k(t,y,x)$ and $\alpha_{\Phi(t)}(x,y)=\alpha_{\Phi(t)}(y,x)$. Similarly, using \eqref{eq:diff-observable} with $\mathcal{O}= a_{x'} a_{y'}$ and the identity

\begin{align*}
[a_{x'} a_{y'}, \bH(t)] &= \iint \Big[a_{x'} a_{y'}, h(t,x,y)a^*_x a_y \\
& \quad\qquad\ + \frac{1}{2} k(t,x,y)a^*_x a^*_y + \frac{1}{2} k^*(t,x,y) a_x a_y \Big] \mathrm{d} x  \mathrm{d} y \\
& =  \iint h(t,x,y) \Big( \delta (y'-x) a_{x'} a_{y} + \delta (x'-x)a_{y} a_{y'} \Big) \mathrm{d} x  \mathrm{d} y \\
& \quad + \frac{1}{2} \iint k(t,x,y) \Big( \delta(y'-x)\delta(x'-y) + \delta(y'-x) a^*_y a_{x'} \\
& \qquad \qquad\qquad\qquad\quad + \delta(x'-x)\delta(y'-y) + \delta(x'-x)  a^*_y a_{y'} \\
& \qquad \qquad\qquad\qquad\quad  + \delta(y'-y) a^*_{x} a_{x'} + \delta(x'-y) a^*_{x} a_{y'} \Big) \mathrm{d} x  \mathrm{d} y,
\end{align*}
we find that
\begin{align*}
&\quad\ i\partial_t \alpha_{\Phi(t)}(x',y') = i\partial_t \langle \Phi(t), a_{x'}^* a^*_{y'} \Phi(t) \rangle = \langle \Phi(t), [a_{x'}^* a^*_{y'} , \bH(t)] \Phi(t) \rangle \\
&=  \iint h(t,x,y) \Big( \delta (y'-x) \alpha_{\Phi(t)}(x',y)  + \delta (x'-x)\alpha_{\Phi(t)}(t,y,y')  \Big) \mathrm{d} x  \mathrm{d} y \\
& \quad + \frac{1}{2} \iint k(t,x,y) \Big(  \delta(y'-x)\delta(x'-y) + \delta(y'-x) \gamma_{\Phi(t)}(x',y)   \\
& \qquad \qquad\qquad \qquad\quad + \delta(x'-x)\delta(y'-y) + \delta(x'-x)  \gamma_{\Phi(t)}(y',y)    \\
& \qquad \qquad\qquad \qquad\quad + \delta(y'-y) \gamma_{\Phi(t)}(x',x) + \delta(x'-y) \gamma_{\Phi(t)}(y',x)  \Big) \mathrm{d} x  \mathrm{d} y, \\
& =  \Big(\alpha_{\Phi(t)} h^{\rm T}(t) + h(t) \alpha_{\Phi(t)} + k(t) + \gamma_{\Phi(t)} k(t) + k(t) \gamma^{\rm T}_{\Phi(t)} \Big)(x',y').
\end{align*}
Thus $(\gamma_{\Phi(t)},\alpha_{\Phi(t)})$ satisfies the couple of linear equations \eqref{eq:linear-eq-abstract}:
$$
\begin{cases}
i\partial_t \gamma = h \gamma - \gamma h + k\alpha^* - \alpha k^*, \\
i\partial_t \alpha =  h \alpha + \alpha h^{\rm T} + k  + k \gamma^{\rm T} + \gamma k,\\
\end{cases}
$$
with initial data $\gamma(0) = \gamma_{\Phi(0)}$, $\alpha(0)=\alpha_{\Phi(0)}$.
\medskip

\noindent {\bf Step 3} (Uniqueness of solution to \eqref{eq:linear-eq-abstract}). By the derivation in the previous step, we have proved that $(\gamma_{\Phi(t)},\alpha_{\Phi(t)})$ is a solution to \eqref{eq:linear-eq-abstract}. Moreover, from~\eqref{eq:Phit-cN-weak-bound} and \eqref{eq:gamma-alpha-ineq} we can see that if $\langle \Phi(0), \cN \Phi(0) \rangle<\infty$, then $\Tr(\gamma_{\Phi(t)})$ and $\Tr(\alpha_{\Phi(t)} \alpha^*_{\Phi(t)})$ are bounded uniformly in $t\in [0,1]$. 

Now we show that for every given initial condition $(\gamma(0),\alpha(0))$, the system \eqref{eq:linear-eq-abstract} has at most one solution satisfying \eqref{eq:assump-HS-alpha-gamma}:
$$ 
\gamma=\gamma^*,\quad \alpha=\alpha^{\rm T}, \quad \sup_{t\in [0,1]} \Big( \Tr(\alpha(t)\alpha^*(t))  + \Tr(\gamma^2(t)) \Big) < \infty.
$$
More precisely, we will show that if $(\gamma_j(t), \alpha_j(t))_{j=1,2}$ are two solutions to~\eqref{eq:linear-eq-abstract} with the same initial condition and they satisfy \eqref{eq:assump-HS-alpha-gamma}, then $$X(t):=\gamma_1(t)-\gamma_2(t) \quad {\rm and} \quad Y(t):=\alpha_1(t)-\alpha_2(t)$$
are $0$ for all $t\in [0,1]$. It is clear that $X,Y$ satisfy 
\be \label{eq:linear-eq-homo}
\begin{cases}
i\partial_t X = h X - X h + kY^* - Y k^*, \\
i\partial_t Y =  h Y + Y h^{\rm T} + k X^{\rm T} + X k, \\
X(0) = 0, \quad Y(0)=0
\end{cases}
\ee
and
\be \label{eq:assump-HS-X-Y} X=X^*, \quad Y=Y^{\rm T}, \quad \sup_{t\in [0,1]} \Big( \Tr(Y(t)Y^*(t))  + \Tr(X^2(t)) \Big) < \infty.
\ee
Note that the second equation of \eqref{eq:linear-eq-homo} is different from that of \eqref{eq:linear-eq-abstract} because the inhomogeneous term $k$ has been canceled. 

From the first equation of \eqref{eq:linear-eq-homo} we have
\begin{align} \label{eq:dt-XX}
i\partial_t X^2 &= (i\partial_t X) X + X (i\partial_t X) \\
& = (h X - X h + kY^* - Y k^*)X + X (h X - X h + kY^* - Y k^*) \nn \\
&= hX^2 - X^2 h + (kY^* - Y k^*)X + X(kY^* - Y k^*).\nn
\end{align}
We want to take the trace of \eqref{eq:dt-XX} and use the cancellation $\Tr(h X^2 - X^2 h)=0$  but it is a bit formal because $h X^2$ and $X^2 h$ might be not trace class. To make the argument rigorous, let us introduce the time-independent projection $\1^{\le \Lambda}=\1(h_1\le \Lambda)$ with $0<\Lambda<\infty$ and deduce from \eqref{eq:dt-XX} that
\begin{align} \label{eq:XX-proj}
i \partial_t (\1^{\le \Lambda} X^2 \1^{\le \Lambda} ) = \1^{\le \Lambda} (hX^2 - X^2 h  + {\rm Er}) \1^{\le \Lambda}
\end{align}
where
$$
{\rm Er}:=(kY^* - Y k^*)X + X(kY^* - Y k^*).
$$
Now we can take the trace of both sides of \eqref{eq:XX-proj} and then integrate over $t$. We will use the cyclicity of the trace
$$\Tr(\mathfrak{X}_1 \mathfrak{X}_2)=\Tr(\mathfrak{X}_2\mathfrak{X}_1)$$
with $\mathfrak{X}_1$ bounded and $\mathfrak{X}_2$ trace class. In particular, 
\begin{align*} &\quad \Tr(\1^{\le \Lambda} (h_1 X^2 -X^2 h_1) \1^{\le \Lambda})\\
&= \Tr(\1^{\le \Lambda} h_1 \cdot \1^{\le \Lambda}  X^2  \1^{\le \Lambda}) - \Tr(\1^{\le \Lambda}  X^2  \1^{\le \Lambda} \cdot \1^{\le \Lambda} h_1 )=0.
\end{align*}
Therefore, \eqref{eq:XX-proj} gives us
\begin{align} \label{eq:XX-proj-int} \Tr  (\1^{\le \Lambda} X^2(t) \1^{\le \Lambda} ) = - i \int_0^t \Tr \Big[ \1^{\le \Lambda} \Big( h_2 X^2- X^2 h_2 + {\rm Er} \Big) \1^{\le \Lambda} \Big](s) \mathrm{d} s.
\end{align}
Next, we pass $\Lambda\to +\infty$ and use the convergence
$$
\lim_{\Lambda\to \infty}\Tr(\1^{\le \Lambda} \mathfrak{X} \1^{\le \Lambda}) =\Tr(\mathfrak{X})$$
which holds for every trace class operator $\mathfrak{X}$. Note that 
\begin{align*} 
\Tr \Big| h_2 X^2 - X^2 h_2 \Big| \le  \|h_2\|\cdot \|X\|_{\rm HS}^2, \quad \Tr \Big| {\rm Er} \Big| \le 4 \|k\| \cdot \|X\|_{\rm HS} \cdot \|Y\|_{\rm HS}
\end{align*}
and $\|h_2(t)\|$, $\|k(t)\|$, $\|X(t)\|_{\rm HS}$, $\|Y(t)\|_{\rm HS}$ are bounded uniformly in $t\in [0,1]$. Therefore, we can use Lebesgue's Dominated Convergence Theorem and deduce from \eqref{eq:XX-proj-int} that
$$
\|X(t)\|_{\rm HS}^2 = - i \int_0^t \Big[ \Tr \Big( X^2 h_2 -  h_2 X^2 + {\rm Er} \Big) \Big] (s) \mathrm{d} s.
$$
Using again the cyclicity of the trace, $\Tr( h_2 X^2)=\Tr(X^2 h_2)$, we get  
\begin{align} \label{eq:XX-int}
\|X(t)\|_{\rm HS}^2 = - i \int_0^t {\rm Er}(s) \mathrm{d} s \le 4 \int_0^t \|k(s)\|\cdot \|X(s)\|_{\rm HS}\cdot \|Y(s)\|_{\rm HS} \,  \mathrm{d} s. 
\end{align}
Similarly, from the second equation of \eqref{eq:linear-eq-homo} we have
\begin{align*} 
i \partial_t (Y Y^*) &= (i\partial_t Y) Y^* - Y (i \partial_t Y)^* \nn\\
& = (h Y + Y h^{\rm T} +  k X^{\rm T} + X k)Y^* - Y (h Y + Y h^{\rm T} +  k X^{\rm T} + X k)^* \nn\\
& = h Y Y^* - Y Y^* h + (k X^{\rm T} + X k)Y^* -Y (k X^{\rm T} + X k)^*,
\end{align*}
and hence 
\begin{align} \label{eq:YY*-int}
\|Y(t)\|_{\rm HS}^2 &= - i \int_0^t \Big[ (k X^{\rm T} + X k)Y^* -Y (k X^{\rm T} + X k)^* \Big](s) \mathrm{d} s \\
&\le 4 \int_0^t \|k(s)\|\cdot \|X(s)\|_{\rm HS}\cdot \|Y(s)\|_{\rm HS} \,  \mathrm{d} s.\nn 
\end{align}
Summing \eqref{eq:XX-int} and \eqref{eq:YY*-int}  we find that
\begin{align*}
\|X(t)\|_{\rm HS}^2 + \|Y(t)\|_{\rm HS}^2 &\le 8\int_0^1 \|k(s)\|   \cdot \|X(s)\|_{\rm HS} \cdot \|Y(s)\|_{\rm HS} \, \mathrm{d} s \\
& \le 4 \left( \sup_{r\in [0,1]} \|k(r)\| \right)  \int_0^t (\|X(s)\|_{\rm HS}^2 + \|Y(s)\|_{\rm HS}^2)
\end{align*}
for all $t\in [0,1]$. Consequently, $X(t)\equiv 0$ and $Y(t)\equiv 0$ by Gr\"onwall's inequality. This finishes the proof of the uniqueness.

Thus $(\gamma(t),\alpha(t))=(\gamma_{\Phi(t)}, \alpha_{\Phi(t)})$ is the unique solution to the system \eqref{eq:linear-eq-abstract} under conditions \eqref{eq:assump-HS-alpha-gamma}. 
\medskip

\noindent {\bf Step 4} (Improved bound on $\|\alpha\|_{\rm HS}$) Recall that from \eqref{eq:Phit-cN-weak-bound} and \eqref{eq:gamma-alpha-ineq} we already have upper bounds on $\Tr(\gamma(t))$ and $\|\alpha\|_{\rm HS}$. However, the constant $C_2$ defined in \eqref{eq:def-C2} depends on $\|k(t)\|_{\rm HS}=\|k(t,\cdot, \cdot)\|_{L^2}$ which is large in our application. In the following we will derive another bound on $\|\alpha\|_{\rm HS}$ which depends on the operator norm $\|k(t)\|$ instead of the Hilbert-Schmidt norm.

Inspired by \cite[Proof of Theorem 4.1]{GriMac-13}, we will decompose $\alpha(t)=Y_1(t)+Y_2(t)$ where $Y_1(t),Y_2(t):\overline{\gH} \to \gH$ satisfy
\be \label{eq:Y_1eq}
\begin{cases}
i\partial_t Y_1= h_1 Y_1 + Y_1 h_1^{\rm{T}}  + k, \\
Y_1(t=0)=0,
\end{cases}
\ee
and
\be \label{eq:Y_2Xeq}
\begin{cases}
i\partial_t \gamma = h \gamma - \gamma h + k(Y_1+Y_2)^* - (Y_1+Y_2) k^*,\\
i\partial_t Y_2= hY_2+Y_2 h^{\rm T} + h_2 Y_1 + Y_1 h_2^{\rm T} + k \gamma^{\rm T} + \gamma k,\\
Y_2(t=0)=\alpha(0), \quad \gamma(t=0)=\gamma(0).&
\end{cases}
\ee
Here \eqref{eq:Y_2Xeq} is derived from \eqref{eq:Y_1eq} and the system \eqref{eq:linear-eq-abstract}. 
\medskip

\noindent{\bf Estimation of $\|Y_1\|_{\rm HS}$.} Note that the kernel of the operator  
$h_1 Y_1: \overline{\gH} \to \gH$ is the two-body function
$$
\int h_1(x,z) Y_1(t,z,y) \mathrm{d} z = \Big[ (h_1\otimes 1) Y_1(t, \cdot, \cdot) \Big](x,y).
$$
The latter equality follows from a straightforward calculation:
\begin{align*}
&\quad\ \iint \overline{f(x)g(y)} \Big[(h_1\otimes 1) Y_1(t, \cdot, \cdot) \Big] (x,y) \mathrm{d} x \mathrm{d} y \\
&= \Big\langle f\otimes g, (h_1\otimes 1) Y_1 (t, \cdot, \cdot) \Big\rangle_{L^2} = \Big\langle (h_1\otimes 1) (f\otimes g),  Y_1 (t, \cdot, \cdot) \Big\rangle_{L^2} \\
&= \iiint \overline{h_1(x,z)f(z)g(y)}Y_1(t,x,y) \mathrm{d} x \mathrm{d} y \mathrm{d} z \\
 & = \iiint h_1(z,x)\overline{f(z)g(y)} Y_1(t,x,y) \mathrm{d} x \mathrm{d} y \mathrm{d} z \\
&=\iiint  h_1(x,z) \overline{f(x)g(y)} Y_1(t,z,y) \mathrm{d} x \mathrm{d} y \mathrm{d} z 
\end{align*}
for all $f,g\in \gH$. Here we have used the fact that $h_1$ is self-adjoint, which implies that $h_1(x,y)=\overline{h_1(y,x)}$. Similarly, the kernel of $Y_1 h_1^{\rm T}=(h_1 Y_1)^{\rm T}$ is $\Big[ (1\otimes h_1) Y_1(t, \cdot, \cdot) \Big](x,y)$. Therefore, the operator equation \eqref{eq:Y_1eq} can be rewritten as an equation of two-body functions:
\be \label{eq:Y_1eq-functions}
\begin{cases}
i\partial_t Y_1(t,x,y) = (\mathcal{L}  Y_1)(t,x,y) +  k(t,x,y), \\
Y_1(0,x,y)=0.
\end{cases}
\ee
where
$$ \cL :=  h_1\otimes 1 + 1\otimes h_1 >0 .$$
By Duhamel's formula and integration by parts, we can write
\begin{align*}
Y_1(t,x,y)&= \int_0^t e^{i(s-t) \cL } k(s,x,y) \mathrm{d} s \\
&= \int_0^t e^{i(s-t) \cL } k_2(s,x,y) \mathrm{d} s -i\int_0^t \partial_s (e^{i(s-t) \cL} \cL ^{-1}) k_1(s,x,y) \mathrm{d} s\\
& = - i \cL^{-1} k_1(t,x,y) + i e^{-it\cL} \cL^{-1} k_1(0,x,y) \\
&\quad + \int_0^t e^{i(s-t) \cL} \Big( k_2(s,x,y) + i\cL ^{-1} \partial_s k_1(s,x,y) \Big) \mathrm{d} s.
\end{align*}
Since $e^{-it\cL}$ is a unitary operator on $\gH^2$, by the triangle inequality we get
\begin{align} \label{eq:Y1-ineq-final}
 \|Y_1(t)\|_{\rm HS} &= \|Y_1(t, \cdot, \cdot ) \|_{L^2} \\
&\le  \|\cL^{-1} k_1(t,\cdot, \cdot)\|_{L^2} + \|\cL^{-1} k_1(0,\cdot, \cdot)\|_{L^2}\nn \\
&\quad +\int_0^t \Big( \|k_2(t,\cdot,\cdot)\|_{L^2}+\|\cL^{-1} \partial_s k_1(s,\cdot, \cdot)\|_{L^2} \Big) \mathrm{d} s. \nn
\end{align}

\noindent{\bf Estimation of $\|\gamma\|_{\rm HS}$ and $\|Y_2\|_{\rm HS}$ from \eqref{eq:Y_2Xeq}.} We use again the argument of deriving \eqref{eq:XX-int} in Step 3. From the first  equation of \eqref{eq:Y_2Xeq} we have
\begin{align*}
i \partial_t \gamma^2 &= (i\partial_t \gamma) \gamma + \gamma (i\partial_t \gamma) \\
&= \Big( h\gamma-\gamma h+k (Y_1+Y_2)^* -(Y_1+Y_2) k^* \Big)\gamma \\
&\quad + \gamma \Big( h\gamma-\gamma h+k (Y_1+Y_2) ^* -(Y_1+Y_2) k^* \Big) \\
&= h \gamma^2 - \gamma^2 h + k(Y_1+Y_2)^*\gamma - (Y_1+Y_2) k^*\gamma \\
&\quad + \gamma k (Y_1+Y_2)^* - \gamma ( Y_1+Y_2) k^*,
\end{align*}
and hence
\begin{align} \label{eq:gamma-gamma-int}
&\quad\ \|\gamma(t)\|_{\rm HS}^2 - \|\gamma(0)\|_{\rm HS}^2 \\
&= - i \int_0^t \Tr \Big( k(Y_1+Y_2)^*\gamma - (Y_1+Y_2) k^*\gamma \nn\\
& \qquad\qquad\qquad  + \gamma k (Y_1+Y_2)^* - \gamma ( Y_1+Y_2) k^* \Big) (s) \mathrm{d} s \nn\\
&\le 4 \int_0^t \Big( \|k\| \cdot \|\gamma\|_{\rm HS} \|Y_1\|_{\rm HS} +  \|k\| \cdot \|\gamma\|_{\rm HS}  \|Y_2\|_{\rm HS} \Big)(s) \mathrm{d} s. \nn
\end{align}
From the second equation of \eqref{eq:Y_2Xeq} we have
\begin{align*} 
i \partial_t (Y_2Y_2^*) &= (i\partial_t Y_2) Y_2^* - Y_2 (i\partial_t Y_2)^*  \nn \\
& = (hY_2+Y_2 h^{\rm T} + h_2 Y_1 + Y_1 h_2^{\rm T} + k \gamma^{\rm T} + \gamma k)Y_2^* \\
& \quad - Y_2 (hY_2+Y_2 h^{\rm T} + h_2 Y_1 + Y_1 h_2^{\rm T} + k \gamma^{\rm T} + \gamma k)^* \\
& = hY_2Y_2^* - Y_2Y_2^* h + \left(h_2 Y_1 + Y_1 h_2^{\rm T} + k \gamma^{\rm T} + \gamma k \right) Y_2^*  \\
& \quad - Y_2 \left( h_2 Y_1 + Y_1 h_2^{\rm T} + k \gamma^{\rm T} + \gamma k \right)^*, 
\end{align*}
and hence 
\begin{align} \label{eq:Y2Y2*-int}
&\quad\ \|Y_2(t)\|_{\rm HS}^2 - \|\alpha(0)\|_{\rm HS}^2\\
&= - i \int_0^t  \Tr \Big( \left(h_2 Y_1 + Y_1 h_2^{\rm T} + k \gamma^{\rm T} + \gamma k \right) Y_2^* \nn\\
&\qquad \qquad\qquad+ Y_2 \left( h_2 Y_1 + Y_1 h_2^{\rm T} + k \gamma^{\rm T} + \gamma k \right)^* \Big) (s) \mathrm{d} s \nn\\
&\le 4 \int_0^t \Big( \|h_2\| \cdot \| Y_1\|_{\rm HS}  \|Y_2\|_{\rm HS} + \|k\|\cdot \|\gamma\|_{\rm HS}  \|Y_2\|_{\rm HS} \Big) (s) \mathrm{d} s. \nn
\end{align}
{\bf Conclusion.} Summing \eqref{eq:Y1-ineq-final}, \eqref{eq:gamma-gamma-int} and \eqref{eq:Y2Y2*-int}, we get
\begin{align} \label{eq:Y*Y-ineq-final}
&\quad\ \|Y_1(t)\|_{\rm HS}^2 + \|Y_2(t)\|_{\rm HS}^2 + \|\gamma(t)\|_{\rm HS}^2 \\
&\le \frac{1}{2}\Theta(t) + 4 \int_0^t \Big( \|k\| \cdot \|\gamma\|_{\rm HS} \|Y_1\|_{\rm HS} \nn\\
&\hspace{6em} + 2\|k\| \cdot \|\gamma\|_{\rm HS}  \|Y_2\|_{\rm HS} + \|h_2\| \cdot \| Y_1\|_{\rm HS}  \|Y_2\|_{\rm HS}\Big)(s) \mathrm{d} s \nn\\
&\le \int_0^t 6\Big(\|k(s)\|+\|h_2(s)\| \Big) \Big( \|Y_1(s)\|_{\rm HS}^2 + \|Y_2(t)\|_{\rm HS}^2 +   \|\gamma(t)\|_{\rm HS}^2 ) \mathrm{d} s\nn
\end{align}
where
\begin{align*} 
\Theta(t) &:= 2\|\gamma(0)\|_{\rm HS}^2 + 2\|\alpha(0)\|_{\rm HS}^2 \\
&\quad+ 2\Bigg( \|\cL^{-1} k_1(t,\cdot, \cdot)\|_{L^2} + \|\cL^{-1} k_1(0,\cdot, \cdot)\|_{L^2}  \\[-1ex]
&\qquad\quad+\int_0^t \big( \|k_2(s,\cdot,\cdot)\|_{L^2}+ \|\cL^{-1} \partial_s k_1(s,\cdot, \cdot)\|_{L^2} \big) \mathrm{d} s \Bigg)^2.
\end{align*}
Note that if $f,g,\xi:\R_+\to \R_+$ satisfy 
$$ f(t) \le g(t) + \int_0^t \xi(s) f(s) \mathrm{d} s$$
for all $t$, then we have the Gr\"onwall-type inequality
\begin{align*} f(t) \le g(t) +  \int_0^t \exp\left( \int_s^t \xi(r) \mathrm{d} r \right) \xi(s) g(s) \mathrm{d} s.
\end{align*}
Therefore, we can deduce from \eqref{eq:Y*Y-ineq-final} that
\begin{align} \label{eq:alpha-HS-gamma-final}
\|\alpha(t)\|_{\rm HS}^2+ \|\gamma(t)\|_{\rm HS}^2 &\le 2\Big(\|Y_1(t)\|_{\rm HS}^2 + \|Y_2(t)\|_{\rm HS}^2) + \|\gamma(t)\|_{\rm HS}^2 \Big)\\
&\le \Theta(t) + \int_0^t \exp\left( \int_s^t \xi(r) \mathrm{d} r \right) \xi(s) \Theta (s) \mathrm{d} s.\nn
\end{align}
where 
$$ 
\xi(t):=6\Big(\|k(t)\|+\|h_2(t)\| \Big).
$$

\noindent {\bf Step 5} (Quasi-free states). In this final step we prove that if $\Phi(0)$ is a quasi-free state, then $\Phi(t)$ is a quasi-free for all $t\in [0,1]$. By Lemma \ref{lem:dm}, it suffices to show that the density matrices $(\gamma(t),\alpha(t))$ of $\Phi(t)$ satisfy 
\be \label{eq:dm-qf}
\gamma \alpha = \alpha \gamma^{\rm T}, \quad \alpha \alpha ^* =\gamma(1+\gamma).
\ee
Indeed, using the equations \eqref{eq:linear-eq-abstract}, we can compute
\begin{align*}
&\quad\ i \partial_t ( \gamma + \gamma^2 - \alpha \alpha^*)= (i\partial_t \gamma) (1+\gamma) + \gamma(i \partial_t \gamma) - (i\partial_t \alpha) \alpha^* + \alpha (i\partial_t \alpha)^* \\
& = ( h \gamma - \gamma h + k\alpha^* - \alpha k^* ) (1+\gamma) + \gamma (h \gamma - \gamma h + k\alpha^* - \alpha k^*) \\
& \quad - ( h \alpha + \alpha h^{\rm T} + k  + k \gamma^{\rm T} + \gamma k) \alpha^* + \alpha ( h \alpha + \alpha h^{\rm T} + k  + k \gamma^{\rm T} + \gamma k)^* \\
& = h(\gamma+\gamma^2-\alpha\alpha^*) -  (\gamma+\gamma^2-\alpha\alpha^*) h + k(\alpha^*\gamma -\gamma^{\rm T} \alpha^*)- (\gamma \alpha - \alpha \gamma^{\rm T}) k^*
\end{align*}
and
\begin{align*}
&\quad\ i \partial_t (\gamma \alpha - \alpha \gamma^{\rm T}) = (i\partial_t \gamma)\alpha + \gamma (i\partial_t \alpha) - (i\partial_t \alpha) \gamma^{\rm T} - \alpha (i\partial_t \gamma)^{\rm T} \\
& = (h \gamma - \gamma h + k\alpha^* - \alpha k^*)\alpha + \gamma ( h \alpha + \alpha h^{\rm T} + k  + k \gamma^{\rm T} + \gamma k ) \\
&\quad - (h \alpha + \alpha h^{\rm T} + k  + k \gamma^{\rm T} + \gamma k ) \gamma^{\rm T} - \alpha (h \gamma - \gamma h + k\alpha^* - \alpha k^*)^{\rm T} \\
& = h(\gamma \alpha - \alpha \gamma^{\rm T} ) + (\gamma \alpha - \alpha \gamma^{\rm T}) h - k (\gamma + \gamma^2 -\alpha \alpha^*)^{\rm T} + (\gamma + \gamma^2 -\alpha \alpha^*) k.
\end{align*}
The latter two equations can be rewritten in the compact form
\be \label{eq:Y3-Y4}
\begin{cases}
i\partial_t Y_3 = h Y_3- Y_3 h + k Y_4^* - Y_4 k^*, \\
i\partial_t Y_4 =  h Y_4 + Y_4 h^{\rm T} - k Y_3^{\rm T} + Y_3 k,\\
Y_3(0)=0, \quad Y_4(0)=0,
\end{cases}
\ee
where 
$$
Y_3:=\gamma+ \gamma^2 -\alpha \alpha^*, \quad Y_4:= \gamma \alpha - \alpha \gamma^{\rm T}.
$$
Here the initial conditions $Y_3(0)=0$ and $Y_4(0)=0$ follow from Lemma \ref{lem:dm} and the assumption that $\Phi(0)$ is a quasi-free state. 

The system \eqref{eq:Y3-Y4} is similar to \eqref{eq:linear-eq-homo} we have considered in Step 3, and by the same argument in the previous step we can show that $Y_3(t)=0$ and $Y_4(t)=0$ for all $t\in [0,1]$. Thus $\Phi(t)$ is a quasi-free state for all $t\in [0,1]$. Finally, since $\alpha\alpha^*=\gamma(1+\gamma)$ we obtain $\langle \Phi(t), \cN \Phi(t) \rangle = \Tr(\gamma(t)) \le \|\alpha\|_{\rm HS}^2$. Therefore, \eqref{eq:bound-cN-abstract} follows immediately from \eqref{eq:alpha-HS-gamma-final} and the simple estimate 
\begin{align*}\|\alpha(0)\|_{\rm HS}^2 + \|\gamma(0)\|_{\rm HS}^2 &= \Tr(\gamma(0)+ 2\gamma^2(0)) \\
&\le 2(1+\Tr(\gamma(0)))^2 = 2(1+\langle \Phi(0), \cN \Phi(0) \rangle)^2.\\[-2em]
\end{align*}
\end{proof}

\subsection{Proof of Proposition \ref{lem:evo-quasi-free}}

Now we apply Proposition \ref{lem:evol-quad-abstract} to the Bogoliubov Hamiltonian $\bH(t)$ defined by~\eqref{eq:Bogoliubov-Hamiltonian-Ht}, which corresponds to 
$$
h(t)= -\Delta+|u(t)|^2\ast w_N -\mu_N(t) + K_1(t), \quad k(t)=K_2(t).
$$
Recall that 
\begin{itemize}
\item $K_1(t)=Q(t)\tilde K_1(t)Q(t):\gH \to \gH$ where $\tilde{K}_1(t)$ is the operator on $\gH$ with kernel $\tilde{K}_1(t,x,y)=u(t,x)w_N(x-y)\overline{u(t,y)}$;

\item $K_2(t)=Q(t)\widetilde K_2(t) \overline{Q(t)}:\overline{\gH}\to \gH$ where $\widetilde K_2(t):\overline{\gH}\to \gH$ is the operator with kernel $\widetilde{K}_2(t,x,y)=u(t,x)w(x-y)u(t,y)$. Putting differently,\linebreak $K_2(t,\cdot,\cdot)=Q(t)\otimes Q(t)\widetilde{K}_2(t,\cdot,\cdot) \in \gH^2$.
\end{itemize}
We will decompose $h(t)=h_1+h_2(t)$ and $K_2=k_1+k_2$ with 
$$ h_1:=-\Delta,\quad h_2(t):=|u(t)|^2\ast w_N -\mu_N(t) + K_1(t)$$
and
$$ k_1 := \widetilde K_2, \quad k_2:= K_2- \widetilde K_2.$$
The conditions in Proposition \ref{lem:evol-quad-abstract} are verified in the following

\begin{lemma}\label{lem:vc} The following bounds hold true for all $\beta\ge 0$, $N\in \mathbb{N}$ and $t\ge 0$: 
\begin{align}
&\|h_2(t)\| + \|K_2\|  \le C \|u(t)\|_{L^\infty(\R^3)}^2, \label{eq:norm-h2-k}\\
&\|\partial_t h_2(t) \|  + \|K_2(t,\cdot,\cdot)\|_{L^2}+ \|\partial_t K_2(t,\cdot,\cdot)\|_{L^2} \le CN^{3\beta}, \label{eq:HS-K2} \\
& \|K_2(t,\cdot,\cdot)-\widetilde K_2 (t,\cdot,\cdot)\|_{L^2} \le C \|u(t)\|_{L^\infty(\R^3)}^2,
\label{eq:K2-tideK2} \\
\label{eq:HS-L1-K2}
&\|(-\Delta_x-\Delta_y)^{-1} \widetilde K_2(t, \cdot, \cdot) \|_{L^2} \\
&\qquad\qquad + \|(-\Delta_x-\Delta_y)^{-1} \partial_t \widetilde K_2 (t, \cdot, \cdot) \|_{L^2} \le C \|u(t)\|_{L^\infty(\R^3)}^{2/3}.\nn 
\end{align}
Here the constant $C$ depends only on $\|u(0)\|_{H^2(\R^3)}$.
\end{lemma}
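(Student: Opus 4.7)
The lemma packages four largely independent estimates; the first three are routine applications of Young's inequality and Hilbert--Schmidt computations, while the real analysis sits in~\eqref{eq:HS-L1-K2}. I would handle them in order.

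For~\eqref{eq:norm-h2-k}, the plan is to bound the three summands of $h_2(t)=|u(t)|^2*w_N - \mu_N(t) + K_1(t)$ separately. Young's inequality, together with $\|w_N\|_{L^1}=\|w\|_{L^1}$ and $\|u(t)\|_{L^2}=1$, immediately yields $\||u|^2*w_N\|_{L^\infty}$ and $|\mu_N(t)|$ bounded by $C\|u\|_{L^\infty}^2$. For the operator $K_1$, the factorisation $(\widetilde K_1 f)(x) = u(x)\bigl(w_N*(\bar u f)\bigr)(x)$ combined with Young gives $\|\widetilde K_1\|\le \|w\|_{L^1}\|u\|_{L^\infty}^2$, and $\|K_1\|\le\|\widetilde K_1\|$ since $Q$ is a contraction; the bound on $\|K_2\|$ is identical. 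For~\eqref{eq:HS-K2}, a direct Hilbert--Schmidt computation gives $\|\widetilde K_2\|_{L^2(\R^6)}^2 \le \|u\|_{L^\infty}^2\|w_N\|_{L^2}^2 \le CN^{3\beta}$, and $\|K_2\|_{L^2}\le\|\widetilde K_2\|_{L^2}$; to handle $\partial_t$ I would substitute $\partial_t u = i\Delta u - i(w_N*|u|^2)u + i\mu_N u$ from Hartree, which yields $\|\partial_t u\|_{L^2}\le C$ via Lemma~\ref{lem:Hartree-evolution}, and repeat the same Hilbert--Schmidt / Young-type arguments. The continuity identity $\partial_t |u|^2 = -2\,\mathrm{Im}[(\Delta u)\bar u]$ then controls $\partial_t(|u|^2*w_N)$ and $\partial_t\mu_N$ through $L^2$--$L^2$ duality with $\|w_N\|_{L^2}\le CN^{3\beta/2}$. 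For~\eqref{eq:K2-tideK2}, I would expand
\[
K_2 - \widetilde K_2 = -(P\otimes 1 + 1\otimes P - P\otimes P)\widetilde K_2,\qquad P:=|u\rangle\langle u|,
\]
and evaluate each piece explicitly; for example $(P\otimes 1)\widetilde K_2(x,y) = u(x)u(y)(w_N*|u|^2)(y)$, with $L^2$ norm $\le \|w_N*|u|^2\|_{L^\infty}\|u\|_{L^2}^2\le C\|u\|_{L^\infty}^2$, and the other two pieces collapse to the same bound.

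The heart of the lemma is~\eqref{eq:HS-L1-K2}, where the unusual exponent $2/3$ is tuned exactly so that the dispersive decay $\|u(t)\|_{L^\infty}\le C(1+t)^{-3/2}$ becomes the integrable rate $(1+t)^{-1}$ that drives the logarithmic improvement~\eqref{eq:bound-N-Phit-improved}. My plan is to use the duality $\|\cL^{-1}f\|_{L^2(\R^6)} = \|f\|_{\dot H^{-2}(\R^6)}$ and write
\[
\langle \widetilde K_2, g\rangle = \int_{\R^3} w_N(z)\,\Phi(z)\,\mathrm{d} z,\qquad \Phi(z) := \int_{\R^3} u(x)\,u(x-z)\,\overline{g(x,x-z)}\,\mathrm{d} x,
\]
so that $|\langle \widetilde K_2, g\rangle|\le \|w\|_{L^1}\|\Phi\|_{L^\infty}$. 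The key analytic input is the trace--Sobolev chain $\dot H^2(\R^6)\hookrightarrow \dot H^{1/2}(\R^3)\hookrightarrow L^3(\R^3)$ applied on each shifted diagonal $\{(x,x-z):x\in\R^3\}$, which gives $\|g(\cdot,\cdot-z)\|_{L^3(\R^3)}\le C\|g\|_{\dot H^2(\R^6)}$ uniformly in $z$. Hölder with exponents $(3,3,3)$ then yields $|\Phi(z)|\le \|u\|_{L^3}^2\|g(\cdot,\cdot-z)\|_{L^3}$, and the elementary interpolation $\|u\|_{L^3}^3 \le \|u\|_{L^2}^2\|u\|_{L^\infty}$ — using $\|u\|_{L^2}=1$ — produces exactly the factor $\|u\|_{L^\infty}^{2/3}$. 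For $\partial_t\widetilde K_2$ I would repeat the argument but now with the Hölder triple $(L^2,L^6,L^3)$: the $L^2$ slot absorbs $\partial_t u$ through Hartree, and the interpolation $\|u\|_{L^6}\le \|u\|_{L^\infty}^{2/3}$ still delivers the required factor.

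The main obstacle is thus the trace step in~\eqref{eq:HS-L1-K2}: once one recognises that what is really needed is the $L^3$-trace of $\dot H^2(\R^6)$ onto a $3$-plane — a borderline but standard fact — the exponent $2/3$ is forced by the elementary interpolation $\|u\|_{L^3}\le \|u\|_{L^\infty}^{1/3}$, and the rest is routine bookkeeping of Young and Hölder bounds with~$\|u(t)\|_{L^\infty}\le C$ supplied by Lemma~\ref{lem:Hartree-evolution}.
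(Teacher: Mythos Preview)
Your treatment of \eqref{eq:norm-h2-k}--\eqref{eq:K2-tideK2} is the same as the paper's: Young's inequality and the factorisation $\widetilde K_j f = u\,(w_N*(\bar u f))$ for the operator norms, a direct Hilbert--Schmidt computation for $\|K_2\|_{L^2}$, and the Hartree equation for $\|\partial_t u\|_{L^2}\le C$. For \eqref{eq:K2-tideK2} the paper uses the operator form $\|(Q-1)\widetilde K_2\overline Q\|_{\rm HS}\le \|Q-1\|_{\rm HS}\|\widetilde K_2\|$ instead of your kernel expansion, but these are equivalent.

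For \eqref{eq:HS-L1-K2} your argument is correct and is the dual of the paper's. The paper works entirely on the Fourier side: it writes $\widehat f(p,q)=\int w_N(z)\,\widehat{u_z\partial_t u}(p+q)\,e^{-ip\cdot z}\,\mathrm{d} z$, applies Cauchy--Schwarz in $z$, integrates $(|p|^2+|q|^2)^{-2}$ over $q$ to obtain a factor $|p|^{-1}$, and then invokes Hardy--Littlewood--Sobolev $\int |p|^{-1}|\hat g(p)|^2\,\mathrm{d} p\le C\|g\|_{L^{3/2}}^2$ together with $\|u_z\partial_t u\|_{L^{3/2}}\le \|u\|_{L^6}\|\partial_t u\|_{L^2}\le \|u\|_{L^\infty}^{2/3}\|\partial_t u\|_{L^2}$. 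Your duality route encodes exactly the same harmonic analysis from the other side: the trace bound $\dot H^2(\R^6)\to \dot H^{1/2}(\R^3)$ on the shifted diagonal is the adjoint of the paper's $q$-integration, and the Sobolev embedding $\dot H^{1/2}(\R^3)\hookrightarrow L^3$ is the dual of their HLS step; your $(3,3,3)$ and $(2,6,3)$ H\"older splittings then recover the same exponent $2/3$. Your packaging is conceptually cleaner and makes transparent that only a standard trace theorem is being used; the paper's explicit Fourier computation avoids citing the homogeneous trace theorem as a black box and renders the $N$-independence visible at each line.
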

\begin{proof} The bound \eqref{eq:norm-h2-k} follows from \eqref{eq:bound-wN*uu}-\eqref{eq:bound-muN}-\eqref{eq:bound-K1} and \eqref{eq:bound-K2-norm}. Next, we consider \eqref{eq:HS-K2}. Recall that we have proved $\|K_2(t,\cdot,\cdot)\|_{L^2}\le CN^{3\beta}$ in \eqref{eq:K2-L2}. Moreover, 
\begin{align} \label{eq:decomp-dt-K2} \partial_t K_2(t,\cdot,\cdot) &= (\partial_t Q(t))\otimes Q(t) \widetilde K_2(t,\cdot,\cdot)\\
&\quad+ Q(t)\otimes (\partial_t Q(t)) \widetilde K_2(t,\cdot,\cdot) \nn\\
&\quad + Q(t)\otimes Q(t) (\partial_t \widetilde K_2(t,\cdot,\cdot)). \nn
\end{align}
Note that $\partial_t Q(t)= - |\partial_t u(t) \rangle \langle u(t)| -  | u(t) \rangle \langle \partial_t u(t)|$ is a rank-two operator and $\|\partial_t Q(t)\|_{\rm HS}\le C$ because 
\be \label{eq:bound-dt-u}
\|\partial_t u(t)\|_{L^2} = \|h(t) u(t)\|_{L^2} \le \| \Delta u(t)\|_{L^2}+ \|h_2(t)u(t)\|_{L^2}\le C.
\ee
Consequently,
\begin{align} \label{eq:dt-Q-K2}
&\quad\ \left\| (\partial_t Q(t))\otimes Q(t) \widetilde K_2(t,\cdot,\cdot) \right\|_{L^2} + \left\| Q(t)\otimes (\partial_t Q(t)) \widetilde K_2(t,\cdot,\cdot) \right\|_{L^2} \\
&= \left\| (\partial_t Q(t))\widetilde K_2(t) \overline{Q(t)} \right\|_{\rm HS} + \left\| Q(t) \widetilde K_2(t)( \overline{\partial_t Q(t)}) \right\|_{\rm HS} \nn\\
&\le 2\|\partial_t Q(t)\|_{\rm HS} \|\widetilde K_2(t)\| \cdot \|Q(t)\| \le \|\widetilde K_2(t)\| \le C \|u(t)\|^2_{L^\infty(\R^3)} \nn.
\end{align}
On the other hand,
\begin{align} \label{eq:dt-K2-HS}
&\quad\ \| Q(t)\otimes Q(t) (\partial_t \widetilde K_2(t,\cdot,\cdot)) \|_{L^2}^2 \le \| \partial_t \widetilde K_2(t,\cdot,\cdot)\|_{L^2}^2 \\
&\le 2 \iint |\partial_t u(t,x)|^2 |w_N(x-y)|^2 |u(t,y)|^2 \mathrm{d} x \mathrm{d} y \nn\\
& \le 2\|\partial_t u(t)\|_{L^2(\R^3)}^2 \|w_N\|_{L^2(\R^3)}^2 \|u(t)\|_{L^\infty(\R^3)}^2 \le C N^{3\beta}.\nn
\end{align}
By the triangle inequality we deduce from \eqref{eq:decomp-dt-K2}, \eqref{eq:dt-Q-K2} and \eqref{eq:dt-K2-HS} that $
\|\partial_t K_2(t,\cdot,\cdot)\|_{L^2} \le C N^{3\beta}$. 
Similarly, we also have $\|\partial_t K_1(t)\|_{\rm HS}\le CN^{3\beta}$. Combining the latter inequality with
\begin{align*}
\left|\partial_t (|u(t)|^2*w_N)(x) \right| &= \left| 2\Re \int \overline{(\partial_t u)(t,y)}u(t,y) w_N(x-y) \mathrm{d} y \right| \\
&\le 2 \|\partial_t u(t)\|_{L^2(\R^3)} \|u(t)\|_{L^2(\R^3)} \|w_N\|_{L^\infty(\R^3)} \le C N^{3\beta}
\end{align*} 
we find that $\|\partial h_2(t)\|\le CN^{3\beta}$. Thus \eqref{eq:HS-K2} holds true.

The bound \eqref{eq:K2-tideK2} can be proved using similarly as \eqref{eq:dt-Q-K2}. More precisely, because $1-Q=|u(t)\rangle \langle u(t)|$, we have 
\begin{align*}
&\quad\ \| K_2(t,\cdot,\cdot) - \widetilde K_2(t,\cdot,\cdot) \|_{L^2} = \| Q(t) \widetilde K_2(t) \overline{Q(t)}- \widetilde K_2(t)\|_{\rm HS} \\
&\le  \| (Q(t)-1) \widetilde K_2(t) \overline{Q(t)} \|_{\rm HS} + \| \widetilde K_2(t)  (\overline{Q(t)}-1)\|_{\rm HS} \\
& \le \|Q(t)-1\|_{\rm HS} \|\widetilde K_2(t)\| \cdot \|\overline{Q(t)}\| + \|\widetilde K_2(t)\| \cdot \|\overline{Q(t)}-1\|_{\rm HS} \\
&\le C \|\widetilde K_2(t)\| \le C \|u(t)\|_{L^\infty(\R^3)}^2.
\end{align*}

The bound \eqref{eq:HS-L1-K2} is essentially \cite[Lemma 4.3]{GriMac-13}, but there is a technical modification that we will  clarify below. It suffices to consider the most complicated term $f(t,x,y):=u(t,x) w_N(x-y) (\partial_t u)(t,y)$ which is derived from  $\partial_t \widetilde K_2(t,x,y)$. Following \cite{GriMac-13}, we compute the Fourier transform :
\begin{align*}
\widehat{f}(t,p,q) &= \iint u(t,x)w_N(x-y)(\partial_t u)(t,y) e^{-i (p\cdot x + q\cdot y)} \mathrm{d} x \mathrm{d} y \\
&=\iint u(t,y+z) w_N(z) (\partial_t u)(t,y) e^{-i (p\cdot (y+z) + q\cdot y)} \mathrm{d} z\mathrm{d} y \\
&=\int w_N(z) \widehat{(u_z \partial_t u)}(t,p+q) e^{-ip\cdot z}\mathrm{d} z
\end{align*}
where $u_z(t,y):=u(t,y+z).$ Using the Cauchy-Schwarz inequality 
\begin{align*}
\left|\widehat{f}(t,p,q) \right|^2 \le \|w\|_{L^1} \int |w_N(z)| \cdot| \widehat{(u_z \partial_t u)}(t,p+q)|^2 \mathrm{d} z
\end{align*}
and Plancherel's Theorem, we can estimate  
\begin{align*}
&\quad\ \|(-\Delta_x-\Delta_y)^{-1} f(t, \cdot, \cdot)\|^2_{L^2} \\
&= (2\pi)^6 \iint (|p|^2+|q|^2)^{-2}\left|\widehat{f}(t,p,q) \right|^2 \mathrm{d} p \mathrm{d} q \\
&\le C \iiint (|p|^2+|q|^2)^{-2} |w_N(z)| \cdot |\widehat{(u_z \partial_t u)}(t,p+q)|^2 \mathrm{d} p \mathrm{d} q \mathrm{d} z \\
& = C \iiint (|p-q|^2+|q|^2)^{-2} |w_N(z)| \cdot |\widehat{(u_z \partial_t u)}(t,p)|^2 \mathrm{d} p \mathrm{d} q \mathrm{d} z \\
&\le C \iiint (|p|^2+|q|^2)^{-2} |w_N(z)| \cdot |\widehat{(u_z \partial_t u)}(t,p)|^2 \mathrm{d} p \mathrm{d} q \mathrm{d} z \\
&=  C \iint |w_N(z)| \cdot |p|^{-1}  |\widehat{(u_z \partial_t u)}(t,p)|^2 \mathrm{d} p \mathrm{d} z.
\end{align*}
By the Hardy-Littlewood-Sobolev inequality we get
\begin{align*}
&\quad\ \int |p|^{-1}  |\widehat{(u_z \partial_t u)}(t,p)|^2 \mathrm{d} p  \le C \|u_z(t) \partial_t u(t)\|_{L^{3/2}}^2 \\
&\le \|u_z(t)\|_{L^\infty}^{4/3} \|u_z(t)\|_{L^2}^{2/3} \|\partial_t u\|_{L^2}^2 \le C \|u(t)\|_{L^\infty}^{4/3}
\end{align*}
and this gives the desired bound
$$\|(-\Delta_x-\Delta_y)^{-1} f(t, \cdot, \cdot)\|^2_{L^2} \le C \|u(t)\|_{L^\infty}^{4/3}.$$
{\bf Clarification.} In \cite[Lemma 4.3]{GriMac-13}, the authors used $\|u_z(t) \partial_t u(t)\|_{L^{3/2}} \le\linebreak \|u(t)\|_{L^3} \|\partial_t u\|_{L^3}$, which we have avoided because we want to consider the case $u(0)\in H^2(\R^3)$ for which $\partial_t u(t)$ may be not in $L^3$.    
\end{proof}

Now we are able to give

\begin{proof}[Proof of Proposition \ref{lem:evo-quasi-free}] Recall that we will apply Proposition \ref{lem:evol-quad-abstract} to $h(t)=h_1+h_2(t)$ and $k=K_2=k_1+k_2$ with 
$$ h_1:=-\Delta,\quad h_2(t):=|u(t)|^2\ast w_N -\mu_N(t) + K_1(t)$$
and
$$ k_1 := \widetilde K_2, \quad k_2:= K_2- \widetilde K_2.$$
\medskip

\noindent
{\bf General results with $u(0)\in H^2(\R^3)$}. Recall that if $u(0)\in H^2(\R^3)$, then we have the uniform bound 
$$\|u(t)\|_{L^\infty(\R^3)} \le C_{\rm Sobolev} \|u(t)\|_{H^2(\R^3)}\le C$$
for a constant $C$ depending only on $\|u(0)\|_{H^2(\R^3)}$. Therefore, all relevant conditions in Proposition \ref{lem:evol-quad-abstract} have been verified by Lemma \ref{lem:vc}. Thus by Proposition \ref{lem:evol-quad-abstract}, the Bogoliubov equation \eqref{eq:eq-Phit} has a unique global solution $\Phi(t)$ and the pair of density matrices $(\gamma(t),\alpha(t))=(\gamma_{\Phi(t)}, \alpha_{\Phi(t)})$ is the unique solution to \eqref{eq:linear-Bog-dm}. The constraint $\Phi(t)\in \cF_+(t)$ follows from the Hartree equation, see \cite[Proof of Theorem 7]{LewNamSch-14} for an explanation.

Now we consider the bound \eqref{eq:bound-alpha-HS}. Since 
$$ \xi(t):=6 (\|h_2(t)\| + \|K_2(t)\|) \le C$$
and
\begin{align*} \Theta(t) &:= 2\|\alpha(0)\|_{\rm HS}^2  + 2\|\gamma(0)\|_{\rm HS}^2  \\
&\quad + 2\bigg( \| (-\Delta_x -\Delta_y)^{-1} \widetilde K_2(t,\cdot, \cdot)\|_{L^2} + \| (-\Delta_x -\Delta_y)^{-1} \widetilde K_2 (0,\cdot, \cdot)\|_{L^2}  \\
&\quad +\int_0^t \big( \|(K_2- \widetilde K_2)(s,\cdot,\cdot)\|_{L^2}+ \| (-\Delta_x -\Delta_y)^{-1} \partial_s \widetilde K_2 (s,\cdot, \cdot)\|_{L^2} \big) \mathrm{d} s \bigg)^2\\
&\le 2\|\alpha(0)\|_{\rm HS}^2  + 2\|\gamma(0)\|_{\rm HS}^2 + C (1+t^2)
\end{align*}
we deduce from \eqref{eq:bound-alpha-HS} that
\begin{align*}
\|\alpha(t)\|_{\rm HS}^2+ \|\gamma(t)\|_{\rm HS}^2 &\le \Theta(t) + \int_0^t \exp\Big( \int_s^t \xi(r) \mathrm{d} r \Big) \xi(s) \Theta (s) \mathrm{d} s \\
&\le e^{Ct} ( 1 + \|\alpha(0)\|_{\rm HS}^2 + \|\gamma(0)\|_{\rm HS}^2 )
\end{align*}
for a constant $C$ depending only on $\|u(0)\|_{H^2(\R^3)}$. 

Moreover, by Proposition \ref{lem:evol-quad-abstract}, if $\Phi(0)$ is a quasi-free state, then $\Phi(t)$ is a quasi-free state for all $t\ge 0$, and from \eqref{eq:bound-cN-abstract} we obtain
$$
\langle \Phi(t), \cN \Phi(t) \rangle \le e^{Ct} \Big( 1 + \langle \Phi(0), \cN \Phi(0)\rangle \Big)^2. 
$$
\smallskip

\noindent
{\bf Improved bound with $u(0)$ smooth}. Now we consider the case when the initial Hartree state is smooth, namely $u(0)\in W^{\ell,1}(\R^3)$ with $\ell$ sufficiently large. Recall that in this case 
$$ \|u(t)\|_{L^\infty(\R^3)}\le \frac{C_1}{(1+t)^{3/2}}$$
for a constant $C_1$ depending only on $\|u(0)\|_{W^{\ell,1}(\R^3)}$. By Proposition \ref{lem:evol-quad-abstract}, we have 
$$ \xi(t)=6 (\|h_2(t)\| + \|K_2(t)\|) \le \frac{C_1}{(1+t)^3}$$
and
\begin{align*} \Theta(t) &= 2\|\alpha(0)\|_{\rm HS}^2  + 2\|\gamma(0)\|_{\rm HS}^2  \\
&\quad+ 2\bigg( \| (-\Delta_x -\Delta_y)^{-1} \widetilde K_2(t,\cdot, \cdot)\|_{L^2} + \| (-\Delta_x -\Delta_y)^{-1} \widetilde K_2 (0,\cdot, \cdot)\|_{L^2}  \\
&\quad+\int_0^t \big( \|(K_2- \widetilde K_2)(s,\cdot,\cdot)\|_{L^2}+ \| (-\Delta_x -\Delta_y)^{-1} \partial_s \widetilde K_2 (s,\cdot, \cdot)\|_{L^2} \big) \mathrm{d} s \bigg)^2,\\
&\le 2\|\alpha(0)\|_{\rm HS}^2  + 2\|\gamma(0)\|_{\rm HS}^2 + C_1 \left( \frac{1}{1+t} + \int_0^t \Big( \frac{1}{(1+t)^3} + \frac{1}{1+t} \Big) \mathrm{d} s \right)^2 \\
& \le 2\|\alpha(0)\|_{\rm HS}^2  + 2\|\gamma(0)\|_{\rm HS}^2  + C_1 \log^2 (1+t).
\end{align*}
Since $(1+t)^{-3}$ in integrable, the estimate \eqref{eq:bound-alpha-HS} in Proposition \ref{lem:evol-quad-abstract} gives us
\begin{align*}
\|\alpha(t)\|_{\rm HS}^2+ \|\gamma(t)\|_{\rm HS}^2 &\le \Theta(t) + \int_0^t \exp\Big( \int_s^t \xi(r) \mathrm{d} r \Big) \xi(s) \Theta (s) \mathrm{d} s \\
&\le C_1 ( \log^2(1+t) + \|\alpha(0)\|_{\rm HS}^2 + \|\gamma(0)\|_{\rm HS}^2 ).
\end{align*}
Moreover, if $\Phi(0)$ is a quasi-free state, then from \eqref{eq:bound-cN-abstract} we obtain
$$
\langle \Phi(t), \cN \Phi(t) \rangle \le C_1 \Big( \log(1+t) + 1 + \langle \Phi(0), \cN \Phi(0)\rangle \Big)^2
$$
for a constant $C_1$ depending only on $\|u(0)\|_{W^{\ell,1}(\R^3)}$.
\end{proof}

\section{Proof of Main Theorem} \label{sec:proof-main-thm}

Assuming Lemma \ref{lem:Nk-quasi-free} at the moment, we are ready to provide

\begin{proof}[Proof of Theorem \ref{thm:main}] We will compare $\Phi_N(t)=U_N(t)\Psi_N(t)$ with the Bogoliubov evolution $\Phi(t)$. Using the equations \eqref{eq:eq-PhiNt} and \eqref{eq:eq-Phit}, we can compute 
\begin{align} \label{eq:dt-PhiNt-Phit-norm}
&\quad\ \partial_t \| \Phi_{N}(t)-\Phi(t)\|^2 \\
&= 2 {\rm Im} \Big\langle i \partial_t \Phi_N(t), \Phi(t) \big\rangle - 2 {\rm Im} \Big\langle \Phi_N(t), i\partial_t \Phi(t) \Big\rangle \nn\\ 
&= 2 {\rm Im} \Big \langle (\1_+^{\le N} \bH(t)\1_+^{\le N} + R(t)) \Phi_N(t), \Phi(t) \Big\rangle -  2 {\rm Im}  \Big\langle \Phi_N (t),  \bH \Phi(t) \Big\rangle \nn \\
& = 2 {\rm Im} \Big\langle R(t) \Phi_N(t), \Phi(t) \Big\rangle - 2 {\rm Im} \Big\langle \Phi_N (t), \1_+^{\le N} \bH (1-\1_+^{\le N}) \Phi(t) \Big\rangle\nn
\end{align}
where $\1_+^{\le N}:=\1_{\cF_+^{\le N}(t)}$ and 
$$ R(t):= \1_+^{\le N} \Big[i \left( \partial_t U_N(t)\right) U_N^*(t) + U_N(t)H_N U_N^*(t) -  \bH(t) \Big] \1_+^{\le N}.$$
Using the Cauchy-Schwarz inequality and Proposition \ref{lem:Bog-app}, we can estimate
\begin{align*} 
&\quad\ {\rm Im} \langle R(t) \Phi_N(t), \Phi(t) \rangle \\
&= {\rm Im} \langle \Phi_N(t)-\Phi(t), R(t) \Phi(t) \rangle \nn\\
& \le \| R(t) \Phi(t)\| \cdot \|\Phi_N(t)- \Phi(t)\| \nn\\
&\le C N^{(3\beta-1)/2} \Big \langle \Phi(t), (\cN_+ +1)^4 \Phi(t) \Big\rangle^{1/2}\|\Phi_N(t)- \Phi(t)\|.
\end{align*}
Moreover, by the Cauchy-Schwarz inequality again and \eqref{eq:Kcr*-Kcr} we get
\begin{align*}
&\quad\ {\rm Im} \Big\langle \Phi_N (t), \1_+^{\le N} \bH (1-\1_+^{\le N}) \Phi(t) \Big\rangle\\
&\le \| \1_+^{\le N} \bH (1-\1_+^{\le N}) \Phi(t)  \|\\
&= \| \1_+^{\le N} \mathbb{K}_{\rm cr}^* (1-\1_+^{\le N}) \Phi(t)  \| \\
&\le \Big \langle \Phi(t), \mathbb{K}_{\rm cr}  \mathbb{K}_{\rm cr}^* (1-\1_+^{\le N}) \Phi(t)  \Big\rangle^{1/2} \\
&\le C N^{3\beta/2} \Big \langle \Phi(t), (\cN_+ + 1)^2 (1-\1_+^{\le N}) \Phi(t)  \Big\rangle^{1/2} \\
&\le CN^{3\beta/2-1} \Big \langle \Phi(t), (\cN_+ + 1)^4 \Phi(t)  \Big\rangle^{1/2} .
\end{align*}
Thus from \eqref{eq:dt-PhiNt-Phit-norm} it follows that 
\begin{align*}
\partial_t \| \Phi_{N}(t)-\Phi(t)\|^2 &\le CN^{(3\beta-1)/2} \Big\langle \Phi(t), (\cN_+ +1)^4 \Phi(t) \Big\rangle^{1/2} \\
&\quad \times \Big( \|\Phi_N(t)- \Phi(t)\| + N^{-1/2} \Big).
\end{align*}
Consequently, the function $f(t):= \| \Phi_{N}(t)-\Phi(t)\|^2+N^{-1}$ satisfies
$$
\partial_t f(t) \le CN^{(3\beta-1)/2} \Big\langle \Phi(t), (\cN_+ +1)^4 \Phi(t) \Big\rangle^{1/2} \sqrt{f(t)},
$$
and hence  
$$
\partial_t \sqrt{f(t)} \le CN^{(3\beta-1)/2} \Big\langle \Phi(t), (\cN_+ +1)^4 \Phi(t) \Big\rangle^{1/2}.
$$
Taking the integral over $t$, we obtain 
\begin{align}  \label{eq:dt-PhiNt-Phit-norm-ineq1}
&\quad\ \Big( \| \Phi_{N}(t)-\Phi(t)\|^2+ N^{-1} \Big)^{1/2} \\
& \le \Big(\| \Phi_{N}(0)-\Phi(0)\|^2+N^{-1}\Big)^{1/2}\nn  \\
&\quad + t CN^{(3\beta-1)/2} \sup_{s\in [0,t]}\Big\langle \Phi(s), (\cN_+ +1)^4 \Phi(s) \Big\rangle^{1/2}.\nn
\end{align} 
Now we make a further estimate for every term in \eqref{eq:dt-PhiNt-Phit-norm-ineq1}. First, since $U_N(t):\gH^N\to \cF_+^{\le N}(t)$ is a unitary operator, we have
\begin{align} \label{eq:dt-PhiNt-Phit-norm-ineq1a}
\| \Phi_{N}(t)-\Phi(t)\|^2 &= \| U_N(t) \Psi_N(t) - \1_+^{\le N} \Phi(t)\|^2 + \| (1- \1_+^{\le N}) \Phi(t) \|^2 \\
& \ge \|  \Psi_N(t) - U_N(t) \1_+^{\le N} \Phi(t)\|^2 .\nn
\end{align}
On the other hand, the choice $\Phi_N(0)= \1_+^{\le N} \Phi(0)$ implies that 
\begin{align} \label{eq:dt-PhiNt-Phit-norm-ineq1b}
\| \Phi_{N}(0)-\Phi(0)\|^2 = \Big\langle \Phi(0), (\1-\1_+^{\le N})\Phi(0) \Big\rangle \le \frac{1}{N}\Big\langle \Phi(0), \cN_+ \Phi(0) \Big\rangle.
\end{align}
Moreover, recall that $\Phi(t)$ is a quasi-free state for all $t$ due to Proposition \ref{lem:evo-quasi-free} and the assumption that $\Phi(0)$ is a quasi-free state. Therefore, by Lemma \ref{lem:Nk-quasi-free},
\begin{align} \label{eq:dt-PhiNt-Phit-norm-ineq1c}  \Big\langle \Phi(t), (\cN +1)^4 \Phi(t) \Big\rangle \le C \Big\langle \Phi(t), (\cN +1) \Phi(t) \Big\rangle^4.
\end{align}
Inserting \eqref{eq:dt-PhiNt-Phit-norm-ineq1a}-\eqref{eq:dt-PhiNt-Phit-norm-ineq1b}-\eqref{eq:dt-PhiNt-Phit-norm-ineq1c} into \eqref{eq:dt-PhiNt-Phit-norm-ineq1} we find that
\begin{align} \label{eq:semi-final-estimate}
&\quad\ \|  \Psi_N(t) - U_N(t) \1_+^{\le N} \Phi(t)\| \\
&\le \frac{1}{\sqrt{N}}\Big|\langle \Phi(0), (\cN +1) \Phi(0) \rangle \Big|^{1/2} \nn\\
&\quad + tCN^{(3\beta-1)/2} \sup_{s\in [0,t]} \Big\langle \Phi(s), (\cN +1) \Phi(s) \Big\rangle^2 .\nn
\end{align}
Finally, we derive \eqref{eq:PsiN-app-thm} from \eqref{eq:semi-final-estimate} and the the upper bounds on $\langle \Phi(t),\cN \Phi(t) \rangle$ in Proposition \ref{lem:evo-quasi-free}. To be precise, if we only know $u(0)\in H^2(\R^3)$, then from \eqref{eq:semi-final-estimate} and \eqref{eq:bound-N-Phit} we obtain
\begin{align}
&\|  \Psi_N(t) - U_N(t) \1_+^{\le N} \Phi(t)\| \le N^{(3\beta-1)/2} e^{Ct}  \Big( 1 + \langle \Phi(0), \cN \Phi(0) \rangle \Big)^4
\end{align}  
for a constant $C$ depending only on $\|u(0)\|_{H^2(\R^3)}$. If we know that $u(0)\in W^{\ell,1}(\R^3)$ with $\ell$ sufficiently large, then from \eqref{eq:semi-final-estimate} and \eqref{eq:bound-N-Phit-improved} we get the improved bound 
\begin{align}
&\quad\ \|  \Psi_N(t) - U_N(t) \1_+^{\le N} \Phi(t)\| \\
&\le N^{(3\beta-1)/2} C_1 (1+t) \Big( \log(1+t) + 1 + \langle \Phi(0), \cN \Phi(0)\rangle \Big)^4\nn
\end{align}  
for a constant $C_1$ depending only on $\|u(0)\|_{W^{\ell,1}(\R^3)}$. 
\end{proof}

For the completeness, let us provide

\begin{proof}[Proof of Lemma \ref{lem:Nk-quasi-free}] Since the density matrices $\gamma_\Psi,\alpha_\Psi$ of $\Psi$ satisfy the relations \eqref{eq:dm-quasi-free}, we can diagonalize them simultaneously. More precisely (see also \cite[Lemma 8]{NamNapSol-15}), we can find  an orthonormal basis  $\{u_n\}_{n=1}^\infty $ for $\gH$ and non-negative numbers $\{\lambda_n\}_{n=1}^\infty$ such that
$$ \gamma_\Psi = \sum_{n=1}^\infty \lambda_n |u_n\rangle \langle u_n|, \quad \alpha_\Psi u_n = \sqrt{\lambda_n+\lambda_n^2} (\overline{u_n}), \quad \forall n \in \mathbb{N}.$$
Let us denote $a_n=a(u_n)$ for short. By the definition of $\gamma_\Psi$ and $\alpha_\Psi$, we have 
\begin{align} \label{eq:am-an-lambdan}
\langle \Psi, a^*_m a_n \Psi \rangle = \delta_{m,n} \lambda_n, \quad  \langle \Psi, a_m a_n \Psi \rangle =  \delta_{m,n} \sqrt{\lambda_n+\lambda_n^2},\quad \forall n \in \mathbb{N}.
\end{align}
Moreover, 
$$\sum_n \lambda_n = \Tr(\gamma_\Phi)=\langle \Psi, \cN \Psi \rangle.$$ 
Now we compute
\begin{align*}
&\quad\ \langle \Psi,  \cN(\cN-1)(\cN-2)(\cN-3)(\cN-\ell+1) \Psi \rangle \\
& = \sum_{n_1,n_2,\dots, n_\ell \ge 1} \langle \Psi, a^*_{n_1} \cdots a^*_{n_\ell}  a_{n_1} \cdots a_{n_\ell} \Psi \rangle \\
& = \sum_{1\le s \le \ell} ~~~ \sum_{\substack{ 1\le n_1<n_2<\cdots<n_s \\ (m_1,\dots,m_s)\in P(s,\ell)}} \langle \Psi, (a^*_{n_1})^{m_1} (a_{n_1})^{m_1} \cdots (a^*_{n_s})^{m_s}  (a_{n_s})^{m_s} \Psi \rangle,
\end{align*}
where $P(s,\ell)$ is the set of partitions:
$$
P(s,\ell)= \{ (m_1,\dots,m_s)\subset \mathbb{N} ~|~ m_1+\cdots +m_s=\ell\}.
$$
For all $1\le n_1<n_2<\cdots<n_s$ and $m_1,\dots,m_s \ge 1$, by using Wick's Theorem and \eqref{eq:am-an-lambdan} we find that
\begin{align*}
& \langle \Psi, (a^*_{n_1})^{m_1} (a_{n_1})^{m_1} \cdots (a^*_{n_s})^{m_s}  (a_{n_s})^{m_s} \Psi \rangle = \prod_{j=1}^{s}\langle \Psi, (a^*_{n_j})^{m_j} (a_{n_j})^{m_j} \Psi \rangle 
\end{align*}
and
\begin{equation*}
\langle \Psi, (a^*_{n_j})^{m_j} (a_{n_j})^{m_j} \Psi \rangle \le |P(2m_j)| \lambda_{n_j} (1+\lambda_{n_j})^{m_j-1}. 
\end{equation*}
Here recall that
$$
P_{2n} = \{\sigma \in S(2n)~ |~ \sigma(2j-1)<\min\{\sigma(2j),\sigma(2j+1)\}~\text{for~all}~ j\}.
$$
is the set of paring and we have denoted by $|P(2n)|$ the number of elements of $P_{2n}$. Thus
\begin{align*}
& \quad \sum_{\substack{ 1\le n_1<n_2<\cdots<n_s \\ (m_1,\dots,m_s)\in P(s,\ell)}} \langle \Psi, (a^*_{n_1})^{m_1} (a_{n_1})^{m_1} \cdots (a^*_{n_s})^{m_s}  (a_{n_s})^{m_s} \Psi \rangle \\
& \le   \sum_{\substack{ 1\le n_1<n_2<\cdots<n_s \\ (m_1,\dots,m_s)\in P(s,\ell)}} \prod_{j=1}^{s} |P(2m_j)| \lambda_{n_j} (1+\lambda_{n_j})^{m_j-1} \\
& \le \sum_{\substack{ 1\le n_1<n_2<\cdots<n_s \\ (m_1,\dots,m_s)\in P(s,\ell)}} |P(2\ell)| (1+ \sup_p \lambda_{p})^{ \ell -s}  \prod_{j=1}^{s} \lambda_{n_j} 
\end{align*}
\begin{align*}
& \le |P(s,\ell)| \cdot |P(2\ell)| \cdot \left(1+ \sup_p \lambda_{p} \right)^{ \ell -s}   \sum_{ 1\le n_1<n_2<\cdots<n_s} \prod_{j=1}^{s}  \lambda_{n_j} \\
& \le |P(s,\ell)| \cdot |P(2\ell)| \cdot \left(1+ \sup_p \lambda_{p} \right)^{ \ell -s}  \left(\sum_p \lambda_p \right)^s \\
& \le |P(s,\ell)| \cdot |P(2\ell)| \cdot \left(1+ \sum_p \lambda_{p} \right)^{\ell} \\
& = |P(s,\ell)| \cdot |P(2\ell)| \cdot \left(1+ \langle \Psi,  \cN \Psi \rangle \right)^{\ell} 
\end{align*}
Here in the second estimate we have used $\prod_{j=1}^{s} |P(2m_j)| \le |P(2\ell)|$, which follows obviously from the definition of $P(2n)$ and the fact $\sum_{j=1}^s m_j=\ell$. Summing the latter estimate over $s=1,2,\dots,\ell$ we obtain
\begin{align*}
&\quad\ \langle \Psi,  \cN(\cN-1)(\cN-2)(\cN-3)(\cN-\ell+1) \Psi \rangle \\
& = \sum_{1\le s \le \ell} ~~~ \sum_{\substack{ 1\le n_1<n_2<\cdots<n_s \\ (m_1,\dots,m_s)\in P(s,\ell)}} \langle \Psi, (a^*_{n_1})^{m_1} (a_{n_1})^{m_1} \cdots (a^*_{n_s})^{m_s}  (a_{n_s})^{m_s} \Psi \rangle \\
& \le \left(\sum_{s=1}^\ell |P(s,\ell)| \right) \cdot |P(2\ell)| \cdot \left(1+ \langle \Psi,  \cN \Psi \rangle \right)^{\ell} .
\end{align*}
The desired result follows from a straightforward induction argument.
\end{proof}

\noindent{\bf Acknowledgments.} We thank Jan Derezi\'nski, Mathieu Lewin, Nicolas Rougerie and Robert Seiringer for interesting discussions and useful remarks. We gratefully acknowledge the financial supports by the REA grant agreement No. 291734 (PTN), as well as the support of the National Science Center (NCN) grant No. 2012/07/N/ST1/03185 and the Austrian Science Fund (FWF) project Nr. P 27533-N27 (MN).

\end{document}